	\newtheorem{corollary}{Corollary}[]
	\newtheorem{definition}{Definition}[]
	\newtheorem{lemma}{Lemma}[]
	\newtheorem{theorem}{Theorem}[]
	\newtheorem{observation}{Observation}[]
\newcommand{\trans}{^{\!\top}}
\patchcmd\algocf@Vline{\vrule}{\vrule \kern-0.4pt}{}{}
\patchcmd\algocf@Vsline{\vrule}{\vrule \kern-0.4pt}{}{}
\definecolor{darkgrey}{gray}{0.3}
\definecolor{commentcolor}{gray}{0.5}
\crefname{algocf}{Algorithm}{Algorithms}
\newcommand{\defeq}{\mathrel{:\mkern-0.25mu=}}
\newcommand{\cX}{\mathcal{X}}
\newcommand{\pure}{\Pi}
\newcommand{\cJ}{\mathcal{J}}
\newcommand{\cK}{\mathcal{K}}
\newcommand{\cC}{\mathcal{C}}
\newcommand{\cR}{\mathcal{R}}
\newcommand{\bbR}{\mathbb{R}}
\newcommand{\bbE}{\mathbb{E}}
\renewcommand{\next}[1]{\cC_{#1}}
\newcommand{\atOrAbove}[2]{#1\ \preceq\ #2}
\newcommand{\atOrBelow}[2]{#1\ \succeq\ #2}
\renewcommand{\vec}[1]{\bm{#1}}
\newcommand{\mat}[1]{\bm{#1}}
\DeclareMathOperator{\supp}{supp}
\DeclareMathOperator{\Span}{span}
\DeclareMathOperator{\img}{Im}
\DeclareMathOperator{\lin}{dir}
\DeclareMathOperator{\co}{co}
\DeclareMathOperator*{\argmax}{arg\,max}
\DeclareMathOperator*{\argmin}{arg\,min}
\newcommand{\pn}[2]{\|#1\|_{#2}}
\newcommand{\dn}[2]{\|#1\|_{\ast,#2}}
\newcommand{\dnlarge}[2]{\mleft\|#1\mright\|_{\ast,#2}}
\newcommand{\regu}{\varphi} 
\newcommand{\emptyseq}{\varnothing}
\newcommand{\substrategy}[1]{\xbar^{t}_{#1}} 
\newcommand{\xbar}{{\vec{x}}}
\newcommand{\Rpp}{\bbR_{>0}}
\newcommand{\Rp}{\bbR_{\ge 0}}
\newcommand{\seqf}{Q}
\newcommand{\terminalnode}{\diamond}
\renewcommand{\div}[2]{D_\regu(#1 \,\|\, #2)}
\LetLtxMacro{\baseproof}{\proof}
\LetLtxMacro{\endbaseproof}{\endproof}
\newcommand*\circled[1]{\tikz[baseline=(char.base)]{
            \node[shape=circle,draw,inner sep=1pt] (char) {\scriptsize #1};}}
\title{Bandit Linear Optimization for Sequential Decision Making\\and Extensive-Form Games}
\author{
Gabriele Farina,\textsuperscript{\rm 1}
Robin Schmucker,\textsuperscript{\rm 2}
Tuomas Sandholm\textsuperscript{\rm 1,\rm 2,\rm 3,\rm 4,\rm 5}\\
}
\newcommand{\runinsec}{%
    \@ifstar{\runinsec@nostep}{\runinsec@step}}
\newcommand{\runinsec@step}[1]{%
    \refstepcounter{subsection}\runinsec@nostep{\thesubsection\quad#1}}
\newcommand{\runinsec@nostep}[1]{\noindent\textbf{#1}\qquad}
        \newcommand*\patchAmsMathEnvironmentForLineno[1]{%
          \expandafter\let\csname old#1\expandafter\endcsname\csname #1\endcsname
          \expandafter\let\csname oldend#1\expandafter\endcsname\csname end#1\endcsname
          \renewenvironment{#1}%
                           {\linenomath\csname old#1\endcsname}%
                           {\csname oldend#1\endcsname\endlinenomath}%
        }%
        \newcommand*\patchBothAmsMathEnvironmentsForLineno[1]{%
          \patchAmsMathEnvironmentForLineno{#1}%
          \patchAmsMathEnvironmentForLineno{#1*}%
        }%
\begin{document}
    \maketitle

    \begin{abstract}
        Tree-form sequential decision making (TFSDM) extends classical one-shot decision making by modeling tree-form interactions between an agent and a potentially adversarial environment. It captures the online decision-making problems that each player faces in an extensive-form game, as well as Markov decision processes and partially-observable Markov decision processes where the agent conditions on observed history.
        Over the past decade, there has been considerable effort into designing online optimization methods for TFSDM.
        Virtually all of that work has been in the \emph{full-feedback} setting, where the agent has access to \emph{counterfactuals}, that is, information on what \emph{would have happened} had the agent chosen a different action at any decision node. Little is known about the \emph{bandit} setting, where that assumption is reversed (no counterfactual information is available), despite this latter setting being well understood for almost 20 years in one-shot decision making.
%
%
        In this paper, we give the first algorithm for the bandit linear optimization problem for TFSDM that offers both (i) linear-time iterations (in the size of the decision tree) and (ii) $O(\sqrt{T})$ cumulative regret in expectation compared to any fixed strategy, at all times $T$.
        This is made possible by new results that we derive, which may have independent uses as well:
        1) geometry of the dilated entropy regularizer,
        2) autocorrelation matrix of the natural sampling scheme for sequence-form strategies,
        3) construction of an unbiased estimator for  linear losses for sequence-form strategies, and
        4) a refined regret analysis for mirror descent when using the dilated entropy regularizer.
    \end{abstract}

    \section{Introduction}

%
\emph{Tree-form sequential decision making (TFSDM)} models multi-stage online decision-making problems~\citep{Farina19:Online}. In TFSDM, an agent interacts sequentially with a potentially reactive environment in two ways: (i) selecting actions at decision points and (ii) partially observing the environment at observation points. Decision points and observation points alternate along a tree structure. TFSDM captures the online decision process that each player faces in an extensive-form game, as well as Markov decision processes and partially-observable Markov decision processes where the agent conditions on observed history.
Regret minimization, one of the main mathematical abstractions in the field of online learning, has proved to be an extremely versatile tool for TFSDM. For instance, over the past decade, regret minimization algorithms such as counterfactual regret minimization (CFR)~\cite{Zinkevich07:Regret} and related newer, faster  algorithms have become popular for solving zero-sum games~\cite{Tammelin15:Solving,Brown15:Regret,Brown17:Dynamic,Brown17:Reduced,Brown19:Solving}.
These newer algorithms served as an important component in the computational game-solving pipelines that achieved several recent milestones in computing superhuman strategies in two-player limit Texas hold'em~\citep{Bowling15:Heads}, two-player no-limit Texas hold'em~\citep{Brown17:Safe,Brown17:Superhuman} and multi-player no-limit Texas hold'em~\cite{Brown19:Superhuman}.

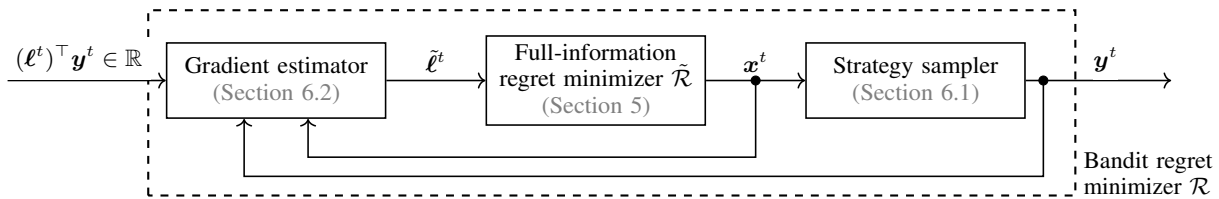
\begin{figure*}[th]\centering
        \begin{tikzpicture}[scale=.85]\small
            \node[draw,semithick,align=center,text width=2.7cm,minimum height=1cm] (G) at (0,0) {Gradient estimator\\\textcolor{gray}{(\cref{sec:l tilde})}};
            \node[draw,semithick,align=center,text width=2.7cm,minimum height=1cm] (Q) at (5,0) {Full-information regret minimizer $\tilde{\cR}$\\\textcolor{gray}{(\cref{sec:r tilde})}};
            \node[draw,semithick,align=center,text width=2.7cm,minimum height=1cm] (S) at (10,0) {Strategy sampler\\\textcolor{gray}{(\cref{sec:sampling scheme})}};
          \draw[thick,dashed] (-2,-1.8) rectangle (12.5,1.1);
          \node[right,text width=2.2cm] at (12.5,-1.45) {Bandit regret minimizer $\cR$};

          \coordinate (D1) at (7.5,0);
          \coordinate (D2) at (12,0);
          \draw[fill=black] (D1) circle (.07);
          \draw[fill=black] (D2) circle (.07);
          \draw[semithick,->] (D1) -- ++(0, -1.2) -- ++(-7, 0) -- (.5,0 |- G.south);
          \draw[semithick,->] (D2) -- ++(0, -1.5) -- ++(-12.5, 0) -- (-.5,0 |- G.south);
          \draw[semithick,->] (-4.2,0) --node[pos=.45,above,fill=white]{$(\vec{\ell}^t)^\top \vec{y}^t\in\bbR$} (G.west);
          \draw[semithick,->] (G.east) --node[above]{$\tilde{\vec{\ell}}^t$} (Q.west);
          \draw[semithick,->] (Q.east) --node[above,inner xsep=0]{$\xbar^t$} (S.west);
          \draw[semithick,->] (S.east) --node[above,pos=.55,inner xsep=0,fill=white]{$\vec{y}^t$} (14,0);
        \end{tikzpicture}
        \caption{Overview of the construction of our bandit regret minimizer $\cR$.
        }
        \label{fig:overview}
\end{figure*}

However, those methods rely on having access to \emph{counterfactuals}, that is, information on what would have happened had the agent chosen a different action at any decision point.
While this assumption is reasonable when regret minimization algorithms are used in self-play (for instance, as a way to converge to a Nash equilibrium in an extensive-form game), it limits their applicability in \emph{online} decision-making settings, where the algorithm is deployed to learn strategies (for instance, exploitative strategies) against a non-stationary and potentially adversary opponent.
In the \emph{bandit} setting that assumption is reversed, and no counterfactual information is available to the online decision maker.
Despite this latter setting being well understood for almost 20 years in \emph{one-shot} decision making, surprisingly little is known about the bandit optimization setting in \emph{sequential} decision making.
Part of the reason for this gap is that the multi-stage nature of TFSDM poses challenges that are not present in one-shot decision making, such as 1) not knowing what the environment would have done in parts of the tree that were not reached (and not even knowing the current path of play if you do not observe the environment's actions), and 2) having an exponential number of available sequential policies to choose from.

In this paper we give the first algorithm for the well-established bandit linear optimization problem in TFSDM and show that it achieves $O(\sqrt{T})$ expected regret compared to any fixed strategy even when playing against a reactive environment, while at the same time only requiring a single linear time tree traversal per iteration.
To our knowledge, there has been only one prior approach to bandit linear optimization that offers both (i) iterations that are polynomial in the number of sequences in the decision process and (ii) $\tilde{O}(\sqrt{T})$ expected regret compared to any fixed strategy~\citep{Abernethy08:Competing}. That work was for general convex sets. By focusing on TFSDM, we achieve faster iterations and convergence in fewer iterations. Our algorithm runs in linear time per iteration unlike the prior algorithm which requires that an eigendecomposition of a Hessian matrix be computed at each iteration---a cubic-time operation. Our expected regret is $O(\sqrt{T})$ instead of the prior algorithm's $O(\sqrt{T\log T})$.
One application of our algorithm and theory is to find exploitative strategies for an agent in extensive-form games against a non-adaptive opponent (that is, an opponent that cannot learn from our prior play in previous iterations of the extensive-form game) but one that can randomize and condition its actions on its observations of our play within any iteration of the extensive-form game. We provide experiments in this setting. To our knowledge, this is the first implementation of bandit optimization for TFSDM.

A known weakness of the approach of~\citet{Abernethy08:Competing}, which is also a weakness in our approach, is that the bound on regret holds not with high probability but only in expectation.
This weakness can be eliminated in theory if iterations are allowed to take exponential time in the number of sequences in the decision process~\cite{Bartlett08:High,Hazan16:Optimal} or a recent manuscript suggests that it can be achieved by accepting slower $O(T^{\nicefrac{2}{3}})$ convergence~\cite{Braun16:Efficient}. Another approach achieves $\tilde{O}(n^{9.5} \sqrt{T})$ regret, where $n$ is the size of the input TFSDM problem, at the cost of having each iteration incur into a factor that grows proportionally to the time horizon $T$~\citep{Bubeck17:Kernel}.
Due to this weakness, in our approach, the one of~\citet{Abernethy08:Competing} and other methods that do not enjoy a high-probability regret bound, when used in self play in two-player zero-sum games, the average regrets of the players might not converge to zero---but if they do, the average strategies converge to a Nash equilibrium.
It is an open problem (except for relatively simple settings like simplex~\cite{Auer02:Nonstochastic} and sphere~\cite{Abernethy09:Beating}) whether in-high-probability $\tilde{O}(\sqrt{T})$ regret bounds can be obtained in the bandit setting in polynomial-time iterations.
\citet{Abernethy09:Beating} presented a template for deriving such bounds, but several pieces therein need to be instantiated to complete the proof of bounds. The theory of the present paper offers solutions for some of those pieces for general TFSDM problems, as we will discuss, so our paper may help pave the way to solving the open problem for TFSDM.

\subsection{Overview of Our Approach}\label{sec:overview}
In this subsection we give an overview of the key ideas behind our method. We assume some basic familiarity with the concept of full-information and bandit regret minimizers; both concepts are recalled in \cref{sec:regret minimization}.

The approach we follow in this paper combines several tools and insights.
We construct a bandit regret minimizer $\mathcal{R}$ starting from a \emph{full-information} regret minimizer $\tilde{\mathcal{R}}$, that is, one that has access to the full loss vector at each iteration. Our bandit regret minimizer $\mathcal{R}$ works as follows:
\begin{itemize}[leftmargin=7mm]
    \item[(i)] the next strategy $\vec{y}^t$
  for $\mathcal{R}$
  is computed starting from the strategy $\xbar^{t}$ output by $\tilde{\mathcal{R}}$. We employ a specific unbiased \emph{sampling scheme} to sample $\vec{y}^t$ from $\xbar^t$.
      At all times $t$, we guarantee that $\bbE[\vec{y}^t | \vec{y}^1, \dots, \vec{y}^{t-1}] = \xbar^t$;
  \item[(ii)] each loss evaluation (that is, the negative of the reward of the strategy that we played in the most recent iteration) $(\vec{\ell}^{t})\trans \vec{y}^t \in \bbR$ is used to constructs an artificial loss vector $\tilde{\vec{\ell}}^t$ in a specific way that makes it an unbiased estimator of $\vec{\ell}^t$. This artificial loss vector is then passed to $\tilde{\mathcal{R}}$.
\end{itemize}
The construction of $\cR$ is summarized pictorially in \cref{fig:overview}.
We implement $\tilde{\mathcal{R}}$ using the \emph{online mirror descent} algorithm paired with a type of regularizer called the \emph{dilated entropy distance-generating function (DGF)}. The reasons behind this choice are twofold. First, it enables an efficient implementation of $\tilde{\mathcal{R}}$, since projections onto sequential strategy spaces based on the dilated entropy DGF amount to a (linear-time) traversal of the decision process. Second, it serves as the basis for defining a \emph{local, time-dependent} norm $\pn{\cdot}{t}$ that combines well with the regret bound of online mirror descent. 
Two steps are critical in the proof of the regret bound for the overall regret minimizer $\mathcal{R}$. First, we show that, in expectation, $\|\tilde{\vec{\ell}}\|_{\ast,t}$ is upper bounded by a small (time-independent) constant $c$ (the same property would not hold for a generic time-independent norm). This, combined with the local-norm regret bound mentioned above, can be used to show that the regret cumulated by $\tilde{\mathcal{R}}$ is $O(\sqrt{T})$ in expectation. Second, we use the unbiasedness of $\vec{y}^t$ and $\tilde{\vec{\ell}}^t$ to conclude that the expected regret accumulated by $\mathcal{R}$ matches that of $\tilde{\mathcal{R}}$. 

\subsection{Relationships to Related Research}\label{sec:related work}
The idea of constructing a bandit regret minimizer starting from a full-information regret minimizer was used in~\citet{Abernethy09:Beating}.
A general construction of an unbiased estimator $\tilde{\vec{\ell}}^t$ of $\vec{\ell}^t$ starting from the loss evaluation $(\vec{\ell}^t)\trans\vec{y}^t$ appears in~\citet{Bartlett08:High}.
We generalize their argument to handle strategy domains where the vector space spanned by all decision vectors is rank deficient (this is the case for sequential strategy spaces), and give several new, fundamental properties about the autocorrelation matrix of the standard sampling scheme for sequence-form strategies. 
The idea of using time-dependent norms to obtain a tighter regret analysis than time-independent norms appeared in, for example,~\citet{Abernethy08:Competing,Abernethy09:Beating,Shalev-Shwartz12:Online}, while the use of the dilated entropy regularizer in the context of sequential decision making and extensive-form games for other purposes goes back to the original work by~\citet{Hoda10:Smoothing}, with important newer practical observations by~\citet{Kroer18:Faster}.

\textsc{Exp3}~\citep{Auer02:Nonstochastic} is credited to be the first bandit regret minimizer for simplex domains. \textsc{GeometricHedge}~\citep{Dani08:Price} is a general-purpose bandit regret minimizer that can be applied to any set of decisions. However, it  requires one to compute a barycentric spanner~\citep{Awerbuch04:Adaptive}, which in our setting would have prohibitive pre-processing cost. Furthermore, it runs in exponential time per iteration in the general case, and it is not known whether that can be avoided in our setting. 


\citet{Lanctot09:Monte} suggested as a side note that a specific \emph{online} variant of their Monte Carlo CFR (MCCFR) algorithm (as opposed to the usual \emph{self-play} MCCFR algorithm) could be used for online decision making without counterfactuals. Their paper did not provide theoretical guarantees for that online variant. The well-established bandit linear optimization setting considered in this paper is quite different from the one that online MCCFR implicitly operates on. First, in bandit optimization (our setting), each strategy is output \emph{before} the environment reveals feedback, and the only feedback that the environment gives is a single real-valued reward $(\vec{\ell}^t)^{\!\top} \vec{x}^t$. In contrast, in online MCCFR the feedback is not just the final payoff, as online MCCFR needs to know which path was followed in the game tree and the terminal leaf, so that regrets can be updated for all decision nodes of the player on the path from the root to the leaf. So, even if a version of online MCCFR with theoretical guarantees were developed, it would \emph{not} be an algorithm for bandit linear optimization, but rather an algorithm for a different (and easier, since more feedback is given to the decision maker) online learning setting. Depending on the applications, that setting---which, to our knowledge, has never been investigated nor formally proposed---might be more or less natural than bandit linear optimization. Since the bandit linear optimization model does not require that the decision maker observe the path of play, it can be used to model settings in which (i) there is no path in the game tree, because the loss given by the environment does not represent playing against an opponent; (ii) the decision maker does not interact immediately with the environment: the output strategy is evaluated at a later time by the environment and feedback is given only then; (iii) the environment does not inform the decision maker of the specific trajectory taken in the interaction out of privacy concerns; or any combination of the above.

\section{Review of Sequential Decision Making}
\label{subseq:SDP}


%
The decision process of an TFSDM problem is structured as a tree of decision points---in which an action must be selected by the agent---and observation points---in which the
environment reveals a signal to the agent.
%
%
We denote the set of decision points in the TFSDM problem as $\cJ$, and the set of observation points as $\cK$. At each decision point $j \in \cJ$, the agent selects an action from the set $A_j$ of available actions. At each observation point $k \in \cK$, the agent observes a signal $s_k$ from the environment out of a set of possible signals $S_k$.
We denote by $\rho$ the transition function of the process. Picking action $a \in A_j$ at decision point $j\in\cJ$ results in the process transitioning to $\rho(j,a) \in \cJ \cup \cK \cup \{\terminalnode\}$, where $\terminalnode$ denotes the end of the decision process. Similarly, the process transitions to $\rho(k,s) \in \cJ \cup \cK \cup \{\terminalnode\}$ after the agent observes signal $s \in S_k$ at observation point $k \in \cK$.
In line with the game theory literature, we call a pair $(j,a)$ where $j\in \cJ$ and $a \in A_j$ a \emph{sequence}. The set of all sequences is denoted as $\Sigma \defeq \{(j,a): j\in\cJ, a\in A_j\}$. For notational convenience, we will often denote an element $(j,a)$ in $\Sigma$ as $ja$ without using parentheses.
Given a sequence $ja \in \Sigma$, we denote by $\vec{u}_{ja}$ the vector such that $(\vec{u}_{ja})_{j'a'} = 1$ if the (unique) path from the root node to action $a'$ at decision point $j'$ passes through action $a$ at decision point $j$, and $(\vec{u}_{ja})_{j'a'} = 0$ otherwise.
Finally, given a node $v \in \cJ \cup \cK$, we denote by $p_v$ its \emph{parent sequence}, defined as the last sequence (that is, decision point-action pair) encountered on the path from the root to $v$. If the agent does not act before $v$ (that is, $v$ is the root of the process or only observation points are encountered on the path from the root to $v$), we let $p_v = \emptyseq$. We use the symbol $N_v$ to denote the number of decision points in the subtree rooted at $v$. If $v$ itself is a decision point, $v$ is included in the count.

\paragraph{Strategies in TFSDM problems}
%
A strategy for an agent in an TFSDM problem specifies a distribution over the set of actions $A_j$ at each decision point $j \in \cJ$. We represent a strategy using the \emph{sequence-form representation}, that is, as a vector $\xbar \in \Rp^{|\Sigma|}$ whose entries are indexed by $\Sigma$. The entry ${x}_{ja}$ contains the product of the probabilities of all actions at all decision points on the path from the root of the process to action $a$ at decision point $j\in \cJ$. In order to be a valid sequence-form strategy, the entries in $\xbar$ must satisfy the following consistency constraints~\citep{Romanovskii62:Reduction,Koller94:Fast,Stengel96:Efficient}:
\begin{equation}\label{eq:sf}
\begin{split}
  \sum_{a\in A_j} {x}_{ja} = {x}_{p_j} \quad\forall j\in \cJ \text{ s.t. } p_j \neq \emptyseq,\\
  \sum_{a\in A_j} {x}_{ja} = 1 \quad\forall j\in \cJ \text{ s.t. } p_j = \emptyseq.
\end{split}
\end{equation}
Since $\emptyseq$ is not an element in $\Sigma$, there is no entry in $\xbar$ that corresponds to $\emptyseq$, and the notation ${x}_\emptyseq$ is invalid. We will slightly abuse notation and refer to ${x}_\emptyseq$ to mean the constant value $1$. 
Finally, we let $\pure \subseteq \Rp^{|\Sigma|}$ be the finite set of all \emph{pure} (also known as \emph{deterministic}) sequence-form strategies, that is, strategies that assign probability $1$ to exactly one action at each decision point.
%
The set of all sequence-form strategies, denoted $\seqf$, is the convex hull $\seqf \defeq \co\pure$ of the set of pure strategies $\pure$.


    \section{Regret Minimization}
\label{sec:regret minimization}
A regret minimizer is an abstraction for a repeated decision maker. The decision maker repeatedly interacts with an unknown (possibly adversarial) environment by choosing points $\vec{x}^1, \dots, \vec{x}^T$ from a set $\cX \subseteq \bbR^n$ of feasible decisions and incurring a linear loss $(\vec{\ell}^1)\trans \vec{x}^1, \dots, (\vec{\ell}^T)\trans\vec{x}^T$ after each iteration.  For the purposes of this paper, the points are strategies (policies) for the
agent, so we will use the terms point and strategies interchangeably in this section.

%
The quality metric for a regret minimizer is its \emph{regret}, which measures the difference in loss against the best \emph{fixed} (that is, time-independent) decision in hindsight. Formally, given a decision $\vec{z} \in \cX$, the regret cumulated against $\vec{z}$ up to time $T$ is defined as
\[
  R^T(\vec{z}) \defeq \sum_{t = 1}^{T} (\vec{\ell}^t)\trans (\vec{x}^t -  \vec{z}).
\]
A ``good'' (aka. \emph{Hannan consistent}) minimizer is such that the regret compared to \emph{any} $\vec{z}\in\cX$ grows sublinearly in $T$.
This paper is interested in two types of regret minimizers, which differ in the feedback that the algorithm receives.

\paragraph{Full-Information Setting.}
In the \emph{full-information} setting, at all time steps $t = 1,\dots,T$, the regret minimizer interacts with the environment as follows:
\begin{itemize}[leftmargin=5mm]
    \item $\textsc{NextStrategy}()$: the agent outputs the next point $\vec{x}^t \in \cX \subseteq \bbR^n$. The next decision can depend on the past decisions $\vec{x}^1, \dots, \vec{x}^{t-1}$ as well as the corresponding feedback $\vec{\ell}^1, \dots, \vec{\ell}^{t-1}$, which we define next;
\item $\textsc{ObserveLoss}(\vec{\ell}^t)$: the environment selects a loss vector $\vec{\ell}^t \in \bbR^n$ and the agent observes $\vec{\ell}^t$. The loss vector can depend on the decisions $\vec{x}^1,\dots,\vec{x}^{t}$ that were output by the regret minimizer so far.
\end{itemize}
Our construction of $\tilde{\mathcal{R}}$ (\cref{sec:dgf}) provides a full-information regret minimizer for the set $\cX = \seqf$.
So, $\tilde{\mathcal{R}}$'s decisions are  (potentially randomized) sequence-form strategies.

\paragraph{Bandit Setting.}
In the \emph{bandit} setting the environment does \emph{not} reveal the selected loss vector $\vec{\ell}^t$ at each iteration, but only the evaluation $(\vec{\ell}^t)\trans \vec{x}^t$ of the loss function for the latest decision $\vec{x}^t$. Formally, at all time steps $t = 1,\dots,T$, the  regret minimizer interacts with the environment as follows:
\begin{itemize}[leftmargin=5mm]
    \item $\textsc{NextStrategy}()$: the agent outputs the next point $\vec{x}^t \in \cX \subseteq \bbR^n$. As in the full-information setting, the next strategy can depend on the past strategies and corresponding feedbacks, which we define next;
    \item $\textsc{ObserveLossEvaluation}((\vec{\ell}^t)\trans \vec{x}^t)$: the environment selects a loss vector $\vec{\ell}^t \in \bbR^n$ and the agent observes $(\vec{\ell}^t)\trans \vec{x}^t$. We assume without loss of generality that $(\vec{\ell}^t)\trans \vec{x}^t\in [0,1]$ at all $t$. The loss vector can depend on the decisions $\vec{x}^1,\dots,\vec{x}^{t-1}$ that were output by the regret minimizer \emph{before} time $t$, but \emph{not} on $\vec{x}^t$.
\end{itemize}
Since the regret minimizer only observes $(\vec{\ell}^t)\trans \vec{x}^t$, it cannot compute any \emph{counterfactual} information (that is, compute the value of the loss at a decision other than the one that was output).
Currently, the bandit setting represents the hardest setting in which the information-theoretic upper bound of $\tilde{O}(\sqrt{T})$ regret is known to be attainable, but very little is known about sequential decision making under that setting, and existing algorithms are not computationally practical.%
\footnote{A third online learning setting---called the \emph{semi-bandit optimization setting}---has been proposed in the literature~\cite{Gyorgy07:Line,Kale10:Non,Audibert14:Regret,Neu13:Efficient}. The feedback that the decision maker receives at all times $t$ in that setting is the component-wise product $\vec{\ell}^t \circ \vec{x}^t$. The semi-bandit feedback provides counterfactual information. Instead, in this paper we are interested in the bandit setting, where no counterfactual information is available.} 
%
%

%

\section{Dilated Entropy and Local Norms}\label{sec:dgf}

The dilated entropy distance-generating function (DGF) is a regularizer that induces a notion of distance that is suitable for the sequence-form strategies spaces. This regularizer was first introduced in the context of extensive-form games~\citep{Hoda10:Smoothing}. \citet{Kroer18:Faster}---with earlier results by~\citet{Kroer15:Faster}---analyzed several properties of this function, including its $1$-strong convexity with respect to the $\ell_1$ and $\ell_2$ norms. They also showed that the dilated entropy DGF leads to state-of-the-art convergence guarantees in iterative methods for computing Nash equilibrium in two-player zero-sum extensive-form games of perfect recall. We define this kind of DGF as follows.
    \begin{definition}
    \label{def:dilated entropy}
        Let $\co \pure$ be the set of sequence-form strategies for the TFSDM problem. The \emph{dilated entropy} distance-generating function for $\co \pure$ is the function $\regu : \Rpp^{|\Sigma|} \to \Rp$ defined as
        \begin{equation*}
            \regu : \xbar \mapsto \sum_{j\in\cJ} w_j \mleft({x}_{p_j}\log |A_j| + \sum_{a\in A_j} {x}_{ja} \log \frac{{x}_{ja}}{{x}_{p_j}} \mright),
        \end{equation*}
        where the weights $w_j$ are defined recursively according to:
\begin{equation*}
    w_j = 2 + 2 \max_{a \in A_{j}} \{w_{\rho(j,a)}\}, \quad
     w_k = \sum_{s \in S_{k}} w_{\rho(j,s)},\quad w_{\terminalnode} = 0.
\end{equation*}
    \end{definition}
The range of $\regu$ is a game-dependent constant, and usually polynomial in the size of the TFSDM problem~\citep{Kroer17:Theoretical}.
The (unique) minimum of $\regu$ is attained by the sequence-form strategy that at each decision point uniformly randomizes among all available actions (that is, $x_{ja} = x_{p_j}/|A_j|$ for all $j\in\cJ,a\in A_j$).

The dilated entropy DGF has the benefit that its gradient and its Fenchel conjugate function can be evaluated efficiently via a linear-time pass of the decision process~\citep{Hoda10:Smoothing}. In particular, for all $\vec{z} \in\Rpp^{|\Sigma|}$, there exists an exact algorithm, denoted \textsc{Gradient}, to compute $\nabla \regu(\vec{z})$ in linear time in $|\Sigma|$. Also, there exists an exact algorithm, denoted \textsc{ArgConjugate}, to compute $
      \nabla \regu^*(\vec{z}) = \argmax_{\hat{\vec{x}} \in \co \pure}\{\vec{z}\trans \hat{\vec{x}} - \regu(\hat{\vec{x}})\}
    $
    in linear time in $|\Sigma|$.
This makes $\regu$ an appealing candidate regularizer in many TFSDM optimization algorithms, including the full-information regret minimizer $\tilde{\mathcal{R}}$ that we use in this paper.
In \cref{app:dgf} in the full version of this paper\footnote{The full version of this paper, including appendix, is available on arXiv.} we give pseudocode for \textsc{Gradient} and \textsc{ArgConjugate}.

As mentioned in the introduction, the analysis of our bandit regret minimizer needs to take into consideration the particular geometry of the dilated entropy DGF. Specifically, at each point $\xbar \in \seqf$ in the sequence-form strategy space, the dilated entropy DGF induces a pair of primal-dual \emph{local} norms $(\pn{\cdot}{\xbar},\dn{\cdot}{\xbar})$ defined for all $\vec{z}\in\bbR^{|\Sigma|}$ as
\begin{align*}
  \pn{\vec{z}}{\xbar} \!\defeq\! \sqrt{\vec{z}\trans \nabla^2 \regu(\xbar)\, \vec{z}};\quad
  \dn{\vec{z}}{\xbar} \!\defeq\! \sqrt{\vec{z}\trans (\nabla^2 \regu(\xbar))^{-1}\vec{z}},
\end{align*}
where $\nabla^2 \regu(\xbar)$ denotes the Hessian matrix of $\regu$ at $\vec{\xbar}$. Since $\nabla^2 \regu(\xbar)$ is positive-definite, it is known that $\dn{\cdot}{\xbar}$ is well-defined and that it is indeed dual to $\pn{\cdot}{\xbar}$, in the sense that $\dn{\vec{z}}{\xbar} = \max\{\vec{z}\trans\vec{w} : \pn{\vec{w}}{\xbar} \le 1\}$ for all $\vec{z} \in \bbR^{|\Sigma|}$.

    To our knowledge, we are the first to explore the local norms induced by the dilated entropy DGF.
    These norms are a fundamental ingredient in our construction, and here we give several properties that we will use in later sections.
%
    In \cref{app:local norms} in the full version of this paper we give several results regarding analytical properties of these norms, including a useful characterization of the inverse Hessian matrix of the DGF $\regu$ at a generic strategy $\xbar \in \seqf$ in terms of sum of dyadics.
%

    \section{Construction of $\tilde{\cR}$}\label{sec:r tilde}

\begin{figure*}
    \makeatletter\let\@latex@error\@gobble\makeatother
    \scalebox{.9}{
    \begin{minipage}[t]{.54\linewidth}
    \begin{algorithm}[H]
      \caption{Full-information regret minimizer $\tilde{\mathcal{R}}$\hspace*{-5mm}}
        \label{algo:R tilde}\DontPrintSemicolon
          \KwData{$\eta$ is a step-size parameter.}
        \BlankLine
        \Fn{\normalfont\textsc{Setup}()}{
            \For{$j \in \cJ$ in top-down order}{\vspace{.3mm}
                \textbf{for} $a \in A_j$ \textbf{do} ${x}^1_{ja} \gets \frac{{x}_{p_j}}{|A_j|}$\;
            }
        }
        \Hline{}
        \vspace{.8mm}
        \textbf{function} \textsc{NextStrategy}(): \textbf{return} $\xbar^t$\;
        \vspace{.8mm}
        \Hline{}
        \vspace{.8mm}
        \Fn{\normalfont\textsc{ObserveLoss}($\tilde{\vec{\ell}}^{t}$)}{
            $\vec{g} \gets \eta\tilde{\vec{\ell}}^{t} - \textsc{Gradient}(\xbar^{t})$\Comment*{\color{commentcolor}\cref{sec:dgf}]\hspace*{-4mm}}
            $\xbar^{t+1}\gets\textsc{ArgConjugate}(-\vec{g})$\Comment*{\color{commentcolor}\cref{sec:dgf}]\hspace*{-4mm}}
        }
    \end{algorithm}
    \end{minipage}}
    \hfill
    \scalebox{.9}{
    \begin{minipage}[t]{.54\linewidth}
        \begin{algorithm}[H]
            \caption{Bandit regret minimizer $\mathcal{R}$ \phantom{$\tilde{\mathcal{R}}$}}
            \label{algo:R}\DontPrintSemicolon
            \Fn{\normalfont\textsc{Setup}()}{
                $\tilde{\mathcal{R}}$.\normalfont\textsc{Setup()}\Comment*{\color{commentcolor}\cref{algo:R tilde}]\hspace*{-4mm}}
            }
            \Hline{}
            \Fn{\normalfont\textsc{NextStrategy}()}{
                $\xbar^t \gets \tilde{\mathcal{R}}.\normalfont\textsc{NextStrategy}()$\Comment*{\color{commentcolor}\cref{algo:R tilde}]\hspace*{-4mm}}
                $\vec{y}^t \gets \textsc{Sample}(\xbar^t)$\Comment*{\color{commentcolor}\cref{sec:sampling scheme}]\hspace*{-4mm}}
                \Return{$\vec{y}^t$}\;
            }
            \Hline{}
            \Fn{\normalfont\textsc{ObserveLossEvaluation}($l \defeq(\vec{\ell}^{t})\trans \vec{y}^t$)\!\!\!\!}{
                $\tilde{\vec{\ell}}^t \gets \textsc{LossEstimate}(l,\xbar^t, \vec{y}^t)$\Comment*{\color{commentcolor}Algorithm~\ref{algo:ell tilde}]\hspace*{-4mm}}
                $\tilde{\mathcal{R}}.\normalfont\textsc{ObserveLoss}(\tilde{\vec{\ell}}^t)$\Comment*{\color{commentcolor}\cref{algo:R tilde}]\hspace*{-4mm}}
            }
        \end{algorithm}
    \end{minipage}}
    \vspace{-2mm}
\end{figure*}

%
%
Our full-information regret minimizer $\tilde{\mathcal{R}}$ is constructed using online mirror
descent---one of the most well-studied full-information regret minimization algorithms in online
learning---instantiated with the dilated entropy DGF $\regu$ (\cref{def:dilated entropy}) as the
regularizer and the set $\seqf\subseteq \bbR^{|\Sigma|}$ of sequence-form strategies in the game as
the domain of feasible iterates. Pseudocode for $\tilde{\cR}$ is given in \cref{algo:R tilde}, where
$\eta > 0$ is a stepsize parameter that can be tuned at will.


Those properties are key to the analysis of the regret cumulated by \cref{algo:R tilde} as a
function of the local dual norms of the loss vectors $\tilde{\vec{\ell}}^t$; that analysis is rather
lengthy and we defer it to \cref{app:R tilde} in the full version of this paper. Here, we only state a key result.

\begin{theorem}\label{thm:omd regret bound}
    Let $D$ be the maximum depth of any node in the decision process, and let $\vec{z} \in \seqf$.
    If $\tilde{\vec{\ell}}^t \in \Rp^{|\Sigma|}$ at all times $t$, then at all times $T$ the regret
    $\tilde{R}^T(\vec{z})$ cumulated by $\tilde{\mathcal{R}}$ satisfies:
    \begin{equation}\label{eq:omd regret bound}
        \tilde{R}^T(\vec{z}) \le \frac{\regu(\vec{z})}{\eta} + \eta\sqrt{3D} \cdot \sum_{t=1}^T \dn{\tilde{\vec{\ell}}^{t}}{\xbar^{t}}^2.
    \end{equation}
\end{theorem}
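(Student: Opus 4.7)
The plan is to follow the textbook online mirror descent regret analysis and then specialize the stability step to the dilated entropy DGF. First I would start from the update $\xbar^{t+1} = \argmin_{\xbar \in \seqf}\{\eta\,(\tilde{\vec{\ell}}^t)\trans \xbar + \div{\xbar}{\xbar^{t}}\}$ implemented by \cref{algo:R tilde} (the composition of \textsc{Gradient} and \textsc{ArgConjugate} is exactly a proximal step with respect to $\regu$). The first-order optimality condition at $\xbar^{t+1}$, combined with the three-point Bregman identity, yields the standard one-step inequality
\[
\eta\,(\tilde{\vec{\ell}}^t)\trans(\xbar^{t}-\vec{z}) \le \div{\vec{z}}{\xbar^{t}} - \div{\vec{z}}{\xbar^{t+1}} + \eta\,(\tilde{\vec{\ell}}^t)\trans(\xbar^{t}-\xbar^{t+1}) - \div{\xbar^{t+1}}{\xbar^{t}}
\]
valid for every $\vec{z}\in\seqf$.

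Next I would sum the one-step inequality over $t=1,\ldots,T$. The difference $\div{\vec{z}}{\xbar^{t}} - \div{\vec{z}}{\xbar^{t+1}}$ telescopes, leaving at most $\div{\vec{z}}{\xbar^{1}}$. Because \textsc{Setup} initializes $\xbar^{1}$ to the uniform sequence-form strategy---the unconstrained minimizer of $\regu$ on $\seqf$, with $\regu(\xbar^{1})=0$---first-order optimality gives $\nabla\regu(\xbar^{1})\trans(\vec{z}-\xbar^{1})\ge 0$ for all $\vec{z}\in\seqf$, hence $\div{\vec{z}}{\xbar^{1}}\le \regu(\vec{z})$. After dividing by $\eta$, this recovers the first summand of the claimed bound.

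The technical core of the proof is the per-step stability estimate
\[
\eta\,(\tilde{\vec{\ell}}^{t})\trans(\xbar^{t}-\xbar^{t+1}) - \div{\xbar^{t+1}}{\xbar^{t}} \le \eta^{2}\sqrt{3D}\,\dn{\tilde{\vec{\ell}}^{t}}{\xbar^{t}}^{2}.
\]
A Fenchel--Young step bounds the inner product on the left by a product of a primal and a dual local norm, and a Taylor expansion identifies $\div{\xbar^{t+1}}{\xbar^{t}}=\tfrac12\pn{\xbar^{t+1}-\xbar^{t}}{\vec{p}}^{2}$ for some $\vec{p}\in[\xbar^{t},\xbar^{t+1}]$. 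To evaluate both sides at the common reference point $\xbar^{t}$ one needs a Hessian-comparison lemma showing that $\pn{\cdot}{\vec{p}}$ and $\pn{\cdot}{\xbar^{t}}$ differ by at most a factor of $O(\sqrt{D})$. The sign assumption $\tilde{\vec{\ell}}^{t}\in\Rp^{|\Sigma|}$ is exactly what makes this stability hold: the dilated-entropy OMD update is multiplicative at every decision node, and non-negativity of the losses controls how much each sequence-form coordinate can change in a single step. The sum-of-dyadics characterization of $(\nabla^{2}\regu(\xbar))^{-1}$ from \cref{app:local norms} is the tool that turns this per-level reasoning into a quantitative bound.

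The main obstacle will be this stability step: everything else is a standard Bregman manipulation, but the dilated entropy DGF is not globally strongly convex with respect to its own induced local norm, so one cannot simply invoke strong convexity once. Obtaining the tight $\sqrt{3D}$ constant---rather than a looser polynomial in $D$, or a squared-norm term pinned to the wrong iterate---will require combining the multiplicative nature of the OMD update at each decision point with the dyadic structure of the inverse Hessian along paths of depth at most $D$, so that the Hessian distortions introduced by a single update compound only into the claimed $\sqrt{3D}$ factor rather than into something larger.
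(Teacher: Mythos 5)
Your high-level skeleton (one-step Bregman inequality, telescoping, $\div{\vec{z}}{\xbar^{1}}\le\regu(\vec{z})$, then a local-norm stability bound) is the right shape, but the technical core as you describe it has a genuine gap, and it is not the route the paper takes. The proposed ``Hessian-comparison lemma'' asserting that $\pn{\cdot}{\vec{p}}$ and $\pn{\cdot}{\xbar^{t}}$ differ by at most an $O(\sqrt{D})$ factor for some $\vec{p}$ on the segment $[\xbar^{t},\xbar^{t+1}]$ is not true in general for the dilated entropy DGF: its Hessian has entries scaling like $w_j/x_{ja}$, and since the loss estimates $\tilde{\vec{\ell}}^{t}$ are unbounded, a single multiplicative update can shrink coordinates by an arbitrarily large factor, so the two Hessians are not uniformly comparable. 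The $\sqrt{3D}$ in the paper does not come from a two-point Hessian distortion at all; it comes from a Cauchy--Schwarz step over the at most $D$ ancestor sequences of each $ja$ (\cref{prop:omd unproj step}), combined with the constant-factor bounds of \cref{lem:bound primal} and \cref{lem:up propagation}. Attributing the factor to ``compounding Hessian distortions'' points you toward a lemma you cannot prove.

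The second, related problem is that you run the analysis on the projected iterate $\xbar^{t+1}$. The paper instead uses the lazy two-step formulation (\cref{lem:regret bound known}): the per-step correction is $(\tilde{\vec{\ell}}^{t})\trans(\xbar^{t}-\tilde{\vec{x}}^{t+1})$ where $\tilde{\vec{x}}^{t+1}$ is the \emph{unprojected} minimizer over $\Rpp^{|\Sigma|}$, and no term $\div{\xbar^{t+1}}{\xbar^{t}}$ ever needs to be matched against the inner product. This choice is essential: only the unprojected iterate admits the closed multiplicative form of \cref{lem:prox solution}, from which nonnegativity of $\tilde{\vec{\ell}}^{t}$ yields $\vec{0}\le\xbar^{t}-\tilde{\vec{x}}^{t+1}$ and the coordinatewise bound $1-\tilde{x}^{t+1}_{ja}/x^{t}_{ja}\le\sum_{j'a'\preceq ja}\eta\,\psi^{t}_{j'a'}/(w_{j'}x^{t}_{j'a'})$ via $1-e^{-x}\le x$ (\cref{cor:coarse bounds}). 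The Bregman projection onto $\seqf$ can increase coordinates, so neither the sign condition needed for \cref{lem:bound primal} nor any multiplicative control on $x^{t+1}_{ja}/x^{t}_{ja}$ is available for the projected iterate. To repair your argument you would essentially have to reintroduce the intermediate iterate and prove \cref{prop:prox solution bounds} and \cref{prop:omd unproj step}, which is where all the actual work in the paper's proof lives.
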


Incidentally, since the range of $\regu$ over $\seqf$ only depends on the TFSDM problem structure and not on the time
$T$, by setting $\eta = \Theta(1/\sqrt{T})$, we obtain a regret bound of the form
\begin{align*}
    \bbE\mleft[\tilde{R}^T(\vec{z})\mright] &= O\mleft(\frac{1}{\sqrt{T}}\,\bbE\mleft[\sum_{t=1}^T
    \dn{\tilde{\vec{\ell}}^t}{\xbar^t}^2\mright] \mright).
\end{align*}
In the next section, we show how to construct $\tilde{\vec{\ell}}$ in the right hand side and we
prove that the right hand side is small in expectation. Then in \cref{sec:full algo} we prove that
the expectation of the regret on the left hand side equals the expectation of the regret of the
bandit regret minimizer $\mathcal{R}$.

    \section{Unbiased Loss Estimate and Construction of $\mathcal{R}$}
    As mentioned in \cref{sec:overview}, two different components are crucial for our bandit regret minimizer $\mathcal{R}$: the sampling scheme and the construction of the unbiased loss estimates.

    \subsection{Sampling Scheme for TFSDM}\label{sec:sampling scheme}
At each time step $t$, the bandit regret minimizer $\mathcal{R}$ internally calls  $\tilde{\mathcal{R}}.\textsc{NextStrategy}()$ and receives a sequence-form strategy $\xbar^{t} \in \seqf$. Then, $\mathcal{R}$ samples and returns a \emph{pure} sequence-form strategy $\vec{y}^t \in \pure$ such that $\bbE_t[\vec{y}^t] = \xbar^t$. We use the natural sampling scheme for sequence-form strategies: at each decision point $j$, we pick an action $a\in A_j$ according to the distri\-bution ${x}^t_{ja} / {x}^t_{p_j}$ induced by the sequence-form strategy $\xbar^t$.
It is well known (and straightforward to verify---see \cref{app:unbiased} in the full version of this paper) that this sampling scheme is unbiased.

As we will show, in order to balance exploration and exploitation along the structure of the TFSDM problem and construct unbiased loss estimates, an analysis of the autocorrelation matrix $\vec{C}^t \defeq \bbE[\vec{y}^t(\vec{y}^t)^{\!\top} \,|\, \vec{y}^1,\dots, \vec{y}^{t-1}]$ of the sampling scheme, as well as its inverse, can be used. To our knowledge, we are the first to study the autocorrelation matrix of the natural sampling scheme for sequence-form strategies. We do so in \cref{app:autocorrelation}.

    \subsection{Computation of the Loss Estimate $\tilde{\vec{\ell}}^t$}
\label{sec:l tilde}
Our construction of the unbiased loss estimate extends and generalizes that of \citet{Dani08:Price}, in that it can be applied even when the set of strategies is rank deficient, for example here where the pure strategies $\pure$ of the sequence form strategy space only span a strict subspace of the natural Euclidean space $\bbR^{|\Sigma|}$ to which the sequence-form strategies belong.
    In particular, we relax the notion of unbiasedness to mean the weaker condition that the projection $\tilde{\vec{\ell}}^t$ onto the direction\footnote{The direction $\lin\cX$ of a set $\cX$ is the subspace defined as
      $\lin \cX \defeq \Span \{\vec{u} - \vec{v}: \vec{u},\vec{v} \in \cX\}.$} $\lin \pure$ of $\pure$ be an unbiased estimator of the projection of the original (and unknown) $\vec{\ell}^t$
    onto $\lin \pure$:
    \begin{equation}\label{eq:relax unbias}
      \bbE_t[\tilde{\vec{\ell}}^t]\trans \vec{w} = (\vec{\ell}^t)\trans \vec{w} \qquad\forall\, \vec{w} \in \lin \pure, \tag{$\star$}
    \end{equation}
    where $\bbE_t[\,\cdot\,]$ is an abbreviation for $\bbE_t[\,\cdot\,|\, \vec{y}^1, \dots, \vec{y}^{t-1}]$, that is,
    the expectation conditional on the previous decisions of $\mathcal{R}$.
    The main technical tool in our construction is the use of a generalized inverse of the autocorrelation matrix of $\vec{y}^t$, as shown by the next proposition (the proof is in \cref{app:R} in the full version of this paper).

    \begin{restatable}{proposition}{propftilde}\label{prop:f tilde}
        Let $\pi^t$ be a conditional distribution over $\pure$, given the previous decisions $\vec{y}^1,\dots,\vec{y}^{t-1}$, and suppose that the
    support of $\pi^t$ has full rank (that is, $\Span\supp\pi^t \!= \Span \pure$). Let
    $
        \mat{C}^t \defeq \bbE_{t}[\vec{y}^t (\vec{y}^t)\trans]
    $
    be the autocorrelation matrix of $\vec{y}^t$, and let $\mat{C}^{t-}$ be any generalized inverse of $\mat{C}^{t}$, that is, any matrix such that
    $
      \mat{C}^t \mat{C}^{t-} \mat{C}^t = \mat{C}^t.
    $
    Furthermore, let $\vec{b}^t$ be such that $\bbE_t[\vec{b}^t] \perp \lin \pure$. Then, the random variable
        \begin{equation}\label{eq:f tilde}
            \tilde{\vec{\ell}}^t \defeq [(\vec{\ell}^t)\trans \vec{y}^t] \cdot \mat{C}^{t-}\, \vec{y}^t + \vec{b}^t
        \end{equation}
        satisfies~\eqref{eq:relax unbias}.
    \end{restatable}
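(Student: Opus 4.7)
The plan is to directly compute $\bbE_t[\tilde{\vec{\ell}}^t]$ and verify the relaxed unbiasedness condition~\eqref{eq:relax unbias}. Because $\vec{\ell}^t$ is adversarial but depends only on $\vec{y}^1,\dots,\vec{y}^{t-1}$, it is measurable with respect to the conditioning and can be treated as a constant under $\bbE_t[\,\cdot\,]$. Writing the scalar $(\vec{\ell}^t)\trans\vec{y}^t$ as $(\vec{y}^t)\trans\vec{\ell}^t$ and pulling constants out of the expectation,
\[
\bbE_t\bigl[\tilde{\vec{\ell}}^t\bigr] \;=\; \mat{C}^{t-}\,\bbE_t\bigl[\vec{y}^t(\vec{y}^t)\trans\bigr]\,\vec{\ell}^t + \bbE_t[\vec{b}^t] \;=\; \mat{C}^{t-}\mat{C}^t\vec{\ell}^t + \bbE_t[\vec{b}^t].
\]

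Next I fix any $\vec{w}\in\lin\pure$. The $\vec{b}^t$ term drops out by hypothesis, and using symmetry of the autocorrelation matrix $\mat{C}^t$ I can rearrange $(\mat{C}^{t-}\mat{C}^t\vec{\ell}^t)\trans\vec{w}$ as $(\vec{\ell}^t)\trans\mat{C}^t(\mat{C}^{t-})\trans\vec{w}$. Since $\vec{\ell}^t$ is otherwise arbitrary, the whole statement reduces to the algebraic identity $\mat{C}^t(\mat{C}^{t-})\trans\vec{w} = \vec{w}$ for every $\vec{w}\in\lin\pure$. This is where I expect the main difficulty to lie, and also where the full-rank-support hypothesis is needed.

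To establish the identity I would first pin down $\img\mat{C}^t$. Since $\mat{C}^t$ is symmetric positive semi-definite, $\img\mat{C}^t = (\ker\mat{C}^t)^\perp$; and from $\vec{v}\trans\mat{C}^t\vec{v} = \bbE_t[((\vec{y}^t)\trans\vec{v})^2]$ I read off that $\ker\mat{C}^t = (\Span\supp\pi^t)^\perp = (\Span\pure)^\perp$ by the full-rank assumption, so $\img\mat{C}^t = \Span\pure \supseteq \lin\pure$. Any $\vec{w}\in\lin\pure$ can therefore be written as $\vec{w} = \mat{C}^t\vec{v}$ for some $\vec{v}$. Transposing the defining identity $\mat{C}^t\mat{C}^{t-}\mat{C}^t = \mat{C}^t$ and again using symmetry of $\mat{C}^t$ yields $\mat{C}^t(\mat{C}^{t-})\trans\mat{C}^t = \mat{C}^t$, whence $\mat{C}^t(\mat{C}^{t-})\trans\vec{w} = \mat{C}^t\vec{v} = \vec{w}$, completing the argument. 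The reason one must settle for~\eqref{eq:relax unbias} rather than literal unbiasedness is precisely this: every pure sequence-form strategy satisfies a common affine constraint (e.g., those inherited from $x_{\emptyseq}=1$), so $\Span\pure\subsetneq\bbR^{|\Sigma|}$, $\mat{C}^t$ is rank deficient, and no estimator built from samples in $\pure$ can be unbiased on the whole space.
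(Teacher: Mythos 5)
Your proposal is correct and follows essentially the same route as the paper: both pull $\vec{\ell}^t$ out of the conditional expectation to get $\mat{C}^{t-}\mat{C}^t\vec{\ell}^t$, kill the $\vec{b}^t$ term by the orthogonality hypothesis, and then use the fact that $\img\mat{C}^t=\Span\supp\pi^t=\Span\pure\supseteq\lin\pure$ (via the positive-semidefinite/kernel argument, which is exactly the paper's Lemma on $\img\bbE[\vec{y}\vec{y}\trans]$) together with the defining property of the generalized inverse. Your detour through $(\mat{C}^{t-})\trans$ also being a generalized inverse is just the transposed form of the paper's identity $\vec{z}\trans\mat{C}^{t-}\mat{C}^t=\vec{z}\trans$ for $\vec{z}\in\Span\pure$, so there is no substantive difference.
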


    Crucially, the loss estimate $\tilde{\vec{\ell}}^t$ in \cref{prop:f tilde} can be constructed using only the bandit feedback (loss evaluation) $(\vec{\ell}^t)\trans \vec{y}^t$ that was received at time $t$ after the regret minimizer output $\vec{y}^t$ as its decision.
At each time $t$, we use \cref{prop:f tilde} to construct the loss estimate $\tilde{\vec{\ell}}^t$. The main conceptual leap is to identify
\begin{itemize}[leftmargin=7mm]
    \item[(i)] a choice of generalized inverse $\mat{C}_*^{t-}$ for the autocorrelation matrix $\mat{C}^t$ of $\vec{y}^t$ returned by \cref{algo:sampling scheme}; and
    \item[(ii)] a particular choice for the (random) vector $\vec{b}_*^t$ such that $\bbE_t[\vec{b}_*^t] \perp\lin\pure$ so that (a) the expression $[(\vec{\ell}^t)\trans \vec{y}^t]\mat{C}_*^{t-}\vec{y}^t+\vec{b}_*^t$ can be evaluated in $O(|\Sigma|)$ time and (b) the resulting loss function $\tilde{\vec{\ell}}^t$ is nonnegative, as required by $\tilde{\mathcal{R}}$ (\cref{thm:omd regret bound}).
\end{itemize}
At a high level, the particular construction that we use generates $\mat{C}_*^{t-}$ and $\vec{b}_*^t$ inductively in a bottom-up fashion by traversing the decision process, and heavily relies on the combinatorial structure of the autocorrelation matrix $\mat{C}^t$. All details and proofs are in \cref{app:f tilde} in the full version of this paper.

\begin{figure}[ht!]
        \SetInd{0.2em}{0.4em}%
        \makeatletter\let\@latex@error\@gobble\makeatother
        \begin{algorithm}[H]\small
          \caption{\normalfont$\textsc{LossEstimate}(l, \xbar^t, \vec{y}^t)$\hspace*{-.8cm}}
            \label{algo:ell tilde}\DontPrintSemicolon
            $\tilde{\vec{\ell}}^t \gets \vec{0}\in\Rpp^{|\Sigma|}$\;
            \Subr{\normalfont$\textsc{Traverse}(v, \alpha_v)$}{
                \uIf{$v \in \cK$} {
                    \For{$s\in S_v$}{
                        $\displaystyle\textsc{Traverse}\mleft(\rho(v,s), \frac{\alpha_v}{|S_v|} + \frac{|S_v|-1}{|S_v|}(1 - l)y^t_{p_v}\mright)$\hspace*{-1cm}\;
                    }
                }
                \Else(\Comment*[f]{\color{commentcolor}that is, $v\in\cJ$]\hspace*{-4mm}}){
                    \For{$a \in A_v$}{
                        \uIf{$\rho(v,a) \neq \terminalnode$}{
                            $\displaystyle\vec{\ell}^t_{ja} \gets \frac{y^t_{va}}{{x}^t_{va}} (N_{v} - N_{\rho(v,a)})$\;
                            $\displaystyle\textsc{Traverse}\mleft(\rho(v,a), \frac{{x}_{p_v}}{{x}_{va}}\alpha_v \mright)$\;
                        }\ElseIf{$\rho(v,a) = \terminalnode$}{
                            $\displaystyle\vec{\ell}^t_{ja} \gets \frac{\alpha_v}{{x}^t_{p_v}}+ \frac{y^t_{va}}{{x}^t_{va}}(l + N_v - 1)$\;
                        }
                    }
                }
            }
            \Hline{}
            $\textsc{Traverse}(r, 0)$\Comment*{\color{commentcolor}$r$: root of the decision process]\hspace*{-4mm}}
            \textbf{return} $\tilde{\vec{\ell}}^t$\;
        \end{algorithm}
\end{figure}
The resulting algorithm is \cref{algo:ell tilde}, where we let $l \defeq (\vec{\ell}^t)^\top \vec{y}^t$ denote the bandit feedback at iteration $t$ in accordance to \cref{algo:R}.
\begin{restatable}{proposition}{proplossestimate}\label{prop:loss estimate}
    At all times $t$, the vector $\tilde{\vec{\ell}}^t$ returned by $\textsc{LossEstimate}(l, \xbar^t, \vec{y}^t)$ satisfies~\eqref{eq:relax unbias}. Furthermore, \cref{algo:ell tilde} amounts to a single traversal of the tree structure of the TFSDM problem and runs in linear time in the number of sequences $|\Sigma|$.
\end{restatable}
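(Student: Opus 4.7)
The plan is to reduce the claim to \cref{prop:f tilde}. Concretely, I aim to show that the output of \cref{algo:ell tilde} is of the form $\tilde{\vec{\ell}}^t = [(\vec{\ell}^t)^\top \vec{y}^t]\cdot \mat{C}_*^{t-} \vec{y}^t + \vec{b}_*^t$ for a particular choice of generalized inverse $\mat{C}_*^{t-}$ of the autocorrelation matrix $\mat{C}^t$ and a particular random vector $\vec{b}_*^t$ satisfying $\bbE_t[\vec{b}_*^t]\perp \lin \pure$. Once such an identification is in place, \cref{prop:f tilde} immediately yields the unbiasedness condition~\eqref{eq:relax unbias}, and a book-keeping argument gives the runtime bound.

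First I would specify $\mat{C}_*^{t-}$ and $\vec{b}_*^t$. The plan is to use the sum-of-dyadics decomposition of $\mat{C}^t$ derived in the appendix on autocorrelation matrices, which exposes a block-recursive structure of $\mat{C}^t$ along the tree of decision and observation points. Exploiting this structure, one can construct a particular generalized inverse $\mat{C}_*^{t-}$ inductively in a bottom-up fashion: to each node $v$ one associates a local contribution that depends on $\xbar^t$ only through the conditional distributions $x^t_{va}/x^t_{p_v}$, and the global inverse is assembled additively along paths. This is precisely the information propagated by the accumulator $\alpha_v$ in \cref{algo:ell tilde}: the recursion $\alpha_v \mapsto \tfrac{1}{|S_v|}\alpha_v + \tfrac{|S_v|-1}{|S_v|}(1-l)y^t_{p_v}$ at observation points accumulates the diagonal and off-diagonal contributions from sibling signals at observation points, while the multiplicative update $\alpha_v \mapsto (x_{p_v}/x_{va})\alpha_v$ propagates the conditional-probability rescaling through the tree. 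The residual term $\vec{b}_*^t$ captures the part of the algorithm's output that, in expectation, is constant along each fibre of the constraint system~\eqref{eq:sf}, and hence lies in the orthogonal complement of $\lin \pure$.

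Next I would match the assignments in \cref{algo:ell tilde} entry-by-entry against $[(\vec{\ell}^t)^\top \vec{y}^t] \mat{C}_*^{t-}\vec{y}^t + \vec{b}_*^t$. The two branches of the decision-point case correspond to interior versus terminal children: for $\rho(v,a)\neq\terminalnode$, the assignment $\tilde{\ell}^t_{va}\gets \tfrac{y^t_{va}}{x^t_{va}}(N_v - N_{\rho(v,a)})$ reflects the purely counting contribution from the subtree rooted at $(v,a)$, while for $\rho(v,a)=\terminalnode$ the extra $\tfrac{\alpha_v}{x^t_{p_v}} + \tfrac{y^t_{va}}{x^t_{va}}(l + N_v - 1)$ terms inject both the accumulated $\alpha_v$ and the single observed loss evaluation $l = (\vec{\ell}^t)^\top\vec{y}^t$, exactly as prescribed by the $\mat{C}_*^{t-}\vec{y}^t$ formula plus the offset $\vec{b}_*^t$. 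Verifying $\mat{C}^t \mat{C}_*^{t-}\mat{C}^t = \mat{C}^t$ reduces to checking a local identity at every node, which follows from the explicit description of $\mat{C}^t$ on each block. The orthogonality $\bbE_t[\vec{b}_*^t]\perp \lin\pure$ is then a direct computation using the sequence-form constraints $\sum_{a\in A_j} y_{ja} = y_{p_j}$: the components of $\vec{b}_*^t$ that are added into the entries $\tilde{\ell}^t_{va}$ across a decision point $v$ all have the same value, so their contribution to $\tilde{\vec{\ell}}^t\cdot(\vec{u} - \vec{v})$ vanishes for every $\vec{u},\vec{v}\in\pure$.

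With unbiasedness in hand, the runtime claim is immediate: \textsc{Traverse} visits each node of the decision process exactly once, performing $O(|A_v|)$ work at a decision point $v$ and $O(|S_v|)$ work at an observation point $v$; summed over the tree this is $O\bigl(\sum_{j\in\cJ}|A_j| + \sum_{k\in\cK}|S_k|\bigr) = O(|\Sigma|)$, and each assignment uses only quantities $x^t_{va}, y^t_{va}, N_v, l, \alpha_v$ available in constant time. The main obstacle will be the first step: carrying out the algebraic identification between the recursion computed by \cref{algo:ell tilde} and the explicit closed form of $\mat{C}_*^{t-}\vec{y}^t + \vec{b}_*^t$, which in turn relies on the non-trivial characterization of the autocorrelation matrix of the natural sampling scheme developed separately in the appendix. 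Once that bridge is built, the remaining steps are routine.
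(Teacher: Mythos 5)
Your proposal follows essentially the same route as the paper's proof: the paper also identifies the output of \cref{algo:ell tilde} with $l\cdot\mat{C}_*^{t-}\vec{y}^t+\vec{b}_*^t$ via a bottom-up recursion (the quantity $\vec{h}^t_v(\alpha)$, with $\mat{C}_*^{t-}$ built block-recursively in \cref{prop:inverse convex hull,prop:inverse cartesian} and $\vec{b}_*^t$ defined inductively), then invokes \cref{prop:f tilde} and the same per-node accounting for the linear-time claim. One small imprecision: the orthogonality $\bbE_t[\vec{b}_*^t]\perp\lin\pure$ does not come from the entries of $\vec{b}_*^t$ being equal across a decision point (the non-terminal actions receive $N_v-N_{k_a}$ plus a recursive contribution), but from the inductive identity $\bbE_t[\vec{b}^t_{v,*}]^\top\vec{z}_v=N_v-1$ for all $\vec{z}_v\in\seqf_v$ (\cref{lem:bt}), i.e.\ constancy of the expected inner product over the polytope rather than of the entries themselves.
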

In the special case where the decision process only has one decision point (i.e., the strategy space is a simplex), the loss estimate constructed by \cref{algo:ell tilde} coincides with that of the EXP3 algorithm of~\citet{Auer02:Nonstochastic}. However, for general sequential decision processes, the loss estimate is significantly more complicated and not based on an importance-sampling-based argument anymore.



%
Finally, because of the assumption that $l = (\vec{\ell}^t)\trans \vec{y}^t \in [0,1]$, which can be assumed without loss of generality, the loss estimate constructed as just described is non-negative: $\tilde{\vec{\ell}}^t \in \Rp^{|\Sigma|}$. So, \cref{thm:omd regret bound} applies.

\subsection{Norm of the Loss Estimate}\label{sec:norm of f tilde}
In theory, each entry of $\tilde{\vec{\ell}}^t$ can be arbitrarily large since ${x}_{ja}^t$ can be arbitrarily small. As a consequence, the Euclidean norm $\|\tilde{\vec{\ell}}^t\|_2$ of the loss estimate can be arbitrarily large, even in expectation. This shows the importance of having \cref{eq:omd regret bound} expressed in terms of the \emph{local} norms $\dn{\cdot}{\xbar^t}$ instead of a generic time-invariant norm.
Indeed, it is possible to give guarantees on the expectation of the local dual norm of $\tilde{\vec{\ell}}^t$, as we do in the next theorem.
\begin{restatable}{theorem}{thmexpecteddualnorm}\label{thm:expected dual norm}
    At all times $t$, the loss estimate $\tilde{\vec{\ell}}^t \in \Rp^{|\Sigma|}$ returned by $\textsc{LossEstimate}(l, \xbar^t, \vec{y}^t)$, where $r$ is the root of the sequential decision process, satisfies
    \[
        \bbE_t\mleft[\dn{\tilde{\vec{\ell}}^t}{\xbar^t}^2\mright] \le 4\cdot|\Sigma|^3.
    \]
\end{restatable}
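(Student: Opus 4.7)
The plan is to bound $\|\tilde{\vec{\ell}}^t\|_{\ast,\xbar^t}^2$ by combining three ingredients: (i) a workable upper bound on the inverse-Hessian quadratic form of the dilated entropy DGF, (ii) the closed-form expression for $\tilde{\vec{\ell}}^t$ produced by \cref{algo:ell tilde}, and (iii) the fact that $\vec{y}^t$ is a \emph{pure} sequence-form strategy, so every coordinate lies in $\{0,1\}$ and satisfies $\bbE_t[(y^t_{ja})^2] = \bbE_t[y^t_{ja}] = x^t_{ja}$.

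First, I would invoke the dyadic-sum characterization of $(\nabla^2 \regu(\xbar^t))^{-1}$ promised in the local-norms appendix to establish an inequality of the form
\[
  \|\vec{z}\|_{\ast,\xbar^t}^2 \;\le\; \sum_{j\in\cJ}\frac{1}{w_j}\sum_{a\in A_j} x^t_{ja}\, z_{ja}^2.
\]
This is the dual statement to the fact that, at each decision point $j$, the dilated entropy is a $w_j$-weighted (rescaled) negative entropy block, whose local Hessian $\diag(1/x^t_{ja})$ inverts on-diagonal to $\diag(x^t_{ja})$, with any rank-one corrections only lowering the quadratic form.

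Second, I would substitute $\vec{z} = \tilde{\vec{\ell}}^t$ and split the sum according to whether $\rho(j,a)=\terminalnode$. The interior contribution becomes $\sum_{ja:\rho(j,a)\neq \terminalnode}\frac{x^t_{ja}}{w_j}\cdot\frac{(y^t_{ja})^2}{(x^t_{ja})^2}(N_j - N_{\rho(j,a)})^2$; since $y^t_{ja}\in\{0,1\}$, its conditional expectation collapses to $\sum_{ja}\frac{(N_j-N_{\rho(j,a)})^2}{w_j}$, which is bounded by $\tfrac12|\Sigma|^3$ using $w_j\ge 2$, $N_v\le |\Sigma|$, and a sum over at most $|\Sigma|$ sequences. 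For terminal sequences, I would use $(a+b)^2\le 2a^2+2b^2$ to split $(\tilde{\ell}^t_{ja})^2$ into a $(y^t_{ja}/x^t_{ja})^2(l+N_j-1)^2$ piece (handled identically) plus an $(\alpha_j/x^t_{p_j})^2$ piece.

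The main obstacle, and the substantive part of the argument, is to control $\bbE_t[(\alpha_v/x^t_{p_v})^2]$ at terminal nodes $v$, since the $\alpha_v$ quantity is defined by a top-down recursion that propagates contributions through both observation and decision points. I would proceed by induction on the depth of $v$, showing that $\bbE_t[\alpha_v^2]$ grows only polynomially in $|\Sigma|$ relative to $(x^t_{p_v})^2$: at observation points the update $\alpha_{\rho(v,s)} = \alpha_v/|S_v| + \frac{|S_v|-1}{|S_v|}(1-l)y^t_{p_v}$ is a convex combination that, using $l\in[0,1]$ together with $(y^t_{p_v})^2=y^t_{p_v}$ and $\bbE_t[y^t_{p_v}]=x^t_{p_v}$, preserves the invariant; at decision points the scaling $\alpha_{\rho(v,a)} = (x^t_{p_v}/x^t_{va})\alpha_v$ is compensated in expectation by the probability $x^t_{va}/x^t_{p_v}$ that $\vec{y}^t$ selects action $a$. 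Summing the resulting terminal contribution over the at most $|\Sigma|$ leaves gives a second $O(|\Sigma|^3)$ term, and collecting constants from the inverse-Hessian inequality and the $(a+b)^2$ splits yields the claimed $4|\Sigma|^3$ bound.
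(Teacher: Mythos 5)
There is a genuine gap, and it sits at the very first step. Your proposed inequality
\[
  \dn{\vec{z}}{\xbar^t}^2 \;\le\; \sum_{j\in\cJ}\frac{1}{w_j}\sum_{a\in A_j} x^t_{ja}\, z_{ja}^2
\]
goes in the wrong direction for the vectors that matter here. The paper's dyadic characterization of the inverse Hessian (\cref{lem:inverse hessian}) gives the \emph{exact} identity
\[
  \dn{\vec{z}}{\xbar^t}^{2} \;=\; \sum_{j\in\cJ}\sum_{a\in A_j} \frac{\bigl(\vec{u}_{ja}\trans(\vec{z} \circ \xbar^t)\bigr)^2}{w_j\, x^t_{ja}}
  \;=\; \sum_{j\in\cJ}\sum_{a\in A_j} \frac{1}{w_j\, x^t_{ja}}\Bigl(\sum_{j'a'\,\succeq\, ja} z_{j'a'}\,x^t_{j'a'}\Bigr)^{2},
\]
and since $\tilde{\vec{\ell}}^t \in \Rp^{|\Sigma|}$ and $\xbar^t \in \Rpp^{|\Sigma|}$, each subtree sum dominates its own $ja$ term, so the right-hand side is \emph{at least} $\sum_{ja} x^t_{ja} z_{ja}^2 / w_j$ --- your claimed upper bound is actually a lower bound. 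The intuition that ``rank-one corrections only lower the quadratic form'' fails because the dyadics $(\xbar\circ\vec{u}_{ja})(\xbar\circ\vec{u}_{ja})\trans$ are built from nonnegative subtree indicators, so on the nonnegative orthant they only add mass. Consequently the entire decomposition that follows (entrywise splitting into interior and terminal sequences, each contributing $O(|\Sigma|^3)$) does not bound the quantity in the theorem.

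The real difficulty, which your plan does not engage with, is controlling the squared \emph{subtree aggregates} $\vec{u}_{ja}\trans(\tilde{\vec{\ell}}^t\circ\xbar^t)$. This is where the paper spends its effort: it runs an induction over the tree on the strengthened quantity $\bbE_t[\dn{\tilde{\vec{\ell}}^t_v + \alpha\,\mat{C}^{t-}_{v,\ast}\xbar^t_v}{\xbar^t_v}^2] \le 2N_v^2 + |\Sigma|^2(2 - 1/w_v)N_v$ uniformly in $\alpha\in[0,1]$ (the extra $\alpha\,\mat{C}^{t-}_{v,\ast}\xbar^t_v$ term is exactly what absorbs the $\alpha_v$ propagation from \cref{algo:ell tilde}), and it relies on $\bbE_t[\vec{b}^t_{v,\ast}]\trans\vec{z}_v = N_v - 1$ and $\dn{\mat{C}^{t-}_{v,\ast}\xbar^t_v}{\xbar^t_v}^2 \le 2 - 1/w_v$ to evaluate the subtree inner products. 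Your handling of the $\alpha_v$ recursion is directionally similar in spirit to this parameterized induction, but without a correct starting bound on the dual norm the argument cannot be completed as written.
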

The proof is in \cref{app:f tilde norm} in the full version of this paper. \cref{thm:expected dual norm} is one of the deepest results in this paper. It ties together the sampling scheme (\cref{sec:sampling scheme}), the construction of the loss estimates (\cref{sec:l tilde}), and the geometry of the local norms (\cref{sec:dgf}) induced by the dilated entropy DGF.
It combines properties of the particular choice of generalized inverse $\mat{C}_*^{t-}$ and orthogonal vector $\vec{b}_*^t \perp \lin\pure$ with an inductive argument on the TFSDM problem structure.

    \section{The Full Algorithm}\label{sec:full algo}

We construct our bandit regret minimizer $\mathcal{R}$ (\cref{algo:R}) starting from the full-information regret minimizer $\tilde{\mathcal{R}}$ of \cref{algo:R tilde}. The resulting algorithm is surprisingly easy to implement, and requires only two linear traversals of the decision process per iteration.
The regret ${R}^T(\vec{z})$ of $\mathcal{R}$ is linked to the regret $\tilde{R}^T(\vec{z})$ of $\tilde{\mathcal{R}}$: using the definition of regret and the law of total expectation together with the standard bandit optimization assumption that $\vec{\ell}^t$ is conditionally independent from $\vec{y}^t$, as well as \cref{lem:sample unbiased} and \eqref{eq:relax unbias}, we immediately find that $\bbE[R^T(\vec{z})] = \bbE[\tilde{R}^T(\vec{z})]$
%
\cref{thm:omd regret bound} gives an upper bound for the regret $\tilde{R}^T(\vec{z})$ of $\tilde{\mathcal{R}}$ as a function of the sequence of the loss estimates $\tilde{\vec{\ell}}^1, \dots, \tilde{\vec{\ell}}^T$. Taking expectations in~\cref{eq:omd regret bound} and using 
\cref{thm:expected dual norm},
\begin{align*}
  \bbE[\tilde{R}^T(\vec{z})] &\le \frac{\regu(\vec{z})}{\eta} + \eta\sqrt{3D}\cdot \bbE\mleft[\sum_{t=1}^T \bbE_t\mleft[\dn{\tilde{\vec{\ell}}^t}{\xbar^t}^2\mright]\mright]\\
  &\le \frac{\regu(\vec{z})}{\eta} + 4\eta\, |\Sigma|^3\,\sqrt{3D}\cdot T.
\end{align*}
Setting $\eta = 1/(2|\Sigma|^{\nicefrac{3}{2}}\sqrt{T})$, we obtain the following theorem, which is the bottom-line result of this paper.

\begin{theorem}\label{thm:regret fixed}
  Let $D$ be the maximum depth of any node in the decision process. Then, assuming $(\vec{\ell}^t)\trans \vec{y}^t \in [0,1]$ at all times $t = 1,\dots, T$, the regret $R^T(\vec{z})$ cumulated by \cref{algo:R}
satisfies
\[
  \bbE[R^T(\vec{z})] \le 2(\regu(\vec{z}) + \sqrt{3D})\, |\Sigma|^{\nicefrac{3}{2}}\cdot \sqrt{T}\qquad \forall\,\vec{z} \in \co \pure.
\]
\end{theorem}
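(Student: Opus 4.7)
The plan is to obtain the theorem by chaining together, in order, (i) a reduction from bandit regret to the full-information regret of $\tilde{\mathcal{R}}$, (ii) the local-norm regret bound of \cref{thm:omd regret bound}, (iii) the expected dual-norm bound of \cref{thm:expected dual norm}, and (iv) the choice of $\eta$.

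\textbf{Step 1: Reduce to the full-information regret.} I will show $\bbE[R^T(\vec{z})] = \bbE[\tilde{R}^T(\vec{z})]$. Writing $R^T(\vec{z})=\sum_t (\vec{\ell}^t)\trans(\vec{y}^t - \vec{z})$ and $\tilde{R}^T(\vec{z})=\sum_t (\tilde{\vec{\ell}}^t)\trans(\xbar^t - \vec{z})$, I condition on $\vec{y}^1,\dots,\vec{y}^{t-1}$ for each summand. Since in the bandit setting $\vec{\ell}^t$ may depend on those past decisions but not on $\vec{y}^t$, we have $\bbE_t[(\vec{\ell}^t)\trans\vec{y}^t]=(\vec{\ell}^t)\trans\bbE_t[\vec{y}^t]=(\vec{\ell}^t)\trans\xbar^t$ by the sampling lemma (\cref{lem:sample unbiased}). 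On the other side, $\xbar^t$ is deterministic given the history, so $\bbE_t[(\tilde{\vec{\ell}}^t)\trans(\xbar^t-\vec{z})]=(\bbE_t[\tilde{\vec{\ell}}^t])\trans(\xbar^t-\vec{z})$; since $\xbar^t-\vec{z}\in\lin\pure$ (both live in $\co\pure$), the relaxed unbiasedness~\eqref{eq:relax unbias} yields $(\bbE_t[\tilde{\vec{\ell}}^t])\trans(\xbar^t-\vec{z})=(\vec{\ell}^t)\trans(\xbar^t-\vec{z})$. Taking outer expectations and summing over $t$ gives $\bbE[\tilde{R}^T(\vec{z})]=\bbE[R^T(\vec{z})]$.

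\textbf{Step 2: Apply the mirror-descent bound and the norm bound.} Because $(\vec{\ell}^t)\trans\vec{y}^t\in[0,1]$ by assumption, the output of \cref{algo:ell tilde} is nonnegative (as noted just above the statement of \cref{thm:expected dual norm}), so the hypothesis $\tilde{\vec{\ell}}^t\in\Rp^{|\Sigma|}$ of \cref{thm:omd regret bound} is satisfied. That theorem gives
\[
  \tilde{R}^T(\vec{z}) \le \frac{\regu(\vec{z})}{\eta} + \eta\sqrt{3D}\sum_{t=1}^T \dn{\tilde{\vec{\ell}}^t}{\xbar^t}^2.
\]
Taking expectations, pulling out the conditional expectation $\bbE_t[\,\cdot\,]$ inside the sum, and invoking \cref{thm:expected dual norm} to replace each term by $4|\Sigma|^3$, I obtain
\[
  \bbE[\tilde{R}^T(\vec{z})] \le \frac{\regu(\vec{z})}{\eta} + 4\eta\,|\Sigma|^3\sqrt{3D}\cdot T.
\]

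\textbf{Step 3: Optimize $\eta$.} Setting $\eta = 1/(2|\Sigma|^{3/2}\sqrt{T})$ balances the two terms: the first becomes $2\regu(\vec{z})|\Sigma|^{3/2}\sqrt{T}$ and the second becomes $2\sqrt{3D}\,|\Sigma|^{3/2}\sqrt{T}$. Combining these with the identity from Step~1 yields
\[
  \bbE[R^T(\vec{z})] \le 2\bigl(\regu(\vec{z})+\sqrt{3D}\bigr)|\Sigma|^{3/2}\sqrt{T},
\]
as required.

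\textbf{Anticipated difficulty.} The argument at this stage is essentially bookkeeping: everything substantial has been established in the preceding sections (the local-norm regret bound, the relaxed-unbiasedness identity, and the $O(|\Sigma|^3)$ bound on the expected squared dual norm). The only subtle point is Step~1, where it is important to use the \emph{relaxed} unbiasedness~\eqref{eq:relax unbias} rather than a stronger unbiasedness claim for $\tilde{\vec{\ell}}^t$; this is legitimate precisely because the vector $\xbar^t-\vec{z}$ lies in $\lin\pure$, so no component of $\tilde{\vec{\ell}}^t$ orthogonal to that subspace can contribute, and the auxiliary offset $\vec{b}^t_*$ from \cref{prop:f tilde} is absorbed.
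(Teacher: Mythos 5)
Your proposal is correct and follows essentially the same route as the paper: the paper likewise establishes $\bbE[R^T(\vec{z})] = \bbE[\tilde{R}^T(\vec{z})]$ via the law of total expectation, the conditional independence of $\vec{\ell}^t$ from $\vec{y}^t$, \cref{lem:sample unbiased}, and~\eqref{eq:relax unbias}, then takes expectations in~\cref{eq:omd regret bound}, applies \cref{thm:expected dual norm}, and sets $\eta = 1/(2|\Sigma|^{\nicefrac{3}{2}}\sqrt{T})$. Your observation in Step~1 that $\xbar^t - \vec{z} \in \lin\pure$ is exactly the reason the relaxed unbiasedness suffices, and your arithmetic in Step~3 matches the stated bound.
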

\cref{thm:regret fixed} shows that the expected regret cumulated by our algorithm is $O(\sqrt{T})$. This improves on the algorithm of~\citet{Abernethy08:Competing}, whose regret is $O(\sqrt{T\log T})$ and that assumes that $T \ge |\Sigma|$.
We conclude this section with a word of caution. Our algorithm, like the one of~\citet{Abernethy08:Competing}, guarantees that $\max_{\vec{z}\in\seqf} \bbE[R^T(\vec{z})]$ is small, but \emph{not} that $\bbE[\max_{\vec{z}\in\seqf} R^T(\vec{z})]$ is small. Depending on the application, this may or may not be sufficient. This limitation is well known (e.g.,~\citep{Abernethy09:Beating}) and is one of the main drivers behind regret minimizers that provide high-probability regret bounds. In the conclusions we will discuss how the techniques of the present paper are relevant toward that effort.

    \section{Experimental Evaluation}

We implemented our bandit regret minimizer (\cref{algo:R}) and the algorithm of \citet{Abernethy08:Competing} (using a logarithmic barrier) (from now on, denoted AHR), which is, to our knowledge, the only prior algorithm that is known to guarantee $\tilde{O}(\sqrt{T})$ regret and polynomial-time iterations in the bandit optimization setting. We compared them on four domains: a simple $2\times 3$ matrix game (which is an instance of an TFSDM problem with no observation points and only one decision node), and three standard extensive-form games in the computational game theory literature, namely Kuhn poker~\citep{Kuhn50:Simplified}, 3-rank Goofspiel\cite{Ross71:Goofspiel}, and Leduc poker~\citep{Southey05:Bayes}. The sequential decision making problem faced by the first player has 13 sequences in Kuhn poker, 262 sequences in Goofspiel, and 337 sequences in Leduc poker. A complete description of those games is available in \cref{app:games} in the full version of this paper.
 The two algorithms face the same strong opponent that at each iteration plays according to a fixed strategy $\bar{\vec{s}}$ that is part of a Nash equilibrium of the game. For our method, we use the theoretical step size multiplied by $5$, while for the method of \citet{Abernethy08:Competing} we multiply their step size parameter by $2$; these changes do not affect the theoretical guarantees but improved the practical performances of both algorithms.
\cref{fig:experiments} shows the regret of the algorithms compared to always playing the best-response strategy against $\bar{\vec{s}}$.

We also report the empirical performance of online MCCFR (as proposed in a side note by \citet{Lanctot09:Monte}) although, as we discussed at length in the introduction, online MCCFR---unlike the other two algorithms--- 1) is not an algorithm for the bandit optimization setting (as discussed in the introduction, it also needs to observe the actions of the opponent, while our algorithm and AHR do not; in the experiment we give online MCCFR that additional benefit), and 2) does not have a known guarantee of sublinear regret in this setting.
We ran each algorithm $100$ times. \cref{fig:experiments} shows these runs with thin light-colored lines. For the non-anytime algorithms (ours and AHR), we divided the desired runtime (e.g., 3 hours in Leduc poker) by the average time per iteration of each algorithm. We also plot the average regret across the $100$ runs of each algorithm with a thick line.

In all games, our method yielded lower regret than AHR at all times. In the matrix game and Kuhn poker, our algorithm converges to a smaller regret than online MCCFR, despite the fact that we do not get to observe the path of play and online MCCFR does. In the larger and deeper games (Goofspiel and Leduc poker), our algorithm still clearly exhibits the guaranteed $O(\sqrt{T})$ cumulative regret, but has higher regret than online MCCFR. Empirically, the regret cumulated by AHR seems to match the theoretical $\tilde{O}(\sqrt{T})$  guarantee well in the small games, but not in Goofspiel and Leduc poker. The reason for this is twofold. First, the runtime cost of each iteration of AHR in those latter games is roughly three orders of magnitude higher than either our method or online MCCFR. A significant fraction (roughly $20$\%) of that runtime is due to the fact that AHR needs to compute an eigendecomposition of a Hessian matrix of the log-barrier at the current point, an expensive operation whose overhead grows roughly cubically with the number of sequences in the game. We use the Eigen library \citep{Gael18:Eigen} to compute the eigendecomposition at each time $t$. Due to  this reason, AHR  performs significantly fewer iterations in the allotted time (three hours). The second issue that we identified with AHR is that it tends to suffer from serious numerical difficulties as the size of the TFSDM problem grows.

\begin{figure}[t]
      \centering
      \begin{tikzpicture}
          \node[] at (0,0) {\includegraphics[scale=.78]{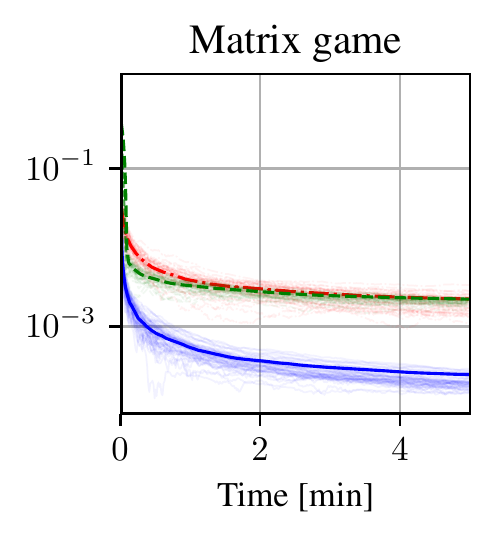}};
          \node[] at (4,0) {\includegraphics[scale=.78]{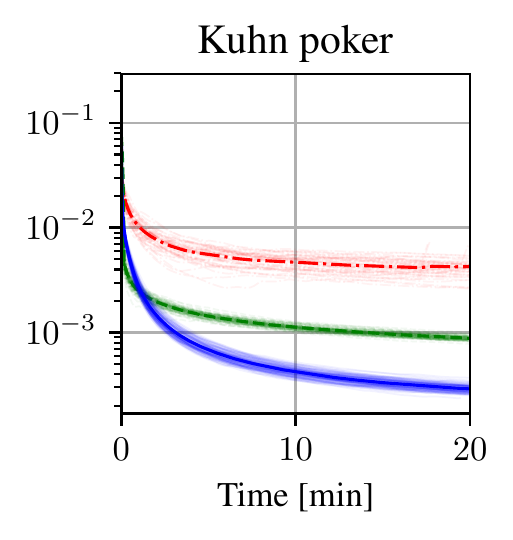}};
          \node[] at (0,-4.1){\includegraphics[scale=.78]{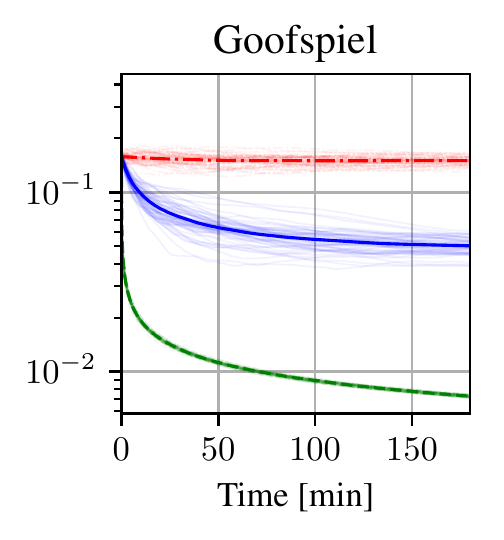}};
          \node[] at (4,-4.1){\includegraphics[scale=.78]{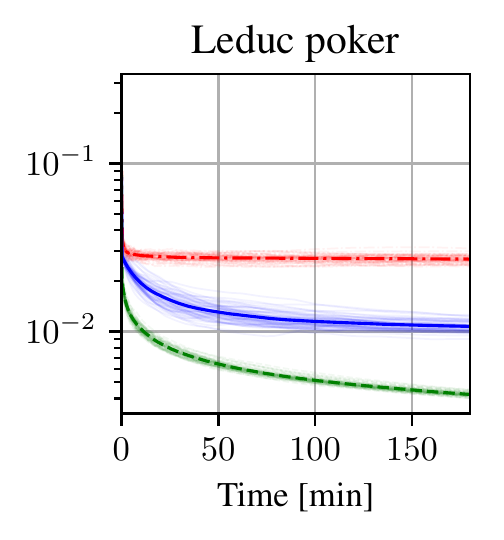}};
          \node[rotate=90] at (-2.2,0) {\small Average regret};
          \node[rotate=90] at (-2.2,-4.1) {\small Average regret};
      \end{tikzpicture}
      \vspace{-2mm}
      \newcommand{\linesty}[1]{\raisebox{.7mm}{\tikz\draw[very thick,#1] (0,0)--(.55,0);}}
        \definecolor{pred}{HTML}{FF0000}
        \definecolor{pblue}{HTML}{0000FF}
        \definecolor{pgreen}{HTML}{008000}
        \definecolor{ppurple}{HTML}{800080}
                \parbox{0.8\linewidth}{\centering\small
                \linesty{pred,dash dot}~AHR \citep{Abernethy08:Competing}\\
                \linesty{pblue}~Our method \qquad
                \linesty{pgreen,dashed}~Online MCCFR
            }
            \vspace{1mm}
      \caption{Evolution of the average regret in different bandit linear optimization algorithms (AHR and ours), as well as the online MCCFR algorithm (not an algorithm for bandit linear optimization).}
      \label{fig:experiments}
      \vspace{-3mm}
\end{figure}

    \section{Conclusions and Future Research}\label{sec:conclusion}

In this paper, we developed an algorithm for the bandit linear optimization on tree-form sequential decision making (TFSDM) problems.
Our bandit regret minimizer is superior to that of \citet{Abernethy08:Competing} both computationally (each iteration runs in linear time in the number of sequences in the problem) and in terms of cumulated regret (the regret is $O(\sqrt{T})$ instead of $O(\sqrt{T\log T})$.
We also presented the first implementations of bandit optimization for TFSDM. 
Our method combines and contributes a number of ideas and tools. First, we gave several new results concerning the local analytic properties of the dilated entropy regularizer (the leading regularizer for TFSDM). We use those analytic properties to obtain a stronger regret bound for the online mirror descent algorithm instantiated with the dilated entropy regularizer. Second, we study several properties of the natural sampling scheme for sequence-form strategies. Those properties are key to efficiently constructing an unbiased estimator of the loss vector $\vec{\ell}^t$ starting from the loss evaluation $(\vec{\ell}^t)\trans \vec{y}^t$ at a pure strategy $\vec{y}^t$. In order to construct the unbiased estimator, we extended and generalized an argument by~\citet{Bartlett08:High} to our context.
Finally, we combined the stronger regret bound for mirror descent together with the unbiased loss estimator to construct our bandit regret minimizer, by showing that the unbiased loss estimator has a dual norm that is bounded by a small time-independent constant.

A known weakness in~\citet{Abernethy08:Competing}, which is also a weakness in our approach, is that the bound on regret
(i) only holds in expectation as opposed to high probability, and
(ii) provides a guarantee on $\max_{\vec{z}\in\seqf} \bbE[R^T(\vec{z})]$ but \emph{not} on  $\bbE[\max_{\vec{z}\in\seqf} R^T(\vec{z})]$.
As discussed in the introduction, this weakness can be eliminated in theory if iterations are allowed to take exponential time in the number of sequences or by accepting a slower convergence rate.
Due to this weakness, our approach and that of~\citet{Abernethy08:Competing} sometimes does not work for equilibrium finding in games through self play.

We leave the problem of designing an algorithm for the bandit linear optimization problem for TFSDM that guarantees both $O(\sqrt{T})$ regret \emph{with high probability} and linear-time iterations as an open future direction. This would solve (i) and thereby also (ii). 
\citet{Abernethy09:Beating} presented a template for deriving such algorithms, but several pieces therein need to be instantiated to complete the proof of bounds. The theory in this paper offers solutions for some of those pieces for general TFSDM problems. Our regularizer, the sampling scheme, the construction of the loss estimates, and the use of local norms can be used within that general framework to provide high-probability results. So, our results may help solve the open problem for TFSDM in the future.

    \section*{Acknowledgments}
    This material is based on work supported by the National Science Foundation under grants IIS-1718457, IIS-1901403, and CCF-1733556, and the ARO under award W911NF2010081. Gabriele Farina is supported by a Facebook fellowship.

%


    \bibliography{dairefs}

\begin{thebibliography}{41}
\providecommand{\natexlab}[1]{#1}
\providecommand{\url}[1]{\texttt{#1}}
\providecommand{\urlprefix}{URL }
\expandafter\ifx\csname urlstyle\endcsname\relax
  \providecommand{\doi}[1]{doi:\discretionary{}{}{}#1}\else
  \providecommand{\doi}{doi:\discretionary{}{}{}\begingroup
  \urlstyle{rm}\Url}\fi

\bibitem[{Abernethy, Hazan, and Rakhlin(2008)}]{Abernethy08:Competing}
Abernethy, J.; Hazan, E.; and Rakhlin, A. 2008.
\newblock Competing in the dark: An efficient algorithm for bandit linear
  optimization.
\newblock In \emph{In Proceedings of the 21st Annual Conference on Learning
  Theory (COLT)}.

\bibitem[{Abernethy and Rakhlin(2009)}]{Abernethy09:Beating}
Abernethy, J.~D.; and Rakhlin, A. 2009.
\newblock Beating the adaptive bandit with high probability.
\newblock \emph{2009 Information Theory and Applications Workshop} .

\bibitem[{Audibert, Bubeck, and Lugosi(2014)}]{Audibert14:Regret}
Audibert, J.-Y.; Bubeck, S.; and Lugosi, G. 2014.
\newblock Regret in online combinatorial optimization.
\newblock \emph{Mathematics of Operations Research} 39(1): 31--45.

\bibitem[{Auer et~al.(2002)Auer, Cesa-Bianchi, Freund, and
  Schapire}]{Auer02:Nonstochastic}
Auer, P.; Cesa-Bianchi, N.; Freund, Y.; and Schapire, R.~E. 2002.
\newblock The Nonstochastic Multiarmed Bandit Problem.
\newblock \emph{SIAM Journal of Computing} 32: 48--77.

\bibitem[{Awerbuch and Kleinberg(2004)}]{Awerbuch04:Adaptive}
Awerbuch, B.; and Kleinberg, R.~D. 2004.
\newblock Adaptive routing with end-to-end feedback: distributed learning and g
  eometric approaches.
\newblock In \emph{Proceedings of the Annual Symposium on Theory of Computing
  (STOC)}. {ACM}.

\bibitem[{Bartlett et~al.(2008)Bartlett, Dani, Hayes, Kakade, Rakhlin, and
  Tewari}]{Bartlett08:High}
Bartlett, P.~L.; Dani, V.; Hayes, T.; Kakade, S.; Rakhlin, A.; and Tewari, A.
  2008.
\newblock High-probability regret bounds for bandit online linear optimization.
\newblock In \emph{Conference on Learning Theory (COLT)}.

\bibitem[{Borwein and Lewis(2010)}]{Borwein10:Convex}
Borwein, J.; and Lewis, A.~S. 2010.
\newblock \emph{Convex analysis and nonlinear optimization: theory and
  examples}.
\newblock Springer Science \& Business Media.

\bibitem[{Bowling et~al.(2015)Bowling, Burch, Johanson, and
  Tammelin}]{Bowling15:Heads}
Bowling, M.; Burch, N.; Johanson, M.; and Tammelin, O. 2015.
\newblock Heads-up Limit Hold'em Poker is Solved.
\newblock \emph{Science} 347(6218).

\bibitem[{Braun and Pokutta(2016)}]{Braun16:Efficient}
Braun, G.; and Pokutta, S. 2016.
\newblock An efficient high-probability algorithm for Linear Bandits.
\newblock \emph{ArXiv e-prints} .

\bibitem[{Brown, Kroer, and Sandholm(2017)}]{Brown17:Dynamic}
Brown, N.; Kroer, C.; and Sandholm, T. 2017.
\newblock Dynamic Thresholding and Pruning for Regret Minimization.
\newblock In \emph{AAAI Conference on Artificial Intelligence (AAAI)}.

\bibitem[{Brown and Sandholm(2015)}]{Brown15:Regret}
Brown, N.; and Sandholm, T. 2015.
\newblock Regret-Based Pruning in Extensive-Form Games.
\newblock In \emph{Proceedings of the Annual Conference on Neural Information
  Processing Systems (NIPS)}.

\bibitem[{Brown and Sandholm(2017{\natexlab{a}})}]{Brown17:Reduced}
Brown, N.; and Sandholm, T. 2017{\natexlab{a}}.
\newblock Reduced Space and Faster Convergence in Imperfect-Information Games
  via Pruning.
\newblock In \emph{International Conference on Machine Learning (ICML)}.

\bibitem[{Brown and Sandholm(2017{\natexlab{b}})}]{Brown17:Safe}
Brown, N.; and Sandholm, T. 2017{\natexlab{b}}.
\newblock Safe and nested subgame solving for imperfect-information games.
\newblock In \emph{Proceedings of the Annual Conference on Neural Information
  Processing Systems (NIPS)}.

\bibitem[{Brown and Sandholm(2017{\natexlab{c}})}]{Brown17:Superhuman}
Brown, N.; and Sandholm, T. 2017{\natexlab{c}}.
\newblock Superhuman {AI} for heads-up no-limit poker: {Libratus} beats top
  professionals.
\newblock \emph{Science} 359: 418--424.

\bibitem[{Brown and Sandholm(2019{\natexlab{a}})}]{Brown19:Solving}
Brown, N.; and Sandholm, T. 2019{\natexlab{a}}.
\newblock Solving imperfect-information games via discounted regret
  minimization.
\newblock In \emph{AAAI Conference on Artificial Intelligence (AAAI)}.

\bibitem[{Brown and Sandholm(2019{\natexlab{b}})}]{Brown19:Superhuman}
Brown, N.; and Sandholm, T. 2019{\natexlab{b}}.
\newblock Superhuman {AI} for multiplayer poker.
\newblock \emph{Science} 365: 885--890.

\bibitem[{Bubeck, Lee, and Eldan(2017)}]{Bubeck17:Kernel}
Bubeck, S.; Lee, Y.~T.; and Eldan, R. 2017.
\newblock Kernel-based methods for bandit convex optimization.
\newblock In \emph{Proceedings of the Annual Symposium on Theory of Computing
  (STOC)}, 72--85.

\bibitem[{Cesa-Bianchi and Lugosi(2006)}]{Cesa06:Prediction}
Cesa-Bianchi, N.; and Lugosi, G. 2006.
\newblock \emph{Prediction, learning, and games}.
\newblock Cambridge university press.

\bibitem[{Dani, Kakade, and Hayes(2008)}]{Dani08:Price}
Dani, V.; Kakade, S.~M.; and Hayes, T.~P. 2008.
\newblock The Price of Bandit Information for Online Optimization.
\newblock In \emph{Proceedings of the Annual Conference on Neural Information
  Processing Systems (NIPS)}.

\bibitem[{Farina, Kroer, and Sandholm(2019)}]{Farina19:Online}
Farina, G.; Kroer, C.; and Sandholm, T. 2019.
\newblock Online Convex Optimization for Sequential Decision Processes and
  Extensive-Form Games.
\newblock In \emph{AAAI Conference on Artificial Intelligence (AAAI)}.

\bibitem[{Guennebaud, Jacob et~al.(2018)}]{Gael18:Eigen}
Guennebaud, G.; Jacob, B.; et~al. 2018.
\newblock Eigen v3.3.3.
\newblock http://eigen.tuxfamily.org.

\bibitem[{Gy{\"o}rgy et~al.(2007)Gy{\"o}rgy, Linder, Lugosi, and
  Ottucs{\'a}k}]{Gyorgy07:Line}
Gy{\"o}rgy, A.; Linder, T.; Lugosi, G.; and Ottucs{\'a}k, G. 2007.
\newblock The on-line shortest path problem under partial monitoring.
\newblock \emph{Journal of Machine Learning Research} 8(Oct): 2369--2403.

\bibitem[{Hazan and Li(2016)}]{Hazan16:Optimal}
Hazan, E.; and Li, Y. 2016.
\newblock An optimal regret algorithm for bandit convex optimization.
\newblock \emph{ArXiv e-prints} .

\bibitem[{Hoda et~al.(2010)Hoda, Gilpin, Pe{\~n}a, and
  Sandholm}]{Hoda10:Smoothing}
Hoda, S.; Gilpin, A.; Pe{\~n}a, J.; and Sandholm, T. 2010.
\newblock Smoothing Techniques for Computing {N}ash Equilibria of Sequential
  Games.
\newblock \emph{Mathematics of Operations Research} 35(2).

\bibitem[{Kale, Reyzin, and Schapire(2010)}]{Kale10:Non}
Kale, S.; Reyzin, L.; and Schapire, R.~E. 2010.
\newblock Non-stochastic bandit slate problems.
\newblock In \emph{Advances in Neural Information Processing Systems},
  1054--1062.

\bibitem[{Koller, Megiddo, and {von Stengel}(1994)}]{Koller94:Fast}
Koller, D.; Megiddo, N.; and {von Stengel}, B. 1994.
\newblock Fast algorithms for finding randomized strategies in game trees.
\newblock In \emph{Proceedings of the 26\textsuperscript{th} {ACM} {S}ymposium
  on {T}heory of {C}omputing ({STOC})}.

\bibitem[{Kroer et~al.(2020)Kroer, Waugh, K{\i}l{\i}n{\c{c}}-Karzan, and
  Sandholm}]{Kroer18:Faster}
Kroer, C.; Waugh, K.; K{\i}l{\i}n{\c{c}}-Karzan, F.; and Sandholm, T. 2020.
\newblock Faster algorithms for extensive-form game solving via improved
  smoothing functions.
\newblock \emph{Mathematical Programming\!} .

\bibitem[{Kroer et~al.(2015)Kroer, Waugh, K{\i}l{\i}n\c{c}-Karzan, and
  Sandholm}]{Kroer15:Faster}
Kroer, C.; Waugh, K.; K{\i}l{\i}n\c{c}-Karzan, F.; and Sandholm, T. 2015.
\newblock Faster First-Order Methods for Extensive-Form Game Solving.
\newblock In \emph{Proceedings of the ACM Conference on Economics and
  Computation (EC)}.

\bibitem[{Kroer et~al.(2017)Kroer, Waugh, K{\i}l{\i}n\c{c}-Karzan, and
  Sandholm}]{Kroer17:Theoretical}
Kroer, C.; Waugh, K.; K{\i}l{\i}n\c{c}-Karzan, F.; and Sandholm, T. 2017.
\newblock Theoretical and Practical Advances on Smoothing for Extensive-Form
  Games.
\newblock In \emph{Proceedings of the ACM Conference on Economics and
  Computation (EC)}.

\bibitem[{Kuhn(1950)}]{Kuhn50:Simplified}
Kuhn, H.~W. 1950.
\newblock A Simplified Two-Person Poker.
\newblock In Kuhn, H.~W.; and Tucker, A.~W., eds., \emph{Contributions to the
  Theory of Games}, volume~1 of \emph{Annals of Mathematics Studies, 24},
  97--103. Princeton, New Jersey: Princeton University Press.

\bibitem[{Lanctot et~al.(2009)Lanctot, Waugh, Zinkevich, and
  Bowling}]{Lanctot09:Monte}
Lanctot, M.; Waugh, K.; Zinkevich, M.; and Bowling, M. 2009.
\newblock {M}onte {C}arlo Sampling for Regret Minimization in Extensive Games.
\newblock In \emph{Proceedings of the Annual Conference on Neural Information
  Processing Systems (NIPS)}.

\bibitem[{Ling, Fang, and Kolter(2019)}]{Ling19:Large}
Ling, C.~K.; Fang, F.; and Kolter, J.~Z. 2019.
\newblock Large Scale Learning of Agent Rationality in Two-Player Zero-Sum
  Games.
\newblock In \emph{AAAI Conference on Artificial Intelligence (AAAI)}.

\bibitem[{Neu and Bart{\'o}k(2013)}]{Neu13:Efficient}
Neu, G.; and Bart{\'o}k, G. 2013.
\newblock An efficient algorithm for learning with semi-bandit feedback.
\newblock In \emph{International Conference on Algorithmic Learning Theory},
  234--248. Springer.

\bibitem[{Rakhlin(2009)}]{Rakhlin09:Lecture}
Rakhlin, A. 2009.
\newblock Lecture Notes on Online Learning.
\newblock Available at
  \url{http://www-stat.wharton.upenn.edu/~rakhlin/courses/stat991/papers/lecture_notes.pdf}.

\bibitem[{Romanovskii(1962)}]{Romanovskii62:Reduction}
Romanovskii, I. 1962.
\newblock Reduction of a Game with Complete Memory to a Matrix Game.
\newblock \emph{Soviet Mathematics} 3.

\bibitem[{Ross(1971)}]{Ross71:Goofspiel}
Ross, S.~M. 1971.
\newblock Goofspiel—the game of pure strategy.
\newblock \emph{Journal of Applied Probability} 8(3): 621--625.

\bibitem[{Shalev-Shwartz(2012)}]{Shalev-Shwartz12:Online}
Shalev-Shwartz, S. 2012.
\newblock Online Learning and Online Convex Optimization.
\newblock \emph{Foundations and Trends® in Machine Learning} 4(2).
\newblock ISSN 1935-8237.
\newblock \doi{10.1561/2200000018}.

\bibitem[{Southey et~al.(2005)Southey, Bowling, Larson, Piccione, Burch,
  Billings, and Rayner}]{Southey05:Bayes}
Southey, F.; Bowling, M.; Larson, B.; Piccione, C.; Burch, N.; Billings, D.;
  and Rayner, C. 2005.
\newblock {Bayes}' Bluff: Opponent Modelling in Poker.
\newblock In \emph{Proceedings of the 21st Annual Conference on Uncertainty in
  Artificial Intelligence (UAI)}.

\bibitem[{Tammelin et~al.(2015)Tammelin, Burch, Johanson, and
  Bowling}]{Tammelin15:Solving}
Tammelin, O.; Burch, N.; Johanson, M.; and Bowling, M. 2015.
\newblock Solving Heads-up Limit {T}exas Hold'em.
\newblock In \emph{Proceedings of the 24th International Joint Conference on
  Artificial Intelligence (IJCAI)}.

\bibitem[{{von Stengel}(1996)}]{Stengel96:Efficient}
{von Stengel}, B. 1996.
\newblock Efficient Computation of Behavior Strategies.
\newblock \emph{Games and Economic Behavior} 14(2): 220--246.

\bibitem[{Zinkevich et~al.(2007)Zinkevich, Bowling, Johanson, and
  Piccione}]{Zinkevich07:Regret}
Zinkevich, M.; Bowling, M.; Johanson, M.; and Piccione, C. 2007.
\newblock Regret Minimization in Games with Incomplete Information.
\newblock In \emph{Proceedings of the Annual Conference on Neural Information
  Processing Systems (NIPS)}.

\end{thebibliography}

\iftrue
    \clearpage
    \onecolumn
    \leftlinenumbers

    \appendix

    \addtolength{\hoffset}{1.25cm}
    \addtolength{\textwidth}{-2.5cm}
    \addtolength{\hsize}{-2.5cm}
    \addtolength{\linewidth}{-2.5cm}
    \addtolength{\columnwidth}{-2.5cm}

    \doparttoc\faketableofcontents\mtcaddpart[Appendix]
    \begin{center}
        \bfseries\LARGE Appendix
        \vskip 5mm
    \end{center}
    \renewcommand{\ptctitle}{}
    \parttoc
        In order to simplify the proofs, we will assume that no two nodes of the same type immediately follow each other. This assumption does not come at the cost of generality, since two consecutive decision points can always be consolidated into an equivalent one by combining their actions. Similarly, two consecutive observation points can be consolidated into an equivalent one by combining the available signals. Because of this assumption, we denote the set of all observation points that are immediately reachable after decision point $j$ as $\next{j} \defeq \{\rho(j,a): a \in A_j\} \setminus \{\terminalnode\}$. Similarly, the set of all decision points that are immediately reachable after observation point $k$ is $\next{k} \defeq \{\rho(k,s): s \in S_k\}$. We also assume without loss of generality that no signal is terminal, that is $\rho(k,s) \neq \terminalnode$. This assumption also does not come at the cost of generality, since it is always possible to attach a decision node with a single terminal action to simulate a terminal signal.

        \section{Additional Notation for Sequential Decision Processes}
        We now introduce additional notation:

        \paragraph{Number of decision nodes in a subtree ($N_v$).} The symbol $N_v$ denotes the number of decision points in the subtree rooted at $v$. If $v$ itself is a decision point, $v$ is included in the count.


        \paragraph{Descendants ($\succeq$).} A partial order $\succeq$ can be established on $\Sigma$ as follows: given two sequences $ja$ and $j'a'$ in $\Sigma$, $j'a' \succeq ja$ if and only if the (unique) path from the root node of the decision process to action $a'$ at decision point $j'$ passes through action $a$ at decision point $j$. Whenever $j'a' \succeq ja$, we say that $j'a'$ is a \emph{descendant} of $ja$.

        \paragraph{Subtree indicator ($\vec{u}_{ja}$).} Given a sequence $ja \in \Sigma$, we denote with $\vec{u}_{ja}$ the vector such that $(\vec{u}_{ja})_{j'a'} = 1$ if $j'a' \succeq ja$, and $(\vec{u}_{ja})_{j'a'} = 0$ otherwise.

        \paragraph{Parent sequence $(p_j)$.} Given a decision point $j \in \cJ$, we denote with $p_j$ its \emph{parent sequence}, defined as the last sequence (that is, decision point-action pair) encountered on the path from the root of the tree to decision point $j$. If the agent does not act before $j$ (i.e., $j$ is the root of the decision process or only observation points are encountered on the path from the root to $j$), we let $p_j = \emptyseq$.

    \subsection{Inductive Definition of $\pure$}
        The set $\pure$ can be constructed recursively in a bottom-up fashion, as follows:
    \begin{itemize}[nolistsep,itemsep=1mm,leftmargin=8mm]
        \item At each terminal decision point $j \in \cJ$, where actions $\{a_1, \dots, a_m\}$ are terminal (that is, $\rho(j,a)=\terminalnode$) and actions $\{a_{m+1}, \dots, a_n\}$ are not, the set of pure strategies is the set
      \begin{align}
        \pure_j \defeq& \mleft\{\begin{pmatrix}\vec{e}_i\\\vec{0}\\\vdots\\\vec{0}\end{pmatrix}: i =1 ,\dots, n\mright\} \cup
        \mleft\{\begin{pmatrix}\vec{e}_{m+1}\\\vec{x}_{k_{m+1}}\\\vdots\\\vec{0}\end{pmatrix} : \vec{x}_{k_{m+1}} \in \pure_{k_{m+1}} \mright\} \cup \dots \cup
        \mleft\{\begin{pmatrix}\vec{e}_{n}\\\vec{0}\\\vdots\\\vec{x}_{k_{n}}\end{pmatrix} : \vec{x}_{k_n} \in \pure_{k_n} \mright\},\label{eq:pure j}
      \end{align}
            where $\vec{e}_i$ the $i$-th canonical basis vector. So, in particular, if all actions at $j$ are terminal,
            \[
                \pure_j = \{\vec{e}_1,\dots,\vec{e}_n\}.
            \]
        \item At each observation point $k \in \mathcal{K}$, the set of pure strategies is simply the Cartesian product of strategies for each of the child subtrees:
            \begin{equation}\label{eq:pure k}
                \pure_k \defeq \pure_{j_1} \times \dots \times \pure_{j_n},
            \end{equation}
      where $\{j_1, \dots, j_n\} = \cC_k$ are the decision points immediately reachable after $k$.
    \end{itemize}

    \subsection{Inductive Definition of $\seqf$}
        The set of mixed sequence-form strategies can also equivalently constructed inductively along the tree structure:
    \begin{itemize}[nolistsep,itemsep=1mm,leftmargin=8mm]
          \item At each terminal decision point $j \in \cJ$, where actions $\{a_1, \dots, a_m\}$ are terminal (that is, $\rho(j,a)=\terminalnode$) and actions $\{a_{m+1}, \dots, a_n\}$ are not, we first fix a distributions over the actions in $A_j$ and then recurse:
      \begin{align}
        \seqf_j &\defeq \mleft\{\begin{pmatrix}\lambda_1\\\dots\\\lambda_n\\\lambda_{m+1} \vec{x}_{k_{m+1}}\\\vdots\\\lambda_n \vec{x}_{k_n}\end{pmatrix}: (\lambda_1,\dots,\lambda_n) \in \Delta^n, \vec{x}_i \in \seqf_{k_i}\ \forall i = m+1,\dots,n\mright\},\label{eq:structure decision point}
      \end{align}
      where $\{k_{m+1}, \dots, k_n\} = \cC_j$ are the observation points immediately reachable after $j$. So, in particular, if all actions at $j$ are terminal, we have
            \begin{equation}\label{eq:structure simplex}
                \seqf_j \defeq \Delta^{|A_j|}.
            \end{equation}
        \item At each observation point $k \in \mathcal{K}$, the set of mixed strategies is the Cartesian product of mixed strategies for each of the child subtrees:
            \begin{equation}\label{eq:structure observation point}
                \seqf_k \defeq \seqf_{j_1} \times \dots \times \seqf_{j_n},
            \end{equation}
      where $\{j_1, \dots, j_n\} = \cC_k$ are the decision points immediately reachable after $k$.
    \end{itemize} 
    \newpage
    \section{Properties of the Dilated Entropy Distance-Generating Function}
\label{app:dgf}

\subsection{Preliminaries}

\textbf{Gradient Computation}\quad We look at the computation of the gradient of $\regu$ at a generic point $\vec{z}\in\Rpp^{|\Sigma|}$. Some elementary algebra reveals that
\begin{align}
  \frac{\partial \regu}{\partial z_{ja}}(\vec{z}) &= w_j\mleft(1 + \log \frac{{z}_{ja}}{{z}_{p_j}}\mright) + \sum_{j'\in\next{\rho(j,a)}}\hspace{-3mm}w_{j'}\mleft(\log |A_{j'}| - \sum_{a'\in A_{j'}}\!\!\frac{{z}_{j'a'}}{{z}_{ja}}\mright)\label{eq:gradient}
\end{align}
for every decision point-action pair $ja \in \Sigma$.
Hence, we can compute $\nabla\regu(\vec{z})$ at any $\vec{z}\in\Rpp^{|\Sigma|}$ in one linear-time traversal of the sequential decision tree as in \cref{algo:gradient}.

\textbf{Fenchel Arg-Conjugate}\quad The Fenchel conjugate of $\regu$ on $\co \pure$ is defined as
\[
  \regu^*: \vec{z} \mapsto \max_{\hat{\vec{x}} \in \co \pure \cap \Rpp^{|\Sigma|}}\{\vec{z}\trans \hat{\vec{x}} - \regu(\hat{\vec{x}})\}
\]
for any $\vec{z} \in \bbR^{|\Sigma|}$. Since the domain of the maximization is not compact, it is not \emph{a priori} obvious that $\regu^*$ is well defined for all $\vec{z}$. Existence of the solution to the maximization problem can be easily proved by exhibiting a point in the domain with zero gradient. Uniqueness follows from the strict convexity of $\regu$ on $\bbR_{>0}^n$. A proof of that fact can be found in the original work by~\citet{Hoda10:Smoothing}.

It is well-known (and easy to check via a straightforward application of Danskin's theorem) that
\begin{equation}\label{eq:argconjugate}
  \nabla \regu^*: \vec{z} \mapsto \argmax_{\hat{\vec{x}} \in \co \pure \cap \Rpp^{|\Sigma|}}\{\vec{z}\trans \hat{\vec{x}} - \regu(\hat{\vec{x}})\}.
\end{equation}
For this reason, we call $\nabla\regu^*$ the \emph{Fenchel arg-conjugate} function of $\regu$ on $\co \pure$. Of course, for any $\vec{z} \in \bbR^{|\Sigma|}$ one can efficiently compute the value of $\regu^*(\vec{z})$ given $\vec{x}^* \defeq \nabla \regu^*(\vec{z})$ (which is guaranteed to be in $\in \Rpp^{|\Sigma|}$) by direct substitution as $\vec{z}\trans \vec{x}^* - \regu(\vec{x}^*)$. In \cref{algo:argconjugate} we give a linear-time algorithm for computing $\nabla\regu^*(\vec{z})$. We refer the reader to the original work by \citet{Hoda10:Smoothing} for a proof of correctness.

\begin{figure}[ht]
    \begin{minipage}[t]{.47\linewidth}
    \begin{algorithm}[H]
        \caption{\textsc{Gradient}($\vec{z}$)}
        \label{algo:gradient}\DontPrintSemicolon
        \KwIn{$\vec{z} \in \bbR^{|\Sigma|}$}
        \KwOut{The value of $\nabla\regu(\vec{z})$}
        \BlankLine
            $\vec{g} \gets \vec{0} \in \bbR^{|\Sigma|}$\;
            \For{$j \in \cJ$ in bottom-up order}{
                \For{$a \in A_j$}{
                    $g_{ja} \gets g_{ja} + w_j \mleft(1 + \log \frac{z_{ja}}{z_{p_j}}\mright)$\;
                    \textbf{if} $p_j \neq \emptyseq$ \textbf{then}
                        $g_{p_j} \gets g_{p_j} - w_j \frac{z_{j'a'}}{z_{ja}}$\;
                }
                \textbf{if} $p_j \neq \emptyseq$ \textbf{then}
                    $g_{p_j} \gets g_{p_j} + w_j \log |A_j|$\;
            }
            \KwRet{$\vec{g}$}\;
    \end{algorithm}
    \end{minipage}\hfill
    \begin{minipage}[t]{.47\linewidth}
    \begin{algorithm}[H]
        \caption{\textsc{ArgConjugate}($\vec{z}$)}
        \label{algo:argconjugate}\DontPrintSemicolon
        \KwIn{$\vec{z} \in \bbR^{|\Sigma|}$}
        \KwOut{The value of $\nabla\regu^*(\vec{z})$}
        \BlankLine
            $\vec{x}^* \gets \vec{0} \in \bbR^{|\Sigma|}$\;
            \For{$j \in \cJ$ in bottom-up order}{
                $s \gets 0$\;
                \For{$a \in A_j$}{
                    $x^*_{ja} \gets \exp\{\frac{z_{ja}}{w_j}\}$\;
                    $s \gets s + x^*_{ja}$\;
                }
                $v \gets  w_j \log |A_j|$\;
                \For{$a \in A_j$}{
                    $x^*_{ja} \gets \frac{x^*_{ja}}{s}$\Comment*[r]{\color{commentcolor}Normalization step]}
                    $v \gets v + z_{ja} x^*_{ja} - x^*_{ja}\log x^*_{ja}$\;
                }
                $z_{ja} \gets z_{ja} + v$\;
            }
            \For{$j\in \cJ$ in top-down order}{
                \For{$a\in A_j$}{
                    $x^*_{ja} \gets x^*_{ja} \cdot x^*_{p_j}$\;
                }
            }
            \KwRet{$\vec{x}^*$}\;
    \end{algorithm}
    \end{minipage}
\end{figure}

\begin{observation}\label{obs:xbar positive}
    At all times $t$, the decision produced by \cref{algo:R tilde} satisfies
    $\xbar^t \in \Rpp^{|\Sigma|}.$
\end{observation}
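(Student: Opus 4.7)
My plan is to prove the statement by induction on $t \geq 1$, inspecting the two code paths that can produce an iterate: the \textsc{Setup} routine (responsible for $\xbar^1$) and the \textsc{ObserveLoss} routine (responsible for $\xbar^{t+1}$ given $\xbar^t$).

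For the base case, \textsc{Setup} initializes $x^1_{ja} \gets x^1_{p_j}/|A_j|$ in top-down order, with the convention $x^1_\emptyseq = 1$. Since $|A_j| \geq 1$ for every $j \in \cJ$, a simple top-down induction along the tree shows that each $x^1_{ja}$ is a nonempty product of strictly positive reciprocals, hence $\xbar^1 \in \Rpp^{|\Sigma|}$.

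For the inductive step, assume $\xbar^t \in \Rpp^{|\Sigma|}$. First, note that \textsc{Gradient}$(\xbar^t)$ is a finite real vector: by inductive hypothesis every entry of $\xbar^t$ is strictly positive, so all logarithms and ratios appearing in~\eqref{eq:gradient} are well-defined and finite. Hence $\vec{g} = \eta\tilde{\vec{\ell}}^t - \textsc{Gradient}(\xbar^t)$ is a finite vector and the call $\xbar^{t+1} \gets \textsc{ArgConjugate}(-\vec{g})$ is well-posed. Inspecting \cref{algo:argconjugate}, each intermediate value is set to $x^*_{ja} \gets \exp\{z_{ja}/w_j\}$, which is strictly positive because the weights $w_j$ are strictly positive. (By the recursive definition $w_{\terminalnode}=0$ and $w_j = 2 + 2\max_{a} w_{\rho(j,a)}$, a trivial bottom-up induction gives $w_j \geq 2$ for every $j \in \cJ$.) The subsequent normalization divides this positive quantity by $s = \sum_{a\in A_j} x^*_{ja}$, a sum of strictly positive terms over the nonempty set $A_j$, so the result is still in $\Rpp$. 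Finally, the top-down pass rescales $x^*_{ja} \gets x^*_{ja} \cdot x^*_{p_j}$, which preserves strict positivity by a top-down induction (with $x^*_\emptyseq = 1$). Hence $\xbar^{t+1} \in \Rpp^{|\Sigma|}$, completing the induction.

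There is no real obstacle in this argument; the only point that deserves explicit mention is the strict positivity of the weights $w_j$, because \textsc{ArgConjugate} divides by $w_j$ inside the exponential. An alternative, more conceptual route would be to invoke the fact that $\regu$ is a Legendre-type function on $\seqf$ (its gradient blows up at the boundary), so that $\nabla\regu^*$ maps $\bbR^{|\Sigma|}$ into the relative interior of $\seqf$, which coincides with $\seqf \cap \Rpp^{|\Sigma|}$; combined with \eqref{eq:argconjugate} and the correctness of \cref{algo:argconjugate}, this immediately yields the claim. I would mention this alternative as a remark but present the algorithmic proof above as the main argument, since it is entirely self-contained from the pseudocode already displayed.
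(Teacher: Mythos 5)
Your proof is correct and follows exactly the reasoning the paper implicitly relies on: the paper states \cref{obs:xbar positive} without an explicit proof, treating it as evident from the pseudocode, namely that \textsc{Setup} initializes a strictly positive uniform strategy and \textsc{ArgConjugate} only ever produces normalized exponentials, which are strictly positive. Your explicit induction on $t$ (including the check that $w_j \ge 2 > 0$) simply spells this out carefully; note only that your parenthetical Legendre-based alternative runs slightly against the paper's stated intent to avoid relying on Legendreness of $\regu$, so the algorithmic argument you present as primary is the right one to keep.
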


\subsection{Local Primal Norm}\label{app:local norms}

\begin{lemma}[\citet{Ling19:Large}]
\label{lem:hessian}
The Hessian $\nabla^2 \regu(\vec{z})$ of the dilated entropy DGF at $\vec{z} \in \seqf$ is given by:
\begin{equation*}
\frac{\partial^2}{\partial z_{ja} \partial z_{j' a'}} \regu(\vec{z}) =
  \begin{cases}
    \displaystyle
        \frac{w_j + w_{\rho(j,a)}}{z_{ja}} & \text{if } {ja} = {j' a'}
    \\[4mm]
    \displaystyle
        -\frac{w_{j'}}{z_{ja}} & \text{if } {ja} = {p_{j'}} \text{ and } p_{j'} \neq \emptyseq
    \\[4mm]
    \displaystyle
        -\frac{w_{j}}{z_{p_j}} & \text{if } {j'a'} = {p_j} \text{ and } p_{j} \neq \emptyseq
    \\[2mm]
    0 & \text{otherwise}.
  \end{cases}
\end{equation*}
\end{lemma}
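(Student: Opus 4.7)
The plan is to compute the entries of the Hessian directly by differentiating the closed form of the gradient that is already recorded in~\eqref{eq:gradient}. Recall that for every sequence $ja \in \Sigma$,
\[
    \frac{\partial \regu}{\partial z_{ja}}(\vec{z}) = w_j\mleft(1 + \log \frac{z_{ja}}{z_{p_j}}\mright) + \sum_{j'\in\next{\rho(j,a)}} w_{j'}\mleft(\log |A_{j'}| - \sum_{a'\in A_{j'}}\frac{z_{j'a'}}{z_{ja}}\mright),
\]
with the convention that $z_{\emptyseq}=1$ and that the sum over $\next{\rho(j,a)}$ is empty whenever $\rho(j,a)=\terminalnode$. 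I would then take a second partial derivative with respect to an arbitrary variable $z_{j'a'}$ and split into the four cases listed in the lemma. Only the first logarithmic term and the ratios $z_{j'a'}/z_{ja}$ depend on the variables, so each second derivative is short to compute.

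For the diagonal case $ja=j'a'$, differentiating $w_j \log z_{ja}$ gives $w_j/z_{ja}$, while differentiating $-\sum_{j'\in\next{\rho(j,a)}} w_{j'} \sum_{a'\in A_{j'}} z_{j'a'}/z_{ja}$ with respect to $z_{ja}$ yields $\sum_{j'\in\next{\rho(j,a)}} w_{j'} \sum_{a'\in A_{j'}} z_{j'a'}/z_{ja}^2$. Here is the one place where the hypothesis $\vec{z}\in\seqf$ is used: each $j'$ in $\next{\rho(j,a)}$ has parent sequence $p_{j'}=ja$, so the sequence-form constraint~\eqref{eq:sf} gives $\sum_{a'\in A_{j'}} z_{j'a'} = z_{ja}$, collapsing the inner sum to $z_{ja}$ and leaving $\sum_{j'\in\next{\rho(j,a)}} w_{j'}/z_{ja}$. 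Applying the recursion $w_{\rho(j,a)} = \sum_{s\in S_{\rho(j,a)}} w_{\rho(\rho(j,a),s)} = \sum_{j'\in\next{\rho(j,a)}} w_{j'}$ (and $w_\terminalnode = 0$ for the terminal case), the diagonal entry becomes $(w_j + w_{\rho(j,a)})/z_{ja}$, matching the lemma.

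For the off-diagonal case $ja=p_{j'}$ with $p_{j'}\ne\emptyseq$, the variable $z_{j'a'}$ appears in the gradient $\partial\regu/\partial z_{ja}$ only through the term $-w_{j'} \sum_{a''\in A_{j'}} z_{j'a''}/z_{ja}$, whose derivative with respect to $z_{j'a'}$ is exactly $-w_{j'}/z_{ja}$. The symmetric case $j'a'=p_j$ with $p_j\ne\emptyseq$ is obtained analogously (or by invoking symmetry of the Hessian): the only variable-dependence comes from $w_j \log(z_{ja}/z_{p_j})$, whose derivative with respect to $z_{p_j}=z_{j'a'}$ is $-w_j/z_{p_j}$. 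In all remaining cases neither the logarithmic term nor any of the ratios $z_{j'a''}/z_{ja}$ is affected by $z_{j'a'}$, so the entry is $0$.

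The only mildly subtle step is the diagonal simplification, where care is needed to (i) invoke the sequence-form constraint at the correct level of the tree and (ii) verify that the formula remains consistent when $\rho(j,a)=\terminalnode$ (in which case the sum is empty and $w_{\rho(j,a)}=0$). Everything else is a routine single-variable differentiation, so I would present the four cases in sequence, leading with the diagonal computation as the only place where a nontrivial identity is used.
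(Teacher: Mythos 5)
Your derivation is correct. Note that the paper itself does not prove this lemma at all---it is imported by citation from Ling et al.\ (2019)---so there is no in-paper argument to compare against; your direct differentiation of the gradient formula~\eqref{eq:gradient} is a valid, self-contained verification. You correctly isolate the one nontrivial step: on the diagonal, the raw second derivative is $w_j/z_{ja} + \sum_{j'\in\next{\rho(j,a)}} w_{j'}\sum_{a'\in A_{j'}} z_{j'a'}/z_{ja}^2$, and only after invoking the sequence-form constraint $\sum_{a'\in A_{j'}} z_{j'a'}=z_{p_{j'}}=z_{ja}$ together with the weight recursion $w_{\rho(j,a)}=\sum_{j'\in\next{\rho(j,a)}}w_{j'}$ does this collapse to $(w_j+w_{\rho(j,a)})/z_{ja}$; this is exactly why the lemma is stated only for $\vec{z}\in\seqf$ rather than for arbitrary $\vec{z}\in\Rpp^{|\Sigma|}$. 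The off-diagonal and zero cases are routine as you say, since $z_{j'a'}$ enters $\partial\regu/\partial z_{ja}$ only when $j'a'$ is the parent sequence $p_j$ or a child sequence of $ja$.
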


\begin{restatable}{lemma}{lemboundprimal}\label{lem:bound primal}
      Let $\xbar \in \co \pure$ and  $\vec{z} \in \Rp^{|\Sigma|}$. Then,
      $\displaystyle
        \pn{\vec{z}}{\xbar}^2 \le \frac{3}{2}\sum_{j\in\cJ}\sum_{a\in A_j}\frac{w_j}{{x}_{ja}}z_{ja}^2.
      $
    \end{restatable}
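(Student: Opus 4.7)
The plan is to expand $\pn{\vec{z}}{\xbar}^2 = \vec{z}\trans \nabla^2 \regu(\xbar)\,\vec{z}$ using the closed-form expression of the Hessian provided by \cref{lem:hessian}, then carefully separate the resulting quadratic form into diagonal and off-diagonal contributions. The Hessian is very sparse: only diagonal entries $H_{(ja),(ja)} = (w_j + w_{\rho(j,a)})/x_{ja}$ and entries linking a sequence $ja$ to a sequence $j'a'$ whose decision point satisfies $p_{j'}=ja$ are nonzero, and the latter are equal to $-w_{j'}/x_{p_{j'}}$.

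Collecting the diagonal contribution first, I obtain $\sum_{(j,a)\in\Sigma} \frac{w_j+w_{\rho(j,a)}}{x_{ja}}\, z_{ja}^2$. For the off-diagonal part, by the symmetry of the Hessian each parent--child pair $(p_{j'},\, j'a')$ contributes twice, so the off-diagonal total equals
\begin{equation*}
    -2\sum_{\substack{j'\in\cJ:\\ p_{j'}\neq\emptyseq}} \frac{w_{j'}}{x_{p_{j'}}}\, z_{p_{j'}} \sum_{a'\in A_{j'}} z_{j'a'}.
\end{equation*}
Here I would invoke the hypothesis $\vec{z}\in\Rp^{|\Sigma|}$: every factor $z_{p_{j'}}$ and $z_{j'a'}$ is non-negative, and the coefficient $w_{j'}/x_{p_{j'}}$ is strictly positive, so this entire sum is non-positive. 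Dropping it yields $\pn{\vec{z}}{\xbar}^2 \le \sum_{(j,a)}\frac{w_j + w_{\rho(j,a)}}{x_{ja}}\, z_{ja}^2$.

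Finally, to produce the $3/2$ factor I would use the recursive definition of the weights from \cref{def:dilated entropy}. Since $w_j = 2 + 2\max_{a\in A_j} w_{\rho(j,a)}$, the bound $w_{\rho(j,a)} \le (w_j-2)/2 \le w_j/2$ holds for every $(j,a)\in\Sigma$, so $w_j + w_{\rho(j,a)} \le \tfrac{3}{2}\, w_j$. Substituting this into the previous estimate gives the claimed inequality.

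No step here is especially difficult; the one point that requires care is to correctly account for the factor of $2$ arising from summing the symmetric off-diagonal entries of the Hessian, and to note that the non-negativity of $\vec{z}$ (rather than, say, its membership in some affine subspace) is precisely what makes the cross-term contribution non-positive and thus harmless. The $3/2$ constant is then a direct consequence of the recursion defining $w_j$.
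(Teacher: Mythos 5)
Your proof is correct and follows essentially the same route as the paper's: expand the quadratic form via the Hessian formula of \cref{lem:hessian}, discard the non-positive cross terms using $\vec{z}\in\Rp^{|\Sigma|}$, and bound $w_j + w_{\rho(j,a)} \le \tfrac{3}{2}w_j$ from the weight recursion in \cref{def:dilated entropy}. The only cosmetic difference is in how the $3/2$ is derived (you solve the recursion for $\max_{a}w_{\rho(j,a)}=(w_j-2)/2$, while the paper adds and subtracts constants), but the arguments are equivalent.
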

    \begin{proof}
      Using the explicit expression of the Hessian of the dilated entropy regularizer (\cref{lem:hessian}) we can write
      \begin{align}
          \| \vec{z} \|_{\xbar}^2  &= \sum_{j\in\cJ}\sum_{a\in A_j} \frac{w_{j} + w_{\rho(j,a)}}{{x}_{ja}} z_{ja}^2 - 2 \sum_{j\in\cJ}\sum_{a\in A_j}\sum_{j' \in \next{\rho(j,a)}} \sum_{a'\in A_{j'}} \frac{w_{j'}}{{x}_{ja}} z_{j'a'} z_{ja} \nonumber\\
            &\le \sum_{j\in\cJ}\sum_{a\in A_j} \frac{w_{j} + w_{\rho(j,a)}}{{x}_{ja}} z_{ja}^2,\label{eq:pn bound}
      \end{align}
      where the inequality holds since $\vec{z}\in\Rp^{|\Sigma|}$. By definition of $w_j$ (\cref{def:dilated entropy}), we have for all $ja\in\Sigma$
      \begin{align}
        w_j + w_{\rho(j,a)} &\le w_j + \max_{a'\in A_j} w_{\rho(j,a')} \nonumber\\
            &= 2 + 3\max_{a' \in A_j} w_{\rho(j,a')} \nonumber\\
            &\le 3 + 3\max_{a' \in A_j} w_{\rho(j,a')}= \frac{3}{2} w_j\label{eq:ineq 3 over 2}.
      \end{align}
      Plugging \eqref{eq:ineq 3 over 2} into \eqref{eq:pn bound} yields the statement.
    \end{proof}

\subsection{Local Dual Norm}

\begin{restatable}{lemma}{leminvhessian}\label{lem:inverse hessian}
      Let $\xbar \in \co \pure$ be a sequence-form strategy. The inverse Hessian $(\nabla^{2}\regu)^{-1}(\xbar)$ at $\xbar$ can be expressed as:
    \begin{align}
    (\nabla^2 \regu)^{-1}(\xbar) &= \sum_{j\in\cJ}\sum_{a \in A_j} \frac{(\xbar\circ \vec{u}_{ja})(\xbar\circ \vec{u}_{ja})^\top}{w_j {x}_{ja}},
    \end{align}
    where $\circ$ denotes componentwise product of vectors.
    \end{restatable}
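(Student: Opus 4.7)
The plan is to verify the identity $\mat{H}\mat{M} = \mat{I}$ by direct computation, where $\mat{H} \defeq \nabla^2\regu(\xbar)$ is the Hessian whose entries are given by Lemma~\ref{lem:hessian} and $\mat{M}$ denotes the proposed inverse. A convenient first rewriting is
\[
  M_{\sigma,\tau} \;=\; x_\sigma\, x_\tau \sum_{ja\,:\,\sigma \succeq ja\text{ and }\tau \succeq ja} \frac{1}{w_j\, x_{ja}},
\]
so that the summation ranges exactly over the sequences lying on the chain of common ancestors of $\sigma$ and $\tau$ in the sequence-form tree. In particular, the chain collapses to the ancestors of $\min(\sigma,\tau)$ if one is a descendant of the other, and terminates at (but does not include) the first sequence at which the paths to $\sigma$ and $\tau$ diverge.

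Next I would fix a row index $\sigma = ja$ and compute $(\mat{H}\mat{M})_{ja,\tau} = \sum_{\sigma'}H_{ja,\sigma'}M_{\sigma',\tau}$. By Lemma~\ref{lem:hessian} this sum has only three types of non-zero contributors: (a) the diagonal term $\sigma' = ja$, with coefficient $(w_j+w_{\rho(j,a)})/x_{ja}$; (b) the parent term $\sigma' = p_j$, with coefficient $-w_j/x_{p_j}$ (absent when $p_j = \emptyseq$); and (c) the child terms $\sigma' = j'a'$ with $p_{j'} = ja$, with coefficients $-w_{j'}/x_{ja}$. I would then split into four cases depending on the relationship between $ja$ and $\tau$: (i) $\tau = ja$, (ii) $ja \prec \tau$, (iii) $\tau \prec ja$, and (iv) $ja$ and $\tau$ are incomparable. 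In each case the common-ancestor chains entering $M_{\sigma',\tau}$ for the three values of $\sigma'$ differ only by a controlled number of elements, and the sequence-form constraint $\sum_{a'\in A_{j'}} x_{j'a'} = x_{p_{j'}}$ lets one collapse the sum over sibling branches at each decision point $j'$ that lies strictly between $ja$ and $\tau$.

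The algebra is designed so that the crucial cancellations happen in two places. At each observation point $k$ on the path from $ja$ to $\tau$, the weight recursion $w_k = \sum_{s\in S_k} w_{\rho(k,s)}$ converts the sum of weights coming from the sibling decision points that are not on the path to $\tau$ into the aggregate $w_{\rho(j,a)}$ appearing in the diagonal coefficient, so these off-path contributions cancel. Along the on-path direction, a telescoping over the chain of common ancestors is produced by the matching $+w_j/x_{ja}$ and $-w_j/x_{p_j}$ terms, leaving only the top-most element of the chain; this residue is then killed by a final use of the sequence-form constraint when the chain is non-trivial, and equals one exactly when $\sigma = \tau$.

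The main obstacle will be the bookkeeping across these four cases, together with the boundary situations where $p_j = \emptyseq$ (so the $-w_j/x_{p_j}$ contribution is absent and the convention $x_\emptyseq = 1$ must be invoked carefully) and where $\rho(j,a) = \terminalnode$ (so $w_{\rho(j,a)} = 0$ and there are no child-sequence contributions). One has to check that in every case the two cancellation mechanisms above combine to give $(\mat{H}\mat{M})_{ja,\tau} = \delta_{ja,\tau}$. Once this is verified, positive definiteness of $\mat{H}$ (guaranteed since $\regu$ is strictly convex on $\Rpp^{|\Sigma|}$) implies that $\mat{M}$ is the unique inverse, completing the proof.
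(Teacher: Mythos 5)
Your approach is correct in substance and verifies the same identity as the paper ($\nabla^2\regu(\xbar)$ times the proposed inverse equals $\mat{I}$), using the same three ingredients: the explicit Hessian entries, the sequence-form constraints $\sum_{a'\in A_{j'}}x_{j'a'}=x_{p_{j'}}$, and the weight recursion $w_{\rho(j,a)}=\sum_{j'\in\next{\rho(j,a)}}w_{j'}$. The difference is organizational, and it is not cosmetic. You expand $M_{\sigma,\tau}$ entrywise as a sum over the common-ancestor chain of $\sigma$ and $\tau$ and then run a four-way case analysis on the relative position of the row and column indices; the paper instead keeps the proposed inverse in its rank-one form and first computes the single vector $\nabla^2\regu(\xbar)\cdot(\xbar\circ\vec{u}_{ja})$ for each dyad. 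The payoff of that ordering is that this intermediate vector turns out to be supported on only two coordinates, namely $w_j\vec{1}_{ja}-w_j\frac{x_{ja}}{x_{p_j}}\vec{1}_{p_j}$, so the final assembly is a one-line telescoping sum over the tree (using $\vec{u}_{ja}-\sum_{j'a'}\vec{u}_{j'a'}=\vec{1}_{ja}$) with no casework on column indices at all; the only case analysis needed is the four-way split on how $j'a'$ relates to $ja$ inside that single matrix-vector product, which is exactly the bookkeeping you anticipate but done once per dyad rather than once per pair of sequences. Your plan would go through --- the two cancellation mechanisms you identify (weight recursion absorbing off-path sibling decision points at observation points, and telescoping of the $+w_j/x_{ja}$ against $-w_j/x_{p_j}$ terms along the ancestor chain) are precisely the ones the paper uses --- but be aware that you have only sketched the computation, and the entrywise route multiplies the boundary cases ($p_j=\emptyseq$, terminal actions, incomparable indices) across both row and column positions. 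Also, your closing appeal to positive definiteness is unnecessary: for square matrices a one-sided inverse is automatically two-sided, so $\mat{H}\mat{M}=\mat{I}$ alone suffices.
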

    \begin{proof}
    Let
    \[
      \mat{H} \defeq \sum_{j\in\cJ}\sum_{a\in A_j} \frac{(\xbar\circ \vec{u}_{ja})(\xbar\circ \vec{u}_{ja})^\top}{w_j {x}_{ja}}
    \]
    be the proposed inverse Hessian matrix. We will prove that $\mat{H} = (\nabla^2 \regu)^{-1}(\xbar)$ by showing that $\nabla^2 \regu(\xbar) \cdot \mat{H} = \mat{I}$ is the identity matrix. We break the proof into two steps:
    \begin{itemize}[leftmargin=*]
      \item \textbf{Step one.} First, we show that for all sequences $ja\in\Sigma$ and $j'a' \in \Sigma$,
        \begin{equation}\label{eq:sigma}
          \circled{A} \defeq \mleft[\nabla^2\regu(\xbar)\cdot(\xbar \circ \vec{u}_{ja})\mright]_{j'a'} =
            \begin{cases}
                w_j  & \text{if } j'a' = ja\\[2mm]
                \displaystyle- w_{j}\frac{{x}_{ja}}{{x}_{p_j}} & \text{if } p_j = j'a'\\[2mm]
                0 & \text{otherwise}.
            \end{cases}
        \end{equation}
        In order to prove \eqref{eq:sigma}, we start from \cref{lem:hessian}:
        \begin{align*}
             \circled{A} &= \sum_{j'' \in \cJ}\sum_{a'' \in A_{j''}}
                \frac{\partial^2 \regu(\xbar)}{\partial {x}_{j'a'} \partial {x}_{j'' a''}} \cdot (\xbar \circ \vec{u}_{ja})_{j''a''}\\
            &= (w_{j'} + w_{\rho(j',a')}) \!\cdot\! (\vec{u}_{ja})_{j'a'} - w_{j'} \!\cdot\! (\vec{u}_{ja})_{p_{j'}} - \!\!\!\sum_{j'' \in \next{\rho(j',a')}}\sum_{a''\in A_{j''}} \! \frac{w_{j''}}{{x}_{j'a'}}{x}_{j''a''} \!\cdot\! (\vec{u}_{ja})_{j''a''}.
        \end{align*}
        We now distinguish four cases, based on how $ja$ relates to $p_{j'}$, $j'a'$, and $j''a''$:
        \begin{itemize}
          \item \emph{First case:} $p_{j'} \succeq ja$, that is $p_{j'}, j'a'$ and $j''a''$ are all descendants of $ja$. Consequently,
              \[
                (\vec{u}_{ja})_{p_{j'}} = (\vec{u}_{ja})_{j'a'} = (\vec{u}_{ja})_{j''a''} = 1
              \]
              for all $j'' \in \next{\rho(j',a')}$ and $a'' \in A_{j''}$. Hence,
              \begin{align*}
                \circled{A} &= w_{j'} + w_{\rho(j',a')} - w_{j''} - \sum_{j'' \in \next{\rho(j',a')}}\sum_{a'' \in A_{j''}} \frac{w_{j''}}{{x}_{j'a'}}{x}_{j''a''} \\
                    &=  w_{\rho(j',a')} - \sum_{j'' \in \next{\rho(j',a')}} \mleft(w_{j''} \sum_{a'' \in A_{j''}} \frac{{x}_{j''a''}}{{x}_{j'a'}}\mright) \\&=  w_{\rho(j',a')} - \sum_{j'' \in \next{\rho(j',a')}} w_{j''}
                    = 0.
              \end{align*}
          \item \emph{Second case:} $ja = j'a'$. In this case, $(\vec{u}_{ja})_{p_{j'}} = 0$, while $( \vec{u}_{ja})_{j'a'} = (\vec{u}_{ja})_{j''a''} = 1$ for all $j'' \in \next{\rho(j',a')}$ and $a'' \in A_{j''}$. Hence,
              \begin{align*}
                \circled{A} &= w_{j'} + w_{\rho(j',a')} - \sum_{j'' \in \next{\rho(j',a')}}\sum_{a'' \in A_{j''}} \frac{w_{j''}}{{x}_{j'a'}}{x}_{j''a''} \\
                    &=  w_{j'} + w_{\rho(j',a')} - \sum_{j'' \in \next{\rho(j',a')}} \mleft(w_{j''} \sum_{a'' \in A_{j''}} \frac{{x}_{j''a''}}{{x}_{j'a'}}\mright) \\
                    &=  w_{j'} + w_{\rho(j',a')} - \sum_{j'' \in \next{\rho(j',a')}} w_{j''} = w_{j'} = w_j.
              \end{align*}
          \item \emph{Third case:} $p_j = j'a'$ (that is, $ja$ immediately follows $j'a'$). Then,
              $
                \circled{A} = - w_{j}\frac{{x}_{ja}}{{x}_{p_j}}
              $
          \item \emph{Otherwise}, $j''a'' \not\succeq ja$ for all $j'' \in \next{\rho(j',a')}$ and $a'' \in A_{j''}$, and therefore
              $
                \circled{A} = 0.
              $
        \end{itemize}
      \item \textbf{Step two.} Given $\sigma \in \Sigma \cup \{\emptyseq\}$, let $\vec{1}_{\sigma} \in \bbR^{|\Sigma|}$ denote the vector that has a $1$ in the entry corresponding to sequence $\sigma$, and $0$ everywhere else (in particular, $\vec{1}_\emptyseq = \vec{0}$). Then,~\eqref{eq:sigma} can be rewritten as
        \begin{equation*}
            \frac{\nabla^2\regu(\xbar)\cdot(\xbar \circ \vec{u}_{ja})}{w_j {x}_{ja}} = \frac{1}{{x}_{ja}}\vec{1}_{ja} - \frac{1}{{x}_{p_j}}\vec{1}_{p_j}.
        \end{equation*}
        Therefore,
        \begin{align*}
            \nabla^2\regu(\xbar) \cdot \mat{H} &= \sum_{j\in J}\sum_{a\in A_j} \frac{\nabla^2\regu(\xbar)\cdot(\xbar \circ \vec{u}_{ja})}{w_j {x}_{ja}}\cdot (\xbar \circ \vec{u}_{ja})\trans = \sum_{j\in \cJ}\sum_{a\in A_j} \mleft(\frac{1}{{x}_{ja}}\vec{1}_{ja} - \frac{1}{{x}_{p_j}}\vec{1}_{p_j}\mright) \cdot (\xbar \circ \vec{u}_{ja})\trans.\\
                &= \sum_{j\in \cJ}\sum_{a\in A_j}\frac{1}{{x}_{ja}}\vec{1}_{ja}\cdot \mleft(\xbar \circ \vec{u}_{ja} - \sum_{j'\in\next{\rho(j,a)}}\sum_{a'\in A_{j'}} \xbar \circ \vec{u}_{j'a'}\mright)\trans\\
                &= \sum_{j\in \cJ}\sum_{a\in A_j}\frac{1}{{x}_{ja}}\vec{1}_{ja}\cdot \mleft[\xbar \circ \mleft(\vec{u}_{ja} - \sum_{j'\in\next{\rho(j,a)}}\sum_{a'\in A_{j'}} \vec{u}_{j'a'}\mright)\mright]\trans.
        \end{align*}
      Using the definition of $\vec{u}_{ja}$, we obtain
        \begin{align*}
            \nabla^2\regu(\xbar) \cdot \mat{H} &= \sum_{j\in \cJ}\sum_{a\in A_j}\frac{1}{{x}_{ja}}\vec{1}_{ja}\cdot (\xbar \circ \vec{1}_{ja})\trans\\
                &= \sum_{j\in \cJ}\sum_{a\in A_j}\vec{1}_{ja}\vec{1}_{ja}\trans\\
                &=\mat{I},
        \end{align*}
        as we wanted to show.\qedhere
    \end{itemize}
    \end{proof}\clearpage

    \begin{restatable}{corollary}{cordualnorm}\label{cor:dual norm}
     Let $\xbar \in \co \pure$ be a sequence-form strategy, and let $\vec{z}\in\bbR^{|\Sigma|}$. The local dual norm of $\vec{z}$ satisfies
    $\displaystyle        \dn{\vec{z}}{\xbar}^{2} = \sum_{j\in\cJ}\sum_{a\in A_j} \frac{(\vec{u}_{ja}\trans(\vec{z} \circ \xbar))^2}{w_j {x}_{ja}}.
    $
    \end{restatable}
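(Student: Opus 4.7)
The plan is to deduce this corollary as a direct algebraic consequence of the explicit dyadic expansion of the inverse Hessian given in \cref{lem:inverse hessian}, combined with the definition
$\dn{\vec{z}}{\xbar}^2 = \vec{z}\trans (\nabla^2\regu(\xbar))^{-1}\vec{z}$.
There is essentially no new content here: the work has already been done in proving the inverse Hessian formula, and the corollary is simply the bilinear form of that expression evaluated on $\vec{z}$.

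Concretely, I would first substitute the expansion
\[
(\nabla^2\regu(\xbar))^{-1} \;=\; \sum_{j\in\cJ}\sum_{a\in A_j}\frac{(\xbar\circ\vec{u}_{ja})(\xbar\circ\vec{u}_{ja})\trans}{w_j\, x_{ja}}
\]
into the definition of $\dn{\vec{z}}{\xbar}^2$. Using the fact that $\vec{z}\trans \vec{v}\vec{v}\trans \vec{z} = (\vec{z}\trans\vec{v})^2$ for any vector $\vec{v}$, each rank-one term contributes $(\vec{z}\trans(\xbar\circ\vec{u}_{ja}))^2/(w_j x_{ja})$. The second step is a trivial identity for the Hadamard product: for any three vectors $\vec{a},\vec{b},\vec{c}\in\bbR^{|\Sigma|}$, $\vec{a}\trans(\vec{b}\circ\vec{c}) = (\vec{a}\circ\vec{b})\trans\vec{c}$, since both sides equal $\sum_i a_i b_i c_i$. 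Applying this with $\vec{a}=\vec{z}$, $\vec{b}=\xbar$, $\vec{c}=\vec{u}_{ja}$ yields $\vec{z}\trans(\xbar\circ\vec{u}_{ja}) = \vec{u}_{ja}\trans(\vec{z}\circ\xbar)$, which immediately produces the claimed formula.

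There is no real obstacle: the only subtlety is ensuring that $(\nabla^2\regu(\xbar))^{-1}$ is well-defined at $\xbar\in\co\pure$, but this is already guaranteed by the positive definiteness of $\nabla^2\regu(\xbar)$ (as noted right after the definition of the local norms) together with \cref{lem:inverse hessian}, which verifies the proposed matrix is genuinely the inverse. So the corollary follows as a one-line computation once \cref{lem:inverse hessian} is in place, and the proof can be presented as such.
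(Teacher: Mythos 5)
Your proposal is correct and follows exactly the paper's own proof: substitute the dyadic expansion of the inverse Hessian from \cref{lem:inverse hessian} into the definition of the dual norm, collapse each rank-one term via $\vec{z}\trans\vec{v}\vec{v}\trans\vec{z}=(\vec{z}\trans\vec{v})^2$, and commute the Hadamard product to obtain $\vec{u}_{ja}\trans(\vec{z}\circ\xbar)$. Nothing is missing.
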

\begin{proof}
    By definition of local dual norm, using \cref{lem:inverse hessian}), and applying simple algebraic manipulations:
    \begin{align*}
         \dn{\vec{z}}{\xbar}^{2} &= \vec{z}^\top \mleft( \sum_{j\in\cJ}\sum_{a\in A_j} \frac{(\xbar\circ \vec{u}_{ja})(\xbar\circ \vec{u}_{ja})^\top}{w_j {x}_{ja}} \mright) \vec{z} \\&= \sum_{j\in\cJ}\sum_{a\in A_j} \frac{(\vec{z}^\top (\xbar\circ \vec{u}_{ja}))^2}{w_j {x}_{ja}} \\&= \sum_{j\in\cJ}\sum_{a\in A_j} \frac{(\vec{u}_{ja}\trans(\vec{z} \circ \xbar))^2}{w_j {x}_{ja}}.\qedhere
    \end{align*}
\end{proof}

    \newpage
    \section{Analysis of Mirror Descent using Dilated Entropy DGF}
\label{app:R tilde}
    We study some properties of \cref{algo:R tilde}. The central result, \cref{thm:omd regret bound}, gives a bound on the cumulative regret expressed in term of (dual) local norms centered at the iterates produced by online mirror descent.
    Our first step is to introduce the ``intermediate'' iterate
    \begin{align}
        \tilde{\vec{x}}^1 &\defeq \argmin_{\hat{\vec{x}}\in\Rpp^{|\Sigma|}} \regu(\hat{\vec{x}}),\label{eq:x tilde init}\\
      \tilde{\vec{x}}^{t+1} &\defeq \argmin_{\hat{\vec{x}} \in \Rpp^{|\Sigma|}} \mleft\{(\eta\tilde{\vec{\ell}}^{t} - \nabla \regu(\xbar^{t}))\trans \hat{\vec{x}} + \regu(\hat{\vec{x}})\mright\},\qquad t \ge 1\label{eq:x tilde}
    \end{align}
    which differs from an arg-conjugate (\cref{eq:argconjugate}) in that the minimization problem is unconstrained. In \cref{app:existence x tilde} we prove that $\tilde{\vec{x}}^{t+1}$ is well-defined, in the sense that it always exists unique. In \cref{app:xtilde and omd} we show that it is convenient for analyzing the regret accumulated by online mirror descent~\cite{Abernethy09:Beating}.

    \subsection{Existence and Uniqueness of the Intermediate Iterate}\label{app:existence x tilde}

Using the structure of the dilated entropy DGF together with that of the game tree, we prove the following properties, which will be fundamental in the analysis of online mirror descent based on local norms.

\begin{restatable}{lemma}{lemproxsolution}\label{lem:prox solution}
    At all times $t \ge 1$, each intermediate iterate $\tilde{\vec{x}}^{t+1}$ exists, is unique, and satisfies, for all $ja \in \Sigma$:
        \begin{equation}\label{eq:prox step}
          \frac{\tilde{x}^{t+1}_{ja}}{\tilde{x}^{t+1}_{p_j}} = \frac{{x}^{t}_{ja}}{{x}^{t}_{p_j}}\exp\mleft\{-\eta \frac{\ell^{t}_{ja}}{w_j}-\frac{w_{\rho(j,a)}}{w_j} + \frac{\xi^{t+1}_{ja}}{w_j}\mright\},
        \end{equation}
        where
        \[
           \xi^{t+1}_{ja} \defeq \sum_{j'\in\next{\rho(j,a)}} \hspace{-2mm} w_{j'} \hspace{-2mm} \sum_{a'\in A_{j'}}\!\! \frac{\tilde{x}^{t+1}_{j'a'}}{\tilde{x}^{t+1}_{ja}}.
        \]
    \end{restatable}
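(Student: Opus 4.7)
My plan is to prove the lemma in three stages: first, reduce existence and uniqueness of $\tilde{\vec{x}}^{t+1}$ to finding a single critical point of the objective inside $\Rpp^{|\Sigma|}$ via strict convexity; second, write the first-order optimality condition explicitly using the gradient formula for $\regu$; third, construct that unique critical point via a bottom-up recursion along the decision tree.

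For the first stage, the objective
\[
f(\hat{\vec{x}}) \defeq (\eta\tilde{\vec{\ell}}^t - \nabla\regu(\xbar^t))^\top\hat{\vec{x}} + \regu(\hat{\vec{x}})
\]
is strictly convex on $\Rpp^{|\Sigma|}$: the linear part does not affect convexity, and $\regu$ is strictly convex there since its Hessian (given by \cref{lem:hessian}) is positive definite. Hence any critical point of $f$ in $\Rpp^{|\Sigma|}$ is automatically the unique global minimizer. Existence and uniqueness of $\tilde{\vec{x}}^{t+1}$ therefore reduce to producing a unique strictly positive solution of $\nabla\regu(\tilde{\vec{x}}^{t+1}) = \nabla\regu(\xbar^t) - \eta\tilde{\vec{\ell}}^t$.

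For the second stage, I would apply the component-wise gradient formula for $\regu$ derived in \cref{app:dgf} at each sequence $ja \in \Sigma$. On the right-hand side, $\xbar^t$ being a valid sequence-form strategy lets me invoke \eqref{eq:sf} to reduce each inner sum $\sum_{a' \in A_{j'}} x^t_{j'a'}/x^t_{ja}$ to $1$, eliminating the ratio terms in $\nabla\regu(\xbar^t)$. On the left, no such simplification applies, and the analogous sum survives exactly as the quantity $\xi^{t+1}_{ja}$. After using the identity $\sum_{j' \in \next{\rho(j,a)}} w_{j'} = w_{\rho(j,a)}$ (from the recursive definition of $w$), the $w_{j'}\log|A_{j'}|$ constants cancel and rearrangement yields
\[
w_j \log\frac{\tilde{x}^{t+1}_{ja}/\tilde{x}^{t+1}_{p_j}}{x^t_{ja}/x^t_{p_j}} = -\eta\,\tilde{\ell}^t_{ja} - w_{\rho(j,a)} + \xi^{t+1}_{ja}.
\]
Dividing by $w_j$ and exponentiating gives \eqref{eq:prox step}.

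For the third stage, I would exhibit a strictly positive solution by induction along the tree. Since $p_{j'} = ja$ for every $j' \in \next{\rho(j,a)}$, the quantity $\xi^{t+1}_{ja}$ depends only on ratios $\tilde{x}^{t+1}_{j'a'}/\tilde{x}^{t+1}_{p_{j'}}$ at descendant decision points, which are themselves determined by the recursion at strictly deeper levels (with $\xi^{t+1}_{ja} = 0$ when $\rho(j,a) = \terminalnode$, i.e., at the deepest sequences). Thus all ratios are uniquely determined and strictly positive, each being an exponential. A top-down pass starting from the convention $\tilde{x}^{t+1}_\emptyseq = 1$ then recovers the absolute values $\tilde{x}^{t+1}_{ja} \in \Rpp$, completing the construction. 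The main subtlety lies in the bookkeeping in the second stage: one must invoke \eqref{eq:sf} to simplify the inner sums arising from $\xbar^t$ while \emph{not} simplifying the corresponding sums arising from $\tilde{\vec{x}}^{t+1}$ (which is not a sequence-form vector), so that $\xi^{t+1}_{ja}$ survives on the correct side of the equation exactly in the form claimed.
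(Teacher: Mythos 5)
Your proposal is correct and follows essentially the same route as the paper's proof: reduce to the first-order condition on the open domain $\Rpp^{|\Sigma|}$, substitute the explicit gradient formula, simplify the $\xbar^t$-side sums via the sequence-form constraints and the identity $w_{\rho(j,a)} = \sum_{j'\in\next{\rho(j,a)}} w_{j'}$, and then establish existence, uniqueness, and strict positivity by a recursion along the tree. If anything, your bottom-up resolution of the ratios (noting that $\xi^{t+1}_{ja}$ depends only on ratios at strictly deeper decision points, with the base case $\xi^{t+1}_{ja}=0$ at terminal sequences) is a slightly more careful account of the recursive construction than the paper's top-down description, but the argument is the same.
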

    \begin{proof}
    The domain $\Rpp^{|\Sigma|}$ of the optimization problem~\eqref{eq:x tilde} is an open set, and the objective function is a differentiable convex function. Hence, a generic point $\tilde{\vec{x}}^{t+1} \in \Rpp^{|\Sigma|}$ is a (global) minimizer if and only if the gradient of the objective function at $\tilde{\vec{x}}^{t+1}$ is $\vec{0}$. For this, we start by setting the gradient of the objective function to $\vec{0}$:
\[
  \eta\vec{\ell}^{t} - \nabla \regu(\xbar^{t}) + \nabla\regu(\tilde{\vec{x}}^{t+1}) = \vec{0}.
\]
Substituting the expression for $\nabla\regu$ (Equation~\ref{eq:gradient}) into the optimality condition yields
\begin{align*}
    &\eta \ell_{ja} - w_j \log\frac{{x}^{t}_{ja}}{{x}^{t}_{p_j}}
    + \hspace{-3mm}\sum_{j'\in\next{\rho(j,a)}}\hspace{-3mm}w_{j'}\hspace{-1mm}\sum_{a'\in A_{j'}} \frac{{x}^{t}_{j'a'}}{{x}^{t}_{ja}}+ w_j \log\frac{\tilde{x}^{t+1}_{ja}}{\tilde{x}^{t+1}_{p_j}} -\hspace{-3mm}\sum_{j'\in\next{\rho(j,a)}}\hspace{-3mm}w_{j'}\hspace{-1mm}\sum_{a'\in A_{j'}} \frac{\tilde{x}^{t+1}_{j'a'}}{\tilde{x}^{t+1}_{ja}} = 0
\end{align*}
for all $ja \in \Sigma$. Using the fact that $\xbar^{t} \in \seqf$, we can write $\sum_{a'\in A_{j'}}\frac{{x}^{t}_{j'a'}}{{x}^{t}_{ja}} = 1$ and simplify the above condition into
\begin{align*}
    &\eta \ell_{ja} - w_j \log\frac{{x}^{t}_{ja}}{{x}^{t}_{p_j}}
    + w_j \log\frac{\tilde{x}^{t+1}_{ja}}{\tilde{x}^{t+1}_{p_j}} + w_{\rho(j,a)} -\hspace{-3mm}\sum_{j'\in\next{\rho(j,a)}}\hspace{-3mm}w_{j'}\sum_{a'\in A_{j'}} \frac{\tilde{x}^{t+1}_{j'a'}}{\tilde{x}^{t+1}_{ja}} = 0,
\end{align*}
where we used the equality $w_{\rho(j,a)} = \sum_{j' \in \next{\rho(j,a)}} w_{j'}$ (Equation~\ref{def:dilated entropy}). Rearranging the terms we conclude that the gradient of the objective function of~\eqref{eq:x tilde} is $\vec{0}$ if and only if, for all $ja\in\Sigma$,
    \[
        \frac{\tilde{x}^{t+1}_{ja}}{\tilde{x}^{t+1}_{p_j}} = \frac{{x}^{t}_{ja}}{{x}^{t}_{p_j}}\exp\mleft\{-\eta \frac{\ell^{t}_{ja}}{w_j}-\frac{w_{\rho(j,a)}}{w_j} + \frac{\xi^{t+1}_{ja}}{w_j}\mright\},\qquad \xi^{t+1}_{ja} \defeq \sum_{j'\in\next{\rho(j,a)}} \hspace{-2mm} w_{j'} \hspace{-2mm} \sum_{a'\in A_{j'}}\!\! \frac{\tilde{x}^{t+1}_{j'a'}}{\tilde{x}^{t+1}_{ja}},
    \]
    which is~\eqref{eq:prox step}. Crucially, the previous equation uniquely defines a point $\tilde{\vec{x}}^{t+1}$. Indeed, by applying~\eqref{eq:prox step} at any root decision points $j$ (that is, decision points where $p_j = \emptyseq$), using the fact that by definition $\tilde{x}^{t+1}_\emptyseq = {x}^{t+1}_\emptyseq = 1$, we can compute all sequences $ja$ ($a\in A_j$) uniquely as
    \[
        \tilde{x}^{t+1}_{ja} = {x}^t_{ja} \exp\mleft\{-\eta \frac{\ell^{t}_{ja}}{w_j}-\frac{w_{\rho(j,a)}}{w_j} + \frac{\xi^{t+1}_{ja}}{w_j}\mright\} > 0.
    \]
    Now that all $\tilde{x}^{t+1}_{ja}$ have been computed for root decision points $j$, we can inductively compute them for the decision points at depth $2$, then those at depth $3$, and so on. Since all entries in $\xbar^t$ are strictly positive (\cref{obs:xbar positive}), it is immediate to see inductively that all entries of $\tilde{\vec{x}}^{t+1}$ are strictly positive. Hence, the unique vector $\tilde{\vec{x}}^{t+1}$ that makes the gradient of the objective function $\vec{0}$ belongs to the open domain of the optimization, $\Rpp^{|\Sigma|}$. Hence, it is the unique minimizer of~\eqref{eq:x tilde}, as we wanted to show.
    \end{proof}

    \subsection{Relationships with Regret Bound of Mirror Descent}\label{app:xtilde and omd}

It is known that at all $t$, $\xbar^{t} = \argmin_{\hat{\vec{x}}\in\Rpp^{|\Sigma|}} \div{\hat{\vec{x}}}{\tilde{\vec{x}}^t}$, where $\div{\cdot}{\cdot}$ denotes the Bregman divergence induced by $\regu$, that is,
    \[
     \div{\vec{x}}{\vec{c}} \defeq \regu(\vec{x}) - \regu(\vec{c}) - \nabla\regu(\vec{c})^\top(\vec{x}-\vec{c}), \quad \vec{x},\vec{c}\in\Rpp^n.
    \]

    We prove that result here for completeness.
    \begin{lemma}
        At all times $t$, the iterate $\xbar^t$ produced by \cref{algo:R tilde} satisfies
    \begin{equation}\label{eq:tilde to bar}
        \xbar^{t} = \argmin_{\hat{\vec{x}}\in\Rpp^{|\Sigma|}} \div{\hat{\vec{x}}}{\tilde{\vec{x}}^t}.
    \end{equation}
    \end{lemma}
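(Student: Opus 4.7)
The plan is to prove the identity by reducing $\div{\hat{\vec{x}}}{\tilde{\vec{x}}^t}$ to the proximal objective whose minimizer is $\xbar^t$ by construction in \cref{algo:R tilde}. The linchpin of the reduction is the first-order optimality condition satisfied by $\tilde{\vec{x}}^t$ (already established in \cref{lem:prox solution}), which ties $\nabla\regu(\tilde{\vec{x}}^t)$ back to the previous iterate $\xbar^{t-1}$ and the loss estimate $\tilde{\vec{\ell}}^{t-1}$.

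For the base case $t=1$, the \textsc{Setup} routine of \cref{algo:R tilde} initializes $\xbar^1$ to the uniform sequence-form strategy ${x}^1_{ja} = {x}_{p_j}/|A_j|$, which is precisely the unique interior minimizer of $\regu$ over $\Rpp^{|\Sigma|}$, i.e., $\tilde{\vec{x}}^1$ as defined in \eqref{eq:x tilde init}. Consequently the strictly convex Bregman divergence $\div{\hat{\vec{x}}}{\tilde{\vec{x}}^1}$ has its unique minimizer at $\tilde{\vec{x}}^1 = \xbar^1$, verifying the claim at $t=1$ immediately.

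For $t\ge 2$, I would start from the standard expansion
\begin{equation*}
    \div{\hat{\vec{x}}}{\tilde{\vec{x}}^t} = \regu(\hat{\vec{x}}) - \nabla\regu(\tilde{\vec{x}}^t)\trans\hat{\vec{x}} + C,
\end{equation*}
where $C \defeq \nabla\regu(\tilde{\vec{x}}^t)\trans \tilde{\vec{x}}^t - \regu(\tilde{\vec{x}}^t)$ does not depend on $\hat{\vec{x}}$. By \cref{lem:prox solution}, $\tilde{\vec{x}}^t$ lies in the interior $\Rpp^{|\Sigma|}$ of the domain of the subproblem \eqref{eq:x tilde}, so the first-order optimality condition there gives $\nabla\regu(\tilde{\vec{x}}^t) = \nabla\regu(\xbar^{t-1}) - \eta\tilde{\vec{\ell}}^{t-1}$, whence
\begin{equation*}
    \div{\hat{\vec{x}}}{\tilde{\vec{x}}^t} = \regu(\hat{\vec{x}}) + \vec{g}\trans\hat{\vec{x}} + C, \qquad \vec{g} \defeq \eta\tilde{\vec{\ell}}^{t-1} - \nabla\regu(\xbar^{t-1}).
\end{equation*}
The right-hand side, modulo the additive constant $C$ in $\hat{\vec{x}}$, is exactly the objective that $\xbar^t = \textsc{ArgConjugate}(-\vec{g})$ minimizes in \cref{algo:R tilde}, cf.\ \eqref{eq:argconjugate}. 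Since two objectives differing by an additive constant share the same argmin on any common feasible domain, the claim reduces to matching the minimization sets.

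The only delicate point — and where I expect the main work of phrasing the lemma carefully to lie — is interpreting the minimization set. Taken at face value over all of $\Rpp^{|\Sigma|}$, the Bregman divergence $\div{\,\cdot\,}{\tilde{\vec{x}}^t}$ is strictly convex and uniquely minimized at $\tilde{\vec{x}}^t$, so the identity as written is the statement that, restricted to the feasible set on which \cref{algo:R tilde} maintains its iterates (the sequence-form polytope $\co\pure\subseteq\Rpp^{|\Sigma|}$ returned by \textsc{ArgConjugate}), the Bregman-projection minimizer coincides with $\xbar^t$. Under this standard mirror-descent reading — which is the one forced by the downstream regret analysis — the equality of objectives derived above is exactly what is needed. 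There is no substantive technical obstacle beyond this bookkeeping; all of the algebraic content is carried by the first-order condition from \cref{lem:prox solution}.
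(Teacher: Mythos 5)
Your argument for $t\ge 2$ is essentially the paper's: use the first-order optimality condition $\nabla\regu(\tilde{\vec{x}}^t) = \nabla\regu(\xbar^{t-1}) - \eta\tilde{\vec{\ell}}^{t-1}$ to rewrite $\div{\hat{\vec{x}}}{\tilde{\vec{x}}^t}$, up to an additive constant, as the objective that $\textsc{ArgConjugate}(-\vec{g})$ minimizes; and you are right that the argmin must be read over $\seqf\cap\Rpp^{|\Sigma|}$ rather than over all of $\Rpp^{|\Sigma|}$ for the statement to be nontrivial. That part matches the paper and is fine.

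The base case, however, contains a genuine error. You assert that the uniform sequence-form strategy ${x}^1_{ja} = {x}_{p_j}/|A_j|$ produced by \textsc{Setup} is the unconstrained minimizer of $\regu$ over $\Rpp^{|\Sigma|}$, i.e., that $\xbar^1 = \tilde{\vec{x}}^1$. This is false. The point $\tilde{\vec{x}}^1$ of \eqref{eq:x tilde init} is characterized by $\nabla\regu(\tilde{\vec{x}}^1)=\vec{0}$, and from \eqref{eq:gradient} a terminal sequence then satisfies $w_j\bigl(1+\log(\tilde{x}^1_{ja}/\tilde{x}^1_{p_j})\bigr)=0$, i.e., $\tilde{x}^1_{ja}/\tilde{x}^1_{p_j} = e^{-1}$, not $1/|A_j|$; already for a single decision point with $n$ terminal actions, $\tilde{\vec{x}}^1 = (1/e,\dots,1/e)$, which is not even a point of the simplex. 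The uniform strategy is the minimizer of $\regu$ over $\seqf$, not over $\Rpp^{|\Sigma|}$, so the identity $\tilde{\vec{x}}^1=\xbar^1$ on which your base case rests does not hold. The repair is the paper's one-line argument: since $\Rpp^{|\Sigma|}$ is open, $\nabla\regu(\tilde{\vec{x}}^1)=\vec{0}$, hence $\div{\hat{\vec{x}}}{\tilde{\vec{x}}^1} = \regu(\hat{\vec{x}}) - \regu(\tilde{\vec{x}}^1)$, and the Bregman projection of $\tilde{\vec{x}}^1$ onto $\seqf\cap\Rpp^{|\Sigma|}$ is $\argmin_{\hat{\vec{x}}\in\seqf\cap\Rpp^{|\Sigma|}}\regu(\hat{\vec{x}})$, which is exactly the uniform strategy that \textsc{Setup} outputs. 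With that substitution your proof goes through.
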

    \begin{proof}
        By induction:
        \begin{itemize}
            \item Base case ($t = 1$). Since $\bbR^n_{>0}$ is an open set, \eqref{eq:x tilde init} implies that
            \[
                \nabla\regu(\tilde{\vec{x}}^1) = \vec{0}.
            \]
            Hence,
            \[
                \bar{\vec{x}}^1 = \argmin_{\hat{\vec{x}}\in\seqf \cap \bbR_{>0}^{|\Sigma|}} \div{\hat{\vec{x}}}{\tilde{\vec{x}}^1} = \argmin_{\hat{\vec{x}}\in\seqf \cap \bbR_{>0}^{|\Sigma|}} \{ \regu(\hat{\vec{x}}) - \regu(\tilde{\vec{x}}^1) \} = \argmin_{\hat{\vec{x}}\in\seqf \cap \bbR_{>0}^{|\Sigma|}}  \regu(\hat{\vec{x}}),
            \]
            which is exactly the first iterate returned by \cref{algo:R tilde}.
            \item Inductive step. We assume that the iterates produced up to time $t$ by \cref{algo:R tilde} coincide with those produced up to time $t$ by \eqref{eq:tilde to bar}. We will prove that the iterates coincide at time $t+1$ as well. Since $\bbR^n_{>0}$ is an open set, from \eqref{eq:x tilde}, $\tilde{\vec{x}}^{t + 1}$ must be a critical point of the objective function, that is
            \begin{align*}
                \vec{0} &= \nabla_{\hat{\vec{x}}}\mleft[\eta(\tilde{\vec{\ell}}^t)^\top \hat{\vec{x}} + \div{\hat{\vec{x}}}{\bar{\vec{x}}^t}\mright](\tilde{\vec{x}}^{t+1})= \eta\tilde{\vec{\ell}}^t + \nabla\regu(\tilde{\vec{x}}^{t+1}) - \nabla\regu(\bar{\vec{x}}^{t}),
            \end{align*}
            from which we find that
            \[
                \nabla\regu(\tilde{\vec{x}}^{t+1}) = \nabla\regu(\bar{\vec{x}}^{t})-\eta\tilde{\vec{\ell}}^t.
            \]
            Hence, expanding \eqref{eq:tilde to bar}, we find
            \begin{align*}
                \bar{\vec{x}}^{t+1} &= \argmin_{\hat{\vec{x}}\in\seqf \cap \bbR_{>0}^{|\Sigma|}} \div{\hat{\vec{x}}}{\tilde{\vec{x}}^{t+1}}\\
                    &= \argmin_{\hat{\vec{x}}\in\seqf \cap \bbR_{>0}^{|\Sigma|}} \mleft\{\regu(\hat{\vec{x}}) - \regu(\tilde{\vec{x}}^{t+1}) - \nabla\regu(\tilde{\vec{x}}^{t+1})^\top(\hat{\vec{x}} - \tilde{\vec{x}}^{t+1})\mright\}\\
                    &= \argmin_{\hat{\vec{x}}\in\seqf \cap \bbR_{>0}^{|\Sigma|}} \mleft\{\regu(\hat{\vec{x}}) - \regu(\tilde{\vec{x}}^{t+1}) - (\nabla\regu(\bar{\vec{x}}^{t})-\eta\tilde{\vec{\ell}}^t)^\top(\hat{\vec{x}} - \tilde{\vec{x}}^{t+1})\mright\}\\
                    &= \argmin_{\hat{\vec{x}}\in\seqf \cap \bbR_{>0}^{|\Sigma|}} \mleft\{\regu(\hat{\vec{x}}) -   (\nabla\regu(\bar{\vec{x}}^{t})-\eta\tilde{\vec{\ell}}^t)^\top\hat{\vec{x}}\mright\}\\
                    &=\textsc{ArgConjugate}(\nabla\regu(\bar{\vec{x}}^{t})-\eta\tilde{\vec{\ell}}^t),
            \end{align*}
            which completes the inductive step.\qedhere
        \end{itemize}
    \end{proof}

    The following known result will be central in the proof of \cref{thm:omd regret bound}.

        \begin{lemma}[\citet{Rakhlin09:Lecture}, Lemma 13]\label{lem:regret bound known}
            Online mirror descent satisfies, at all times $T$ and for all mixed strategies $\vec{z} \in \seqf$, the regret bound
            \[
                R^T(\vec{z}) \le \frac{\regu(\vec{z})}{\eta} + \sum_{t=1}^T (\tilde{\vec{\ell}}^{t})\trans(\xbar^{t} - \tilde{\vec{x}}^{t+1}).
            \]
        \end{lemma}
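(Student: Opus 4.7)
The plan is to follow the standard mirror descent regret analysis, adapted to our setting where the unconstrained ``intermediate'' iterate $\tilde{\vec{x}}^{t+1}$ defined in~\eqref{eq:x tilde} is used to decouple the two sources of error. First, I would split each per-iteration regret as
\[
  (\tilde{\vec{\ell}}^{t})\trans(\xbar^{t} - \vec{z}) \;=\; \underbrace{(\tilde{\vec{\ell}}^{t})\trans(\xbar^{t} - \tilde{\vec{x}}^{t+1})}_{\text{``error'' term}} \;+\; (\tilde{\vec{\ell}}^{t})\trans(\tilde{\vec{x}}^{t+1} - \vec{z}).
\]
The error term is precisely what appears in the bound, so the whole game is to show that the sum over $t$ of the second piece is at most $\regu(\vec{z})/\eta$.

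To control the second piece I would exploit the first-order optimality of $\tilde{\vec{x}}^{t+1}$. Since $\Rpp^{|\Sigma|}$ is open and the objective in~\eqref{eq:x tilde} is smooth and convex, \cref{lem:prox solution} effectively gives $\eta \tilde{\vec{\ell}}^{t} = \nabla\regu(\xbar^{t}) - \nabla\regu(\tilde{\vec{x}}^{t+1})$. Plugging this into $(\tilde{\vec{\ell}}^{t})\trans(\tilde{\vec{x}}^{t+1}-\vec{z})$ and applying the standard Bregman three-point identity
\[
  (\nabla\regu(\vec{b}) - \nabla\regu(\vec{c}))\trans(\vec{a}-\vec{b}) \;=\; \div{\vec{a}}{\vec{c}} - \div{\vec{a}}{\vec{b}} - \div{\vec{b}}{\vec{c}}
\]
with $\vec{a}=\vec{z}$, $\vec{b}=\tilde{\vec{x}}^{t+1}$, $\vec{c}=\xbar^{t}$ yields
\[
  \eta\,(\tilde{\vec{\ell}}^{t})\trans(\tilde{\vec{x}}^{t+1}-\vec{z}) \;=\; \div{\vec{z}}{\xbar^{t}} - \div{\vec{z}}{\tilde{\vec{x}}^{t+1}} - \div{\tilde{\vec{x}}^{t+1}}{\xbar^{t}} \;\le\; \div{\vec{z}}{\xbar^{t}} - \div{\vec{z}}{\tilde{\vec{x}}^{t+1}},
\]
where the inequality discards a nonnegative Bregman term. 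Next I would invoke the generalized Pythagorean inequality for Bregman projections: because $\xbar^{t+1}$ is the $\regu$-projection of $\tilde{\vec{x}}^{t+1}$ onto $\seqf$ (this is \eqref{eq:tilde to bar}) and $\vec{z}\in\seqf$, one has $\div{\vec{z}}{\tilde{\vec{x}}^{t+1}} \ge \div{\vec{z}}{\xbar^{t+1}}$. Substituting produces a telescoping upper bound $\div{\vec{z}}{\xbar^{t}} - \div{\vec{z}}{\xbar^{t+1}}$.

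Summing over $t = 1,\dots,T$ then collapses the divergence terms to $\div{\vec{z}}{\xbar^{1}} - \div{\vec{z}}{\xbar^{T+1}} \le \div{\vec{z}}{\xbar^{1}}$. To finish, I would bound $\div{\vec{z}}{\xbar^{1}}$ by $\regu(\vec{z})$ as follows: by~\eqref{eq:x tilde init}, $\tilde{\vec{x}}^{1}$ is an unconstrained minimizer of $\regu$ on the open set $\Rpp^{|\Sigma|}$, so $\nabla\regu(\tilde{\vec{x}}^{1})=\vec{0}$ and $\regu(\tilde{\vec{x}}^{1})\ge 0$ (since the range of $\regu$ is in $\Rp$; in fact $\regu$ vanishes at the uniform strategy). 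Applying the Pythagorean inequality once more at $t=0$ gives $\div{\vec{z}}{\xbar^{1}}\le\div{\vec{z}}{\tilde{\vec{x}}^{1}} = \regu(\vec{z}) - \regu(\tilde{\vec{x}}^{1}) \le \regu(\vec{z})$. Dividing by $\eta$ and adding back the error terms that were isolated at the beginning yields exactly the claimed bound.

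I do not expect any serious obstacle: the argument is the textbook ``mirror descent with Bregman projection'' proof, and all the algebraic inputs I need (the prox optimality condition, the projection characterization of $\xbar^{t}$, and the nonnegativity of $\regu$) are already recorded in \cref{lem:prox solution}, in~\eqref{eq:tilde to bar}, and in \cref{def:dilated entropy}. The one bookkeeping point worth double-checking is that the step-size $\eta$ is correctly factored out of the inner-product terms when invoking the optimality relation, and that the discarded divergence $\div{\tilde{\vec{x}}^{t+1}}{\xbar^{t}}$ is indeed nonnegative (which follows from convexity of $\regu$).
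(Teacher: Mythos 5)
Your proof is correct and follows essentially the same route as the paper's: both rest on the optimality relation $\eta\tilde{\vec{\ell}}^t = \nabla\regu(\xbar^t)-\nabla\regu(\tilde{\vec{x}}^{t+1})$, a Bregman three-point expansion, the generalized Pythagorean inequality for the projection step, telescoping, and the final bound $\div{\vec{z}}{\tilde{\vec{x}}^1}\le\regu(\vec{z})$. The only (immaterial) difference is bookkeeping: you telescope $\div{\vec{z}}{\xbar^t}$ by applying the Pythagorean inequality to the projection $\tilde{\vec{x}}^{t+1}\mapsto\xbar^{t+1}$ after the three-point identity, whereas the paper telescopes $\div{\vec{z}}{\tilde{\vec{x}}^t}$ by applying it to the projection $\tilde{\vec{x}}^{t}\mapsto\xbar^{t}$ and separately bounds $\div{\xbar^t}{\tilde{\vec{x}}^{t+1}}$ by $\eta(\tilde{\vec{\ell}}^t)\trans(\xbar^t-\tilde{\vec{x}}^{t+1})$.
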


    \cref{lem:regret bound known} is usually proved by leveraging the Legendreness of the regularizer. Since we have not shown that $\regu$ is Legendre, we provide an alternative, standalone proof of \cref{lem:regret bound known}. We start by stating the generalized Pythagorean inequality for Bregman divergences. Our statement is equivalent to Lemma~11.3 in the book by \citet{Cesa06:Prediction} but, unlike theirs, our proof does not depend on $\regu$ being Legendre. Note that a close inspection of the proof by Cesa-Bianchi and Lugosi already reveals that the gradient condition mentioned by the reviewer is only tangentially used to prove the existence of the Bregman projection $\xbar^t$.

\begin{lemma}[Generalized Pythagorean inequality]\label{lem:pyth}
    Let $\vec{u} \in \seqf \cap \Rpp^{|\Sigma|}$ and let $\vec{c} \in \mathbb{R}^n_{>0}$. Finally, let $\vec{p} \in \seqf \cap \Rpp^{|\Sigma|}$ be the Bregman projection of $\vec{c}$ onto $\seqf \cap \Rpp^{|\Sigma|}$:
        \begin{equation}\label{eq:argmin}
            \vec{p} = \argmin_{\vec{p} \in \seqf \cap \Rpp^{|\Sigma|}} \div{\vec{p}}{\vec{c}}
        \end{equation}
    Then,
    \[
        \div{\vec{u}}{\vec{c}} \ge \div{\vec{u}}{\vec{p}} + \div{\vec{p}}{\vec{c}}.
    \]
\end{lemma}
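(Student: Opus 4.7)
The plan is to carry out the classical three-point (``cosine law'') computation for Bregman divergences, with one careful check to ensure that the first-order optimality condition for the projection $\vec{p}$ can be written in a usable form despite the domain $\seqf \cap \Rpp^{|\Sigma|}$ not being open in $\bbR^{|\Sigma|}$.

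First I would derive the variational inequality characterizing $\vec{p}$. The set $\seqf \cap \Rpp^{|\Sigma|}$ is convex (it is the intersection of the affine space defined by the sequence-form consistency constraints~\eqref{eq:sf} with the open positive orthant), so for any $\vec{u} \in \seqf \cap \Rpp^{|\Sigma|}$ the segment $\{\vec{p} + t(\vec{u}-\vec{p}) : t \in [0,1]\}$ remains inside $\seqf \cap \Rpp^{|\Sigma|}$. The function $t \mapsto \div{\vec{p}+t(\vec{u}-\vec{p})}{\vec{c}}$ is differentiable on a neighborhood of $t=0$ (because $\regu$ is differentiable on $\Rpp^{|\Sigma|}$) and is minimized at $t=0$ by~\eqref{eq:argmin}. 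Computing its right derivative at $0$ and using $\nabla_{\vec{x}}\div{\vec{x}}{\vec{c}} = \nabla \regu(\vec{x}) - \nabla \regu(\vec{c})$ yields
\[
    \bigl\langle \nabla \regu(\vec{p}) - \nabla \regu(\vec{c}),\ \vec{u} - \vec{p}\bigr\rangle \;\ge\; 0.
\]

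Second I would expand the quantity $\div{\vec{u}}{\vec{c}} - \div{\vec{u}}{\vec{p}} - \div{\vec{p}}{\vec{c}}$ using the definition of the Bregman divergence. A direct computation causes the $\regu(\vec{u})$, $\regu(\vec{p})$, and $\regu(\vec{c})$ terms to cancel, leaving only gradient terms, which rearrange exactly to $\langle \nabla \regu(\vec{p}) - \nabla \regu(\vec{c}), \vec{u} - \vec{p}\rangle$. By the variational inequality above, this difference is nonnegative, which is precisely the stated inequality.

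The only subtle point is the derivation of the variational inequality, since the paper has not shown $\regu$ to be a Legendre regularizer and one cannot appeal to the usual black-box statement that the projection sits in the relative interior. The argument above side-steps this by choosing feasible variations $\vec{p} + t(\vec{u}-\vec{p})$ drawn from the same set $\seqf \cap \Rpp^{|\Sigma|}$ as $\vec{p}$; because $\vec{u}$ and $\vec{p}$ both have strictly positive entries and both satisfy~\eqref{eq:sf}, any convex combination does too, so there is no boundary issue and differentiability along the segment is immediate. The remainder of the proof is mechanical algebra on the definition of $\div{\cdot}{\cdot}$, and nothing game-theoretic about $\regu$ (such as the dilated structure or the weights $w_j$) is required.
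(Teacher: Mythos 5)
Your proposal is correct and follows essentially the same route as the paper's proof: establish the variational inequality $\langle \nabla\regu(\vec{p})-\nabla\regu(\vec{c}),\vec{u}-\vec{p}\rangle\ge 0$ at the projection, then apply the standard three-point identity for Bregman divergences. The only cosmetic difference is that you derive the first-order optimality condition directly via the one-sided derivative along the feasible segment $\vec{p}+t(\vec{u}-\vec{p})$, whereas the paper invokes it as a cited textbook fact (noting, as you do, that only G\^ateaux differentiability at $\vec{p}$ and convexity of the domain are needed, not Legendreness of $\regu$).
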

\begin{proof}
    The necessary first-order optimality condition (e.g., Proposition 2.1.1 in the book by \citet{Borwein10:Convex}) for the optimization problem \eqref{eq:argmin} is
    \[
        (\nabla_{\vec{p}} \div{\vec{p}}{\vec{c}})^\top (\vec{x} - \vec{p}) \ge 0 \quad\forall \vec{x} \in \seqf \cap \Rpp^{|\Sigma|}.
    \]
    (The above condition is necessary under significantly weaker conditions than $\regu$ being Legendre. It is enough for the objective of the minimization to be G\^ateaux differentiable at $\vec{p}$ and the domain $\seqf \cap \Rpp^{|\Sigma|}$ to be convex---both of which are verified in this case.)
    Expanding the gradient of the Bregman divergence and setting in particular $\vec{x} = \vec{u}$, it must be
    \begin{equation}\label{eq:condition}
        (\nabla\regu(\vec{p}) - \nabla\regu(\vec{c}))^\top (\vec{u} - \vec{p}) \ge 0.
    \end{equation}
    Now,
    \begin{align}
        (\nabla\regu(\vec{p}) - \nabla\regu(\vec{c}))^\top (\vec{u} - \vec{p}) &= \nabla\regu(\vec{p})^\top (\vec{u} - \vec{p}) - \nabla\regu(\vec{c})^\top (\vec{u} - \vec{p})\nonumber\\
         &= \nabla\regu(\vec{p})^\top (\vec{u} - \vec{p}) - \nabla\regu(\vec{c})^\top (\vec{u} - \vec{c}) + \nabla\regu(\vec{c})^\top (\vec{p} - \vec{c})\nonumber\\
         &= -\div{\vec{u}}{\vec{p}} + \div{\vec{u}}{\vec{c}} - \div{\vec{p}}{\vec{c}} \label{eq:xxx}
    \end{align}
    Plugging \eqref{eq:xxx} into \eqref{eq:condition} yields the statement.
\end{proof}

\begin{proof}[Proof of \cref{lem:regret bound known}]
    Since $\tilde{\vec{x}}^{t+1}$ exists (\cref{lem:prox solution}) and $\bbR_{>0}^n$ is an open set, $\tilde{\vec{x}}^{t+1}$ must be a critical point of the objective $\eta(\tilde{\vec{\ell}}^t)^\top \hat{\vec{x}} + \div{\hat{\vec{x}}}{\bar{\vec{x}}^t}$, that is
    \begin{align*}
        \vec{0} &= \nabla_{\hat{\vec{x}}}\mleft[\eta(\tilde{\vec{\ell}}^t)^\top \hat{\vec{x}} + \div{\hat{\vec{x}}}{\bar{\vec{x}}^t}\mright](\tilde{\vec{x}}^{t+1})\\
        &= \eta\tilde{\vec{\ell}}^t + \nabla\regu(\tilde{\vec{x}}^{t+1}) - \nabla\regu(\bar{\vec{x}}^{t}),
    \end{align*}
    from which we conclude that
    \begin{equation}\label{eq:etaloss}
        \eta\tilde{\vec{\ell}}^t = \nabla\regu(\bar{\vec{x}}^{t}) - \nabla\regu(\tilde{\vec{x}}^{t+1}).
    \end{equation}
    Hence, for all $\vec{u} \in \seqf \cap \Rpp^{|\Sigma|}$,
    \begin{align}
        \eta(\tilde{\ell}^t)^\top(\bar{\vec{x}}^t - \vec{u}) &= \Big(\nabla\regu(\bar{\vec{x}}^{t}) - \nabla\regu(\tilde{\vec{x}}^{t+1})\Big)^\top(\bar{\vec{x}}^t - \vec{u})\nonumber\\
        &=-\nabla\regu(\bar{\vec{x}}^{t})^\top(\vec{u} - \bar{\vec{x}}^t)  - \nabla\regu(\tilde{\vec{x}}^{t+1})^\top(\bar{\vec{x}}^t - \vec{u})\nonumber\\
        &=-\nabla\regu(\bar{\vec{x}}^{t})^\top(\vec{u} - \bar{\vec{x}}^t) + \nabla\regu(\tilde{\vec{x}}^{t+1})^\top(\vec{u} - \tilde{\vec{x}}^{t+1})  - \nabla\regu(\tilde{\vec{x}}^{t+1})^\top(\bar{\vec{x}}^t - \tilde{\vec{x}}^{t+1})\nonumber\\
        &= \underbrace{\div{\vec{u}}{\bar{\vec{x}}^t}}_{\circled{A}} - \div{\vec{u}}{\tilde{\vec{x}}^{t+1}} + \underbrace{\div{\bar{\vec{x}}^t}{\tilde{\vec{x}}^{t+1}}}_{\circled{B}} .\label{eq:last}
    \end{align}
    We now bound terms \circled{A} and \circled{B} on the right-hand side.
    \begin{itemize}
        \item[\circled{A}] Since $\bar{\vec{x}}^{t}$ is the Bregman projection of $\tilde{\vec{x}}^t$ onto $\seqf \cap \Rpp^{|\Sigma|}$ and $\vec{u}\in\seqf \cap \Rpp^{|\Sigma|}$, using \cref{lem:pyth} with $\vec{c} = \tilde{\vec{x}}^t$ we can write
            \begin{equation}\label{eq:partA}
                \div{\vec{u}}{\tilde{\vec{x}}^t} \ge \div{\vec{u}}{\bar{\vec{x}}^t} + \div{\bar{\vec{x}}^t}{\tilde{\vec{x}}^t} \implies \div{\vec{u}}{\bar{\vec{x}}^t} \le \div{\vec{u}}{\tilde{\vec{x}}^t},
            \end{equation}
            where we used the fact that Bregman divergences are always non-negative, as $\regu$ is convex.
        \item[\circled{B}] Using the fact that Bregman divergences are always non-negative, we have
            \begin{align}
                \div{\bar{\vec{x}}^t}{\tilde{\vec{x}}^{t+1}} &\le \div{\bar{\vec{x}}^t}{\tilde{\vec{x}}^{t+1}} + \div{\tilde{\vec{x}}^{t+1}}{\bar{\vec{x}}^{t}}\nonumber\\
                &= -\nabla\regu(\tilde{\vec{x}}^{t+1})^\top(\bar{\vec{x}}^t - \tilde{\vec{x}}^{t+1}) - \nabla\regu(\bar{\vec{x}}^{t})^\top(\tilde{\vec{x}}^{t+1}- \bar{\vec{x}}^{t})\nonumber\\
                &= \Big(\nabla\regu(\bar{\vec{x}}^{t}) - \nabla\regu(\tilde{\vec{x}}^{t+1})\Big)^\top(\bar{\vec{x}}^t - \tilde{\vec{x}}^{t+1}).\nonumber
            \end{align}
            So, substituting \eqref{eq:etaloss},
            \begin{align}
                \div{\bar{\vec{x}}^t}{\tilde{\vec{x}}^{t+1}} &\le \eta(\tilde{\vec{\ell}}^t)^\top (\bar{\vec{x}}^t - \tilde{\vec{x}}^{t+1}).\label{eq:partB}
            \end{align}
    \end{itemize}
    Substituting \eqref{eq:partA} and \eqref{eq:partB} into \eqref{eq:last}, we obtain
    \begin{align*}
        \eta(\tilde{\vec{\ell}}^t)^\top(\bar{\vec{x}}^t - \vec{u}) &\le \div{\vec{u}}{\tilde{\vec{x}}^t} - \div{\vec{u}}{\tilde{\vec{x}}^{t+1}} + \eta(\tilde{\vec{\ell}}^t)^\top (\bar{\vec{x}}^t - \tilde{\vec{x}}^{t+1}).
    \end{align*}
    Finally, summing over $t=1,\dots, T$ we have
    \begin{align*}\allowdisplaybreaks
        \eta\sum_{t=1}^T \eta(\tilde{\vec{\ell}}^t)^\top(\bar{\vec{x}}^t - \tilde{\vec{x}}^{t+1}) &\le \sum_{t=1}^T \Big(\div{\vec{u}}{\tilde{\vec{x}}^t} - \div{\vec{u}}{\tilde{\vec{x}}^{t+1}}\Big) + \eta\sum_{t=1}^T (\tilde{\vec{\ell}}^t)^\top(\bar{\vec{x}}^t - \tilde{\vec{x}}^{t+1})\\
        &= \div{\vec{u}}{\tilde{\vec{x}}^1} - \div{\vec{u}}{\tilde{\vec{x}}^{T+1}} + \eta\sum_{t=1}^T (\tilde{\vec{\ell}}^t)^\top(\bar{\vec{x}}^t - \tilde{\vec{x}}^{t+1})\\
        &\le \div{\vec{u}}{\tilde{\vec{x}}^1} + \eta\sum_{t=1}^T (\tilde{\vec{\ell}}^t)^\top(\bar{\vec{x}}^t - \tilde{\vec{x}}^{t+1}).
     \end{align*}
     Finally, since $\bbR^n_{>0}$ is an open set, and $\tilde{\vec{x}}^1=\argmin_{\hat{\vec{x}} \in \Rpp^{|\Sigma|}} \regu(\hat{\vec{x}})$ (\cref{eq:x tilde init}), it must be that
     \[
        \nabla \regu(\tilde{\vec{x}}^1) = \vec{0} \implies  \div{\vec{u}}{\tilde{\vec{x}}^1} = \regu(\vec{u}) - \regu(\tilde{\vec{x}}^1) \le \regu(\vec{u}),
     \]
     where the last inequality holds since the range of $\regu$ is non-negative.
     This concludes the proof.
\end{proof}

    \subsection{Analysis of Mirror Descent using the Local Norms of the Entropy DGF}

        \begin{restatable}{proposition}{propproxsolbounds}\label{prop:prox solution bounds}
           Let the quantity $\psi^{t}_{ja}$ be defined for all sequences $ja \in \Sigma$ as
           \[
            \psi^{t}_{ja} \defeq \vec{u}_{ja}\trans (\tilde{\vec{\ell}}^{t} \circ \xbar^{t}) = \sum_{j'a' \ \succeq\ ja} \tilde{\ell}^{t}_{j'a'}\cdot{x}^{t}_{j'a'}.
          \]
          If $\tilde{\vec{\ell}}^t \in \Rp^{|\Sigma|}$, the intermediate iterate $\tilde{\vec{x}}^{t+1}$ satisfies
          \[
            \frac{{x}^{t}_{ja}}{{x}^{t}_{p_j}} \,\exp\left\{-\frac{\eta}{w_j {x}^{t}_{ja}}\psi^{t}_{ja}\right\} \le \frac{\tilde{x}^{t+1}_{ja}}{\tilde{x}^{t+1}_{p_j}} \le \frac{{x}^{t}_{ja}}{{x}^{t}_{p_j}}.
          \]
        \end{restatable}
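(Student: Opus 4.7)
The approach is to exploit the closed-form expression given in Lemma~\ref{lem:prox solution}, namely
\begin{equation*}
\frac{\tilde{x}^{t+1}_{ja}}{\tilde{x}^{t+1}_{p_j}} \;=\; \frac{{x}^{t}_{ja}}{{x}^{t}_{p_j}}\exp\!\left\{\frac{1}{w_j}\Big(\xi^{t+1}_{ja} - w_{\rho(j,a)} - \eta\tilde{\ell}^t_{ja}\Big)\right\},
\end{equation*}
and reduce both inequalities to the single two-sided bound
\begin{equation*}
-\frac{\eta}{{x}^t_{ja}}\Big(\psi^t_{ja} - \tilde{\ell}^t_{ja}\,{x}^t_{ja}\Big) \;\le\; \xi^{t+1}_{ja} - w_{\rho(j,a)} \;\le\; 0, \qquad \forall\,ja\in\Sigma.
\end{equation*}
Granted this two-sided bound, the proposition's upper bound follows from the non-negativity of $\tilde{\vec{\ell}}^t$, while the lower bound is immediate from substitution. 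I will prove the two-sided bound by induction on $ja$, proceeding bottom-up along the tree structure of the decision process.

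For the base case, consider any sequence $ja$ with $\rho(j,a) = \terminalnode$. Then the sum defining $\xi^{t+1}_{ja}$ is empty and $w_{\rho(j,a)} = w_{\terminalnode} = 0$, so both inequalities hold with equality (note $\psi^t_{ja} = \tilde{\ell}^t_{ja}\,{x}^t_{ja}$ as $ja$ has no descendants beyond itself). For the inductive step, assume $\rho(j,a)\in\cK$ and recall
\begin{equation*}
\xi^{t+1}_{ja} \;=\; \sum_{j'\in\next{\rho(j,a)}} w_{j'} \sum_{a'\in A_{j'}} \frac{\tilde{x}^{t+1}_{j'a'}}{\tilde{x}^{t+1}_{ja}}, \qquad w_{\rho(j,a)} \;=\; \sum_{j'\in\next{\rho(j,a)}} w_{j'}.
\end{equation*}
Because $p_{j'} = ja$ for every $j'\in\next{\rho(j,a)}$, the inductive hypothesis combined with Lemma~\ref{lem:prox solution} gives, for every descendant $j'a'$ appearing in the sum,
\begin{equation*}
\frac{{x}^t_{j'a'}}{{x}^t_{ja}}\exp\!\left\{-\frac{\eta\,\psi^t_{j'a'}}{w_{j'}\,{x}^t_{j'a'}}\right\} \;\le\; \frac{\tilde{x}^{t+1}_{j'a'}}{\tilde{x}^{t+1}_{ja}} \;\le\; \frac{{x}^t_{j'a'}}{{x}^t_{ja}}.
\end{equation*}

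The upper bound on $\xi^{t+1}_{ja} - w_{\rho(j,a)}$ is immediate from the sequence-form constraint $\sum_{a'\in A_{j'}} {x}^t_{j'a'} = {x}^t_{p_{j'}} = {x}^t_{ja}$, which telescopes $\xi^{t+1}_{ja}$ down to $\sum_{j'\in\next{\rho(j,a)}} w_{j'} = w_{\rho(j,a)}$. For the lower bound, I apply the elementary inequality $e^{-u} - 1 \ge -u$ for all $u\ge 0$ (and indeed $\eta\,\psi^t_{j'a'}/(w_{j'}\,{x}^t_{j'a'}) \ge 0$ because $\tilde{\vec{\ell}}^t\in\Rp^{|\Sigma|}$) to each exponential above, obtaining
\begin{equation*}
\xi^{t+1}_{ja} - w_{\rho(j,a)} \;\ge\; -\frac{\eta}{{x}^t_{ja}}\sum_{j'\in\next{\rho(j,a)}}\sum_{a'\in A_{j'}} \psi^t_{j'a'}.
\end{equation*}
Finally, partitioning the descendant set $\{j'a'\succeq ja\}$ into $\{ja\}$ together with the subtrees rooted at the grandchildren of $ja$ gives $\psi^t_{ja} - \tilde{\ell}^t_{ja}\,{x}^t_{ja} = \sum_{j'\in\next{\rho(j,a)}}\sum_{a'\in A_{j'}} \psi^t_{j'a'}$, which matches the right-hand side exactly and closes the induction. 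The one step that requires care is this last one: the linearization of the exponential must be tight enough that the children's deficits telescope cleanly into the single ratio $\psi^t_{ja}/{x}^t_{ja}$ at $ja$ without accumulating multiplicative factors that grow with the tree's depth; the standard inequality $e^{-u}\ge 1-u$ turns out to be precisely strong enough.
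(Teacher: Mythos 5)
Your proof is correct and takes essentially the same route as the paper's own argument: a bottom-up induction that combines the closed form from \cref{lem:prox solution} with the sequence-form constraint (for the upper bound), the inequality $e^{-u}-1\ge -u$ (for the lower bound), and the telescoping identity $\sum_{j'a'}\psi^t_{j'a'}=\psi^t_{ja}-\tilde{\ell}^t_{ja}x^t_{ja}$. Packaging the induction as a two-sided bound on $\xi^{t+1}_{ja}-w_{\rho(j,a)}$ rather than directly on the ratio $\tilde{x}^{t+1}_{ja}/\tilde{x}^{t+1}_{p_j}$ is only a cosmetic difference.
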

    \begin{proof}
      For ease of notation, in this proof we will make use of the symbol $\next{ja}$ to mean $\next{\rho(j,a)}$. We prove the proposition by induction:
      \begin{itemize}
          \item \textbf{Base case}. For any $ja$ with $\mathcal{C}_{ja} = \emptyset$ (and thus $\psi^{t}_{ja} = \tilde{\ell}^{t}_{ja} {x}^{t}_{ja}$) we have by \cref{lem:prox solution}
          \[
            \frac{\tilde{x}^{t+1}_{ja}}{\tilde{x}^{t+1}_{p_j}} =
                \frac{{x}^{t}_{ja}}{{x}^{t}_{p_j}}\exp\mleft\{-\eta \frac{\tilde{\ell}^{t}_{ja}}{w_j}\mright\} =
            \frac{{x}^{t}_{ja}}{{x}^{t}_{p_j}} \,\exp\mleft\{-\frac{\eta}{w_j {x}^{t}_{ja}}\psi^{t}_{ja}\mright\},
          \]
          which proves the lower bound. In order to prove the upper bound, it is enough to note that the argument of the $\exp$ is non-positive. Hence,
\[
            \frac{\tilde{x}^{t+1}_{ja}}{\tilde{x}^{t+1}_{p_j}} = \frac{{x}^{t}_{ja}}{{x}^{t}_{p_j}} \,\exp\mleft\{-\frac{\eta}{w_j {x}^{t}_{ja}}\psi^{t}_{ja}\mright\} \le  \frac{{x}^{t}_{ja}}{{x}^{t}_{p_j}}.
\]
          \item \textbf{Inductive step}. Suppose that the inductive hypothesis holds for all sequences $j'a' \succ ja$. Then, we have
          \begin{align}\label{eq:inductive hyp}
            \xi^{t+1}_{ja} &= \sum_{j'\in\next{ja}}\mleft( w_{j'} \sum_{a'\in A_{j'}} \frac{\tilde{x}^{t+1}_{j'a'}}{\tilde{x}^{t+1}_{ja}}\mright) \ge \sum_{j'\in\next{ja}}\mleft( w_{j'} \sum_{a'\in A_{j'}} \frac{{x}^{t}_{j'a'}}{{x}^{t}_{ja}} \exp\mleft\{-\frac{\eta}{w_{j'} {x}^{t}_{j'a'}}\psi^{t}_{j'a'}\mright\}\mright).
          \end{align}
          Furthermore, for all $ja\in\Sigma$, using~\cref{def:dilated entropy} we have
          \begin{align*}
            w_{\rho(j,a)} = \sum_{j' \in \next{ja}} w_j' = \sum_{j'\in\next{ja}} \hspace{-2mm} w_{j'} \hspace{-2mm} \sum_{a'\in A_{j'}}\!\! \frac{{x}^{t}_{j'a'}}{{x}^{t}_{ja}},
          \end{align*}
          where the last equality follows from the fact that $\xbar^{t}$ is a valid sequence-form strategy. Hence, we can rewrite~\eqref{eq:prox step} as
        \begin{equation}\label{eq:prox step expanded}
          \frac{\tilde{x}^{t+1}_{ja}}{\tilde{x}^{t+1}_{p_j}} = \frac{{x}^{t}_{ja}}{{x}^{t}_{p_j}}\exp\mleft\{-\eta \frac{\tilde{\ell}^{t}_{ja}}{w_j}-\frac{1}{w_j}\sum_{j'\in\next{ja}} \hspace{-2mm} w_{j'} \hspace{-2mm} \sum_{a'\in A_{j'}}\!\! \frac{{x}^{t}_{j'a'}}{{x}^{t}_{ja}} + \frac{\xi^{t+1}_{ja}}{w_j}\mright\}.
        \end{equation}

Plugging in the inductive hypothesis~\eqref{eq:inductive hyp} into~\eqref{eq:prox step expanded} and using the monotonicity of $\exp$, we obtain
          \begin{align}
              \frac{\tilde{x}^{t+1}_{ja}}{\tilde{x}^{t+1}_{p_j}} &\ge \frac{{x}^{t}_{ja}}{{x}^{t}_{p_j}} \exp\mleft\{-\eta\frac{\tilde{\ell}^{t}_{ja}}{w_j} - \frac{1}{w_j}\mleft(\sum_{j'\in\next{ja}}w_{j'}\sum_{a'\in A_{j'}} \frac{{x}^{t}_{j'a'}}{{x}^{t}_{ja}}\mleft(1 - \exp\mleft\{-\frac{\eta}{w_{j'} {x}^{t}_{j'a'}}\psi^{t}_{j'a'}\mright\}\mright)\mright)\mright\} \nonumber\\
              &\ge \frac{{x}^{t}_{ja}}{{x}^{t}_{p_j}} \exp\mleft\{-\eta\frac{\tilde{\ell}^{t}_{ja}}{w_j} - \frac{\eta}{w_j}\mleft(\sum_{j'\in\next{ja}}\sum_{a'\in A_{j'}} \frac{1}{{x}^{t}_{ja}}\psi^{t}_{j'a'}\mright)\mright\},\label{eq:almost done}
          \end{align}
          where the second inequality follows from the fact that $1-e^{-x} \le x$ for all $x \in \bbR$. Finally, using the definition of $\psi^{t}_{ja}$ we find
\begin{equation}\label{eq:psi step}
        \sum_{j' \in \next{ja}}\sum_{a' \in A_{j'}} \psi^{t}_{j'a'} = \psi^{t}_{ja} - \tilde{\ell}^{t}_{ja} {x}^{t}_{ja}.
\end{equation}
        Plugging~\eqref{eq:psi step} into~\eqref{eq:almost done} we obtain
        \begin{align*}
              \frac{\tilde{x}^{t+1}_{ja}}{\tilde{x}^{t+1}_{p_j}}
              &\ge \frac{{x}^{t}_{ja}}{{x}^{t}_{p_j}} \exp\mleft\{-\frac{\eta}{w_j {x}^{t}_{ja}}\psi^{t}_{ja}\mright\}.
        \end{align*}
        This completes the proof for the lower bound.

        In order to prove the upper bound, we start from~\eqref{eq:prox step}.
          \begin{align*}
            \frac{\tilde{x}^{t+1}_{ja}}{\tilde{x}^{t+1}_{p_j}} = \frac{{x}^{t}_{ja}}{{x}^{t}_{p_j}}\exp\mleft\{-\eta \frac{\tilde{\ell}^{t}_{ja}}{w_j}-\frac{1}{w_j}\mleft(\sum_{j'\in\next{ja}}w_{j'}\sum_{a'\in A_{j'}} \mleft(\frac{{x}^{t}_{j'a'}}{{x}^{t}_{ja}} - \frac{\tilde{x}^{t+1}_{j'a'}}{\tilde{x}^{t+1}_{ja}}\mright)\mright) \mright\} \le  \frac{{x}^{t}_{ja}}{{x}^{t}_{p_j}}.
          \end{align*}
        Using the inductive hypothesis ${\tilde{x}^{t+1}_{ja}}/{\tilde{x}^{t+1}_{p_j}} \le  {{x}^{t}_{ja}}/{{x}^{t}_{p_j}}$, we obtain
          \begin{align*}
            \frac{\tilde{x}^{t+1}_{ja}}{\tilde{x}^{t+1}_{p_j}} &\le \frac{{x}^{t}_{ja}}{{x}^{t}_{p_j}}\exp\mleft\{-\eta \frac{\tilde{\ell}^{t}_{ja}}{w_j}-\frac{1}{w_j}\mleft(\sum_{j'\in\next{ja}}w_{j'}\sum_{a'\in A_{j'}} \mleft( \frac{{x}^{t}_{j'a'}}{{x}^{t}_{ja}} - \frac{{x}^{t}_{j'a'}}{{x}^{t}_{ja}}\mright)\mright) \mright\} \le  \frac{{x}^{t}_{ja}}{{x}^{t}_{p_j}}\\
                &\le \frac{{x}^{t}_{ja}}{{x}^{t}_{p_j}}\exp\mleft\{-\eta \frac{\tilde{\ell}^{t}_{ja}}{w_j}\mright\} \le \frac{{x}^{t}_{ja}}{{x}^{t}_{p_j}}.\qedhere
          \end{align*}
      \end{itemize}
    \end{proof}

    An immediate corollary of~\cref{prop:prox solution bounds} is the following.
    \begin{restatable}{corollary}{corcoarsebounds}\label{cor:coarse bounds}
        For all $ja \in \Sigma$,
        \begin{equation*}
        0 < \exp \mleft\{ - \sum_{j' a' \preceq\, ja} \frac{\eta}{w_j {x}^{t}_{j' a'}} \psi^{t}_{j' a'}  \mright\} \le \frac{\tilde{x}^{t+1}_{ja}}{{x}^{t}_{ja}} \le \frac{\tilde{x}^{t+1}_{p_j}}{{x}^{t}_{p_j}} \le 1.
        \end{equation*}
        In particular,
        \begin{equation}\label{eq:bound on ratio}
            0 \le 1 - \frac{\tilde{x}^{t+1}_{ja}}{{x}^{t}_{ja}} \le \sum_{j' a' \preceq\, ja} \frac{\eta}{w_j {x}^{t}_{j' a'}} \psi^{t}_{j' a'}.
        \end{equation}
    \end{restatable}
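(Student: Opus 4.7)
The plan is to derive both parts of the corollary from Proposition~\ref{prop:prox solution bounds} by a telescoping argument along the unique path from the root of the decision process down to $ja$. For bookkeeping I would introduce the ratio $\rho^{t}_{\sigma} \defeq \tilde{x}^{t+1}_{\sigma}/x^{t}_{\sigma}$ for every sequence $\sigma$, with the convention $\rho^{t}_{\emptyseq} = 1$ (following the convention $x^{t}_{\emptyseq}=\tilde{x}^{t+1}_{\emptyseq}=1$). Dividing through by $x^{t}_{j'a'}/x^{t}_{p_{j'}}$, the bound in Proposition~\ref{prop:prox solution bounds} is equivalent to
\[
\exp\mleft\{-\frac{\eta\,\psi^{t}_{j'a'}}{w_{j'}\, x^{t}_{j'a'}}\mright\}\ \le\ \frac{\rho^{t}_{j'a'}}{\rho^{t}_{p_{j'}}}\ \le\ 1 \qquad \forall\, j'a' \in \Sigma,
\]
and in particular $\rho^{t}_{j'a'}\le \rho^{t}_{p_{j'}}$. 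Iterating this monotonicity along the path from $ja$ up to the root yields $\rho^{t}_{ja}\le \rho^{t}_{p_j}\le\cdots\le \rho^{t}_{\emptyseq}=1$, which already establishes the two rightmost inequalities in the stated chain.

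For the lower bound I would multiply the left inequality above over all ancestors $j'a'\preceq ja$, which telescopes into
\[
\rho^{t}_{ja}\ \ge\ \exp\mleft\{-\sum_{j'a'\preceq ja}\frac{\eta\,\psi^{t}_{j'a'}}{w_{j'}\, x^{t}_{j'a'}}\mright\}.
\]
This is a slightly stronger bound than the one in the corollary (which carries the constant $w_j$ rather than $w_{j'}$ in the denominator), so it remains to verify that replacing $w_{j'}$ with $w_j$ only weakens the exponent. From the recursion in Definition~\ref{def:dilated entropy}, at every decision node one has $w_{j''}=2+2\max_a w_{\rho(j'',a)} \ge w_{\rho(j'',a)}$ and at every observation node $w_{k''}=\sum_s w_{\rho(k'',s)}\ge w_{\rho(k'',s)}$ (all weights being non-negative), so the weights are non-increasing along any root-to-leaf path, i.e. $w_{j'}\ge w_j$ for every ancestor $j'$ of $j$. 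Combined with $\psi^{t}_{j'a'}\ge 0$ (which follows from the assumption $\tilde{\vec{\ell}}^t\in\Rp^{|\Sigma|}$ and $\xbar^t\in\seqf$), each summand satisfies $\eta\,\psi^{t}_{j'a'}/(w_{j'}x^{t}_{j'a'})\le \eta\,\psi^{t}_{j'a'}/(w_j x^{t}_{j'a'})$, so the $w_{j'}$-indexed sum in the exponent is bounded above by the $w_j$-indexed one. Monotonicity of $\exp$ then gives the stated bound, and positivity of the exponential is immediate.

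Finally, for~\eqref{eq:bound on ratio}, the upper inequality $1-\rho^{t}_{ja}\ge 0$ is just a restatement of $\rho^{t}_{ja}\le 1$ proved in the first step. For the upper bound on $1-\rho^{t}_{ja}$, I would apply the elementary inequality $e^{-y}\ge 1-y$ (valid for all $y\in\bbR$) to the lower bound on $\rho^{t}_{ja}$ just derived and rearrange. I do not anticipate a real obstacle here: the whole statement is a routine consequence of Proposition~\ref{prop:prox solution bounds}, the convexity inequality for $\exp$, and the monotonicity of the weights $w$ up the tree.
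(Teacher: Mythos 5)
Your proof is correct and follows essentially the same route as the paper's: telescope \cref{prop:prox solution bounds} along the root-to-$ja$ path and then apply $1-e^{-y}\le y$. Your extra step showing $w_{j'}\ge w_j$ for ancestors $j'$ of $j$ is a welcome addition rather than a detour: the paper's one-line proof silently elides this indexing issue, and in fact when the corollary is later invoked in \cref{prop:omd unproj step} the denominator appears as $w_{j'}$, suggesting the $w_j$ in the stated summand is a typo that your weight-monotonicity argument correctly absorbs.
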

    \begin{proof}
        The first statement follows from applying \cref{prop:prox solution bounds} repeatedly on the path from the root of the decision tree to decision point $j$. The second statement holds from the first statement by noting that
    \[
        1 - \frac{\tilde{x}^{t+1}_{ja}}{{x}^{t}_{ja}} \le 1 - \exp \mleft\{ - \sum_{j' a' \preceq\, ja} \frac{\eta}{w_j {x}^{t}_{j' a'}} \psi^{t}_{j' a'}  \mright\} \le  \sum_{j' a' \preceq\, ja} \frac{\eta}{w_j {x}^{t}_{j' a'}} \psi^{t}_{j' a'},
    \]
    where we used the fact that $1 - e^{-x} \le x$ for all $x\in \bbR$.
    \end{proof}

    Furthermore, the following lemma will be important in the proof of \cref{prop:omd unproj step}.

    \begin{restatable}{lemma}{lemuppropagation}\label{lem:up propagation}
        For all sequences $ja$,
        \begin{align*}
            \sum_{j'a'\ \succeq\ ja} \frac{w_{j'} y_{j' a'}}{w_j y_{ja}} \le  2.
        \end{align*}
    \end{restatable}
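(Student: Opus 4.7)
The plan is to reformulate the target inequality as a statement about subtrees that can be proved by bottom-up induction on the tree. For every node $v$ in the decision process, I would define
\[
  g(v) \defeq \sum_{j'a' \text{ in the subtree rooted at } v} w_{j'}\, y_{j'a'},
\]
so that $\sum_{j'a' \succeq ja} w_{j'} y_{j'a'} = w_j y_{ja} + g(\rho(j,a))$. The lemma then reduces to showing that $g(\rho(j,a)) \le w_j y_{ja}$.

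The key step will be to establish, by bottom-up induction on the tree, the strengthened bound
\[
  g(v) \;\le\; 2\, w_v\, y_{p_v} \qquad \text{for every node } v,
\]
with the usual convention $y_\emptyseq = 1$. Once this is available, the lemma follows immediately by taking $v = \rho(j,a)$ (so that $p_v = ja$) and combining with the defining recursion $w_j = 2 + 2 \max_{a' \in A_j} w_{\rho(j,a')} \ge 2\, w_{\rho(j,a)}$, which gives $g(\rho(j,a)) \le 2\, w_{\rho(j,a)}\, y_{ja} \le w_j\, y_{ja}$, and therefore $w_j y_{ja} + g(\rho(j,a)) \le 2\, w_j\, y_{ja}$.

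The induction is routine in the easy cases. The base case $v = \terminalnode$ is trivial since $w_\terminalnode = 0$. For an observation point $v = k$, every child $\rho(k, s)$ has the same parent sequence as $k$ (observation points do not introduce new actions), so the inductive hypothesis together with the recursion $w_k = \sum_{s \in S_k} w_{\rho(k,s)}$ yields $g(k) = \sum_{s \in S_k} g(\rho(k,s)) \le 2\, w_k\, y_{p_k}$. The only substantive case is the decision-point step, $v = j$. Here each child $\rho(j,a)$ has parent sequence $ja$, so by the inductive hypothesis $g(\rho(j,a)) \le 2\, w_{\rho(j,a)}\, y_{ja}$. Combining this with the sequence-form consistency $\sum_{a \in A_j} y_{ja} = y_{p_j}$ and with $2 \max_a w_{\rho(j,a)} = w_j - 2$, one obtains
\[
  g(j) = \sum_{a \in A_j}\bigl(w_j\, y_{ja} + g(\rho(j,a))\bigr) \le \bigl(w_j + 2 \max_{a} w_{\rho(j,a)}\bigr)\, y_{p_j} = (2 w_j - 2)\, y_{p_j} \le 2\, w_j\, y_{p_j}.
\]

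The main obstacle, in my view, is identifying the correct strengthening of the induction hypothesis: a naive attempt to prove the tighter $g(v) \le w_v y_{p_v}$ fails at the decision-point step. The factor of $2$ on the right-hand side, together with the additive $+2$ in the recursion $w_j = 2 + 2\max_a w_{\rho(j,a)}$, is exactly the slack needed to absorb the newly added contribution $\sum_{a} w_j y_{ja} = w_j y_{p_j}$ coming from the sequences at node $j$ itself, while still closing the induction.
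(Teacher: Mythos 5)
Your proof is correct and follows essentially the same route as the paper's: a structural induction over the decision tree that closes using the weight recursion $w_j = 2 + 2\max_{a}w_{\rho(j,a)}$ and the flow constraint $\sum_{a\in A_j} y_{ja} = y_{p_j}$. The only difference is cosmetic---you carry an unnormalized per-node quantity $g(v)\le 2 w_v y_{p_v}$ and treat observation points as a separate case, whereas the paper inducts directly on the normalized per-sequence ratio and skips over observation points.
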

    \begin{proof}
        By induction.

        \begin{itemize}
            \item \textbf{Base case}. For any terminal decision $ja \in \Sigma$ (that is, $\mathcal{C}_{\rho(j,a)} = \emptyset$), we have
            \[
             \sum_{j'a' \ \succeq\ ja} \frac{w_{j'} y_{j'a'}}{w_j y_{ja}}= \frac{w_j {x}_{ja}}{w_j {x}_{ja}} = 1 \le 2.
            \]
            \item \textbf{Inductive step}. Suppose that the inductive hypothesis holds for all sequences $j' a' \succ ja$. Then,
            \begin{align*}
                 \sum_{j'a' \ \succeq\ ja} \frac{w_{j'} y_{j' a'}}{w_j y_{ja}}
                 &= 1 + \sum_{j' \in \next{\rho(j,a)}}\sum_{a' \in A_{j'}} \mleft(\frac{w_{j'} y_{j'a'}}{w_{j} y_{ja}} \sum_{j''a'' \ \succeq\ j'a'} \frac{w_{j''} y_{j''a''}}{w_{j'} y_{j'a'}}\mright)\\
                 &\le 1 + 2 \sum_{j' \in \next{\rho(j,a)}}\sum_{a' \in A_{j'}} \frac{w_{j'} y_{j' a'}}{w_j y_{ja}}\\
                 &= 1 + 2\sum_{j' \in \next{\rho(j,a)}} \frac{w_{j'}}{w_j}
                 = 1 + \frac{2w_{\rho(j,a)}}{w_j}
                 \le 2,
            \end{align*}
            where the first inequality follows by the inductive hypothesis, and the second inequality holds by definition of the weights in the dilated DGF (Equation~\ref{def:dilated entropy}). \qedhere
        \end{itemize}
    \end{proof}

    \cref{prop:prox solution bounds} is also a fundamental step for the following proposition, which bounds the length of the step (as measured according to the local norm $\pn{\cdot}{\xbar^{t}}$) between the last decision $\xbar^{t}$ and the next intermediate iterate $\tilde{\vec{x}}^{t+1}$ as a function of the stepsize parameter $\eta$ and the dual local norm of the loss $\tilde{\vec{\ell}}^{t}$ that was last observed:

\begin{restatable}{proposition}{propunprojstep}\label{prop:omd unproj step}
        Let $D$ be the maximum depth of any node in the decision process. If $\tilde{\vec{\ell}}^t \in \Rp^{|\Sigma|}$, then
        \begin{equation}
            \pn{\xbar^{t} - \tilde{\vec{x}}^{t+1}}{\xbar^{t}} \le \eta\sqrt{3D} \cdot\dn{\tilde{\vec{\ell}}^{t}}{\xbar^{t}}.
        \end{equation}
\end{restatable}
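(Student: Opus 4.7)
The plan is to bound $\pn{\xbar^{t}-\tilde{\vec{x}}^{t+1}}{\xbar^{t}}^2$ by a constant multiple of $\dn{\tilde{\vec{\ell}}^{t}}{\xbar^{t}}^2$ by stitching together the geometry tools already developed. In order: apply the coordinatewise upper bound on the primal local norm from \cref{lem:bound primal}; use the multiplicative control on each $\tilde{x}^{t+1}_{ja}/x^t_{ja}$ supplied by \cref{cor:coarse bounds} to express $1-\tilde{x}^{t+1}_{ja}/x^t_{ja}$ as an ancestor path sum; apply Cauchy--Schwarz on that path sum (which has length at most $D$); swap the order of summation so that the outer sum ranges over ancestors; and finally collapse the resulting descendant sum with \cref{lem:up propagation}. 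The leftover expression is then exactly the squared dual local norm by \cref{cor:dual norm}.

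Concretely, since \cref{cor:coarse bounds} shows $\tilde{x}^{t+1}_{ja}\le x^t_{ja}$ coordinatewise, the vector $\xbar^{t}-\tilde{\vec{x}}^{t+1}$ lies in $\Rp^{|\Sigma|}$, and \cref{lem:bound primal} yields
\[
    \pn{\xbar^{t}-\tilde{\vec{x}}^{t+1}}{\xbar^{t}}^2 \;\le\; \frac{3}{2}\sum_{j\in\cJ}\sum_{a\in A_j} w_j\, x^t_{ja}\mleft(1-\frac{\tilde{x}^{t+1}_{ja}}{x^t_{ja}}\mright)^{\!2}.
\]
Invoking \cref{eq:bound on ratio} to upper bound each ratio gap by $\eta\sum_{j'a'\preceq ja}\psi^t_{j'a'}/(w_{j'}x^t_{j'a'})$, where $\psi^t_{j'a'}\defeq \vec{u}_{j'a'}\trans(\tilde{\vec{\ell}}^{t}\circ\xbar^{t})$, and then applying Cauchy--Schwarz to the at-most-$D$ terms in each path sum, one obtains
\[
    \pn{\xbar^{t}-\tilde{\vec{x}}^{t+1}}{\xbar^{t}}^2 \;\le\; \frac{3}{2}\eta^2 D\sum_{j,a} w_j\, x^t_{ja}\sum_{j'a'\preceq ja} \frac{(\psi^t_{j'a'})^2}{(w_{j'}\, x^t_{j'a'})^2}.
\]
Swapping the two summations so that the outer one is over ancestors $j'a'$ and the inner one over descendants $ja\succeq j'a'$, the inner sum is exactly $\sum_{ja\succeq j'a'} w_j\, x^t_{ja}$, which \cref{lem:up propagation} bounds by $2\, w_{j'} x^t_{j'a'}$. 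One factor of $w_{j'} x^t_{j'a'}$ is absorbed, leaving $3\eta^2 D\sum_{j',a'}(\psi^t_{j'a'})^2/(w_{j'}\, x^t_{j'a'})$, which by \cref{cor:dual norm} applied to $\vec{z}=\tilde{\vec{\ell}}^{t}$ equals $3\eta^2 D\cdot\dn{\tilde{\vec{\ell}}^{t}}{\xbar^{t}}^2$. Taking square roots then gives the claimed bound $\eta\sqrt{3D}\cdot\dn{\tilde{\vec{\ell}}^{t}}{\xbar^{t}}$.

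The main delicacy is the weight bookkeeping. The weight $w_j$ emerging from \cref{lem:bound primal} attaches to the sequence $ja$, while the weights $w_{j'}$ appearing along the ancestor chain in \cref{cor:coarse bounds} attach to ancestors; after the summation swap these switch roles, and \cref{lem:up propagation} is precisely what ensures the leftover factor $\sum_{ja\succeq j'a'} w_j x^t_{ja}$ does not blow up with the width or depth of the tree. Tracking which weight index is being summed over at each stage---and in particular using the slightly tighter ancestor-denominator form $w_{j'}x^t_{j'a'}$ that the proof of \cref{cor:coarse bounds} actually delivers---is what makes the constants collapse into exactly $3D$ rather than a quantity that grows with $|\Sigma|$.
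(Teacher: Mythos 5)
Your proposal is correct and follows essentially the same route as the paper's own proof: bound the primal local norm via \cref{lem:bound primal}, insert the ancestor-path bound from \cref{cor:coarse bounds}, apply Cauchy--Schwarz over the at-most-$D$ path terms, swap the summation order, collapse the descendant sum with \cref{lem:up propagation}, and identify the result with \cref{cor:dual norm}. Your remark about using the ancestor-indexed denominator $w_{j'}x^t_{j'a'}$ (rather than the $w_j$ appearing in the statement of \cref{cor:coarse bounds}) matches what the paper's proof actually does, and the constants work out to exactly $3D\eta^2$ as claimed.
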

    \begin{proof}
     By \cref{cor:coarse bounds}, $\xbar^{t} - \tilde{\vec{x}}^{t+1} \in \Rp^{|\Sigma|}$. Hence, we can apply \cref{lem:bound primal}:
        \begin{align*}
            \pn{\xbar^{t} - \tilde{\vec{x}}^{t+1}}{\xbar^{t}}^2 &\le  \frac{3}{2}\sum_{j\in\cJ}\sum_{a\in A_j} \frac{w_j}{{x}^{t}_{ja}}({x}^{t}_{ja} - \tilde{x}^{t+1}_{ja})^2
            =  \frac{3}{2}\sum_{j\in\cJ}\sum_{a\in A_j} w_j {x}^{t}_{ja} \mleft(1-\frac{\tilde{x}^{t+1}_{ja}}{{x}^{t}_{ja}}\mright)^2.
        \end{align*}
    Using Inequality~\eqref{eq:bound on ratio},
        \begin{align*}
            \pn{\xbar^{t} - \tilde{\vec{x}}^{t+1}}{\xbar^{t}}^2 &\le  \frac{3\eta^2}{2} \sum_{j\in\cJ}\sum_{a\in A_j} w_j {x}^{t}_{ja}  \mleft( \sum_{\atOrAbove{j'a'}{ja}} \frac{\psi^{t}_{j'a'}}{w_{j'} {x}^{t}_{j' a'}} \mright)^2\\
            &\le  \frac{3D\eta^2}{2} \sum_{j\in\cJ}\sum_{a\in A_j} \sum_{\atOrAbove{j'a'}{ja}} \frac{w_j {x}^{t}_{ja}  }{w_{j'}^2 ({x}^{t}_{j'a'})^2} (\psi^{t}_{j'a'})^2\\
            &= \frac{3D \eta^2}{2} \sum_{j\in\cJ}\sum_{a\in A_j} \sum_{\atOrAbove{j'a'}{ja}} \frac{w_j {x}^{t}_{ja}  }{w_{j'} {x}^{t}_{j' a'}}  \frac{(\psi^{t}_{j'a'})^2}{w_{j'} {x}^{t}_{j' a'}},
        \end{align*}
        where the second inequality follows from applying Cauchy-Schwarz. Now using double counting we derive
                \begin{align*}
            \frac{3D \eta^2}{2} \sum_{j\in\cJ}\sum_{a\in A_j} \sum_{\atOrAbove{j'a'}{ja}} \frac{w_j {x}^{t}_{ja}  }{w_{j'} {x}^{t}_{j' a'}}  \frac{(\psi^{t}_{j'a'})^2}{w_{j'} {x}^{t}_{j' a'}} &= \frac{3D \eta^2}{2} \sum_{j\in\cJ}\sum_{a\in A_j}\mleft( \frac{(\psi^{t}_{ja})^2}{w_{j} {x}^{t}_{j a}}  \sum_{\atOrBelow{j'a'}{ja}} \frac{w_{j'} {x}^{t}_{j' a'}}{w_{j} {x}^{t}_{j a}}\mright)\\
              &\le 3D \eta^2 \sum_{j\in\cJ}\sum_{a\in A_j} \frac{(\psi^{t}_{ja})^2}{w_{j} {x}^{t}_{j a}}.
        \end{align*}
        where the second inequality follows from \cref{lem:up propagation}.
        Finally, plugging in definition of $\psi^{t}_{ja}$ and using \cref{cor:dual norm}, we have
        \[
            \pn{\xbar^{t} - \tilde{\vec{x}}^{t+1}}{\xbar^{t}}^2 \le 3D \eta^2 \sum_{j\in\cJ}\sum_{a\in A_j} \frac{(\vec{u}_{ja}\trans(\tilde{\vec{\ell}}^{t} \circ \xbar^{t}))^2}{w_{j} {x}^{t}_{j a}} = 3D \eta^2 \dn{\tilde{\vec{\ell}}^{t}}{\xbar^{t}}^2.
        \]
        Taking the square root of both sides yields the statement.
    \end{proof}

    Finally, \cref{lem:regret bound known} can be used to derive a regret bound for $\tilde{\mathcal{R}}$ expressed in term of local norms. In particular, using the generalized Cauchy-Schwarz inequality together with \cref{prop:omd unproj step}, we obtain
    \begin{align*}
      (\tilde{\vec{\ell}}^{t})\trans(\xbar^{t} - \tilde{\vec{x}}^{t+1}) &\le \dn{\tilde{\vec{\ell}}^{t}}{\xbar^{t}}\cdot\pn{\xbar^{t} - \tilde{\vec{x}}^{t+1}}{\xbar^{t}} \\
        &\le \eta\sqrt{3D}\cdot\dn{\tilde{\vec{\ell}}^{t}}{\xbar^{t}}^2.
    \end{align*}

    Substituting the last inequality into the bound of~\cref{lem:regret bound known}, we obtain \cref{thm:omd regret bound}.

    \newpage
    \section{Sampling Scheme and Its Autocorrelation Matrix}\label{app:R}

\subsection{General Observations}
\begin{restatable}{lemma}{lemimagespan}\label{lem:image equals span}
Let $\pi$ be a distribution with finite support, and let $\vec{y} \sim \pi$. Then $\img\bbE\big[\vec{y} \vec{y}\trans\big] = \Span\supp\pi$.
\end{restatable}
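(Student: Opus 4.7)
The plan is to expand the autocorrelation matrix explicitly as a finite sum of rank-one contributions. Writing $\bbE[\vec{y}\vec{y}\trans] = \sum_{\vec{v}\in\supp\pi} \pi(\vec{v})\, \vec{v}\vec{v}\trans$, the inclusion $\img \bbE[\vec{y}\vec{y}\trans] \subseteq \Span\supp\pi$ is immediate: for any $\vec{w}$, the vector $\bbE[\vec{y}\vec{y}\trans]\vec{w} = \sum_{\vec{v}\in\supp\pi} \pi(\vec{v})(\vec{v}\trans\vec{w})\, \vec{v}$ is a linear combination of the support vectors.

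For the reverse inclusion I would use that $\bbE[\vec{y}\vec{y}\trans]$ is symmetric and positive semidefinite, so its image is the orthogonal complement of its kernel. A vector $\vec{w}$ lies in the kernel iff the quadratic form vanishes at $\vec{w}$, i.e.\
\[
\vec{w}\trans \bbE[\vec{y}\vec{y}\trans]\vec{w} = \sum_{\vec{v}\in\supp\pi} \pi(\vec{v})\,(\vec{v}\trans\vec{w})^2 = 0.
\]
Since $\pi(\vec{v}) > 0$ for every $\vec{v}$ in the (finite) support, every squared term must vanish, forcing $\vec{v}\trans\vec{w} = 0$ for all $\vec{v}\in\supp\pi$. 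Equivalently, $\vec{w} \in (\Span\supp\pi)^\perp$. Thus $\ker\bbE[\vec{y}\vec{y}\trans] = (\Span\supp\pi)^\perp$, and taking orthogonal complements gives $\img\bbE[\vec{y}\vec{y}\trans] = \Span\supp\pi$, as desired.

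There is no serious obstacle in this argument; the only subtlety worth flagging is that the passage from ``the PSD quadratic form vanishes on $\vec{w}$'' to ``each individual inner product $\vec{v}\trans\vec{w}$ vanishes'' relies crucially on the coefficients $\pi(\vec{v})$ being strictly positive on the support, and the identification of $\img$ with $(\ker)^\perp$ uses symmetry of $\bbE[\vec{y}\vec{y}\trans]$.
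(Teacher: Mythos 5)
Your proof is correct and follows essentially the same route as the paper's: both establish $\ker\bbE[\vec{y}\vec{y}\trans] = (\Span\supp\pi)^\perp$ via the vanishing of the quadratic form $\sum_{\vec{v}}\pi(\vec{v})(\vec{v}\trans\vec{w})^2$ with strictly positive weights, and then pass to the image by taking orthogonal complements using symmetry. Your separate direct argument for the inclusion $\img\subseteq\Span\supp\pi$ is a harmless redundancy, since it already follows from the kernel characterization.
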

\begin{proof}
We prove the statement by showing that the nullspace of $\bbE[\vec{y}\vec{y}\trans]$ is equal to the orthogonal complement of $\Span\supp\pi$, in symbols:
\[
    \ker \bbE[\vec{y}\vec{y}\trans] = (\Span\supp\pi)^\bot.
\]
This will immediately imply the statement using the well-known relationship $\img \bbE[\vec{y}\vec{y}\trans] = (\ker \bbE[\vec{y}\vec{y}\trans])^\bot$.

We start by showing $\ker \bbE[\vec{y}\vec{y}\trans] \subseteq (\Span\supp\pi)^\bot$. Take $\vec{z} \in \ker \bbE[\vec{y}\vec{y}\trans]$. Then,
\begin{align*}
  \bbE[\vec{y}\vec{y}\trans] \vec{z} = 0
  &\implies \vec{z}\trans \bbE[\vec{y}\vec{y}\trans] \vec{z} = 0 \\
  &\implies \bbE[(\vec{z}\trans \vec{y})^2] = 0 \\
  &\implies \vec{z}\trans \vec{y} = 0 \quad \forall \vec{y} \in \supp \pi \\
  &\implies \vec{z}\trans \vec{y} = 0 \quad \forall \vec{y} \in \Span \supp \pi.
\end{align*}

We now look at the other direction, that is $(\Span\supp\pi)^\bot \subseteq \ker \bbE[\vec{y}\vec{y}\trans]$. Take $\vec{z} \in (\Span\supp\pi)^\bot$. Then,
\begin{align*}
  \bbE[\vec{y}\vec{y}\trans] \vec{z} = \bbE[\vec{y}(\vec{y}\trans \vec{z})] = \bbE[\vec{y} \cdot 0] = \vec{0}.
\end{align*}
This implies $\vec{z} \in \ker \bbE[\vec{y}\vec{y}\trans]$, and concludes the proof.
\end{proof}

\begin{lemma}\label{lem:C Cinv y equal y}
    Suppose that a distribution $\pi$ over $\pure$ is known, such that the support of $\pi$ is full-rank (that is, $\Span\supp\pi = \Span \pure$), and let $\vec{y} \sim \pi$. Furthermore, let $\mat{C}^{-}$ be any generalized inverse of the autocorrelation matrix $\mat{C}$. Then, for all $\vec{z}\in \Span \pure$,
\begin{align*}
  \mat{C} \mat{C}^{-} \vec{z} = \vec{z}, \text{ and }
  \vec{z}\trans \mat{C}^{-}\, \mat{C} = \vec{z}\trans\!\!.
\end{align*}
\end{lemma}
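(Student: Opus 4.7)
The plan is to leverage two structural facts: the generalized-inverse identity $\mat{C}\mat{C}^{-}\mat{C}=\mat{C}$, and the characterization $\img \mat{C}=\Span\pure$ supplied by \cref{lem:image equals span}. Since $\mat{C}=\bbE[\vec{y}\vec{y}\trans]$ is symmetric, I also get $\ker \mat{C} = (\img\mat{C})^\bot = (\Span\pure)^\bot$ for free, which couples the image/kernel decomposition neatly to the subspace $\Span\pure$ that appears in the statement.

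For the first identity, I would argue as follows. Fix $\vec{z}\in\Span\pure$. By \cref{lem:image equals span} together with the full-support assumption, $\Span\pure = \img\mat{C}$, so there exists $\vec{w}\in\bbR^{|\Sigma|}$ with $\mat{C}\vec{w}=\vec{z}$. Substituting and using the defining property of the generalized inverse,
\begin{equation*}
\mat{C}\mat{C}^{-}\vec{z} \;=\; \mat{C}\mat{C}^{-}\mat{C}\vec{w} \;=\; \mat{C}\vec{w} \;=\; \vec{z}.
\end{equation*}

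The second identity, $\vec{z}\trans \mat{C}^{-}\mat{C}=\vec{z}\trans$ for $\vec{z}\in\Span\pure$, would follow by taking transposes and reducing to the first identity applied to a different generalized inverse. Specifically, $(\mat{C}^{-})\trans$ is itself a generalized inverse of $\mat{C}$, because
\begin{equation*}
\mat{C}\,(\mat{C}^{-})\trans\mat{C} \;=\; \bigl(\mat{C}\trans \mat{C}^{-} \mat{C}\trans\bigr)\trans \;=\; (\mat{C}\trans)\trans \;=\; \mat{C},
\end{equation*}
using symmetry of $\mat{C}$. Transposing the desired identity turns it into $\mat{C}\,(\mat{C}^{-})\trans \vec{z} = \vec{z}$, which is exactly the first identity with $\mat{C}^{-}$ replaced by $(\mat{C}^{-})\trans$.

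I do not anticipate a serious obstacle here: the only subtle point is realizing that one must not assume $\mat{C}^{-}\mat{C}=\mat{I}$ (which fails in the rank-deficient setting where the lemma lives); the correct replacement is that $\mat{C}^{-}\mat{C}$ acts as the identity \emph{after} restriction to $\img\mat{C}=\Span\pure$, and symmetry of $\mat{C}$ is what lets the same reasoning recover the left-multiplication version from the right-multiplication version.
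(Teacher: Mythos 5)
Your proof is correct and follows essentially the same route as the paper's: both write $\vec{z}=\mat{C}\vec{v}$ using the fact that $\img\mat{C}=\Span\supp\pi=\Span\pure$ and then invoke $\mat{C}\mat{C}^{-}\mat{C}=\mat{C}$. Your transposition trick for the second identity (via $(\mat{C}^{-})\trans$ being a generalized inverse of the symmetric $\mat{C}$) is a clean way to make precise what the paper dismisses as ``analogous,'' but it is the same underlying argument.
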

\begin{proof}
Since $\img \mat{C} = \Span\supp\pi$ (see Lemma \ref{lem:image equals span}) and $\Span\supp\pi = \Span\pure$ by hypothesis, it must be $\vec{z}\in\img \mat{C}$. Hence, there exists $\vec{v} \in \bbR^{|\Sigma|}$ such that $\vec{z} = \mat{C}\,\vec{v}$, and therefore
\begin{align*}
    \mat{C} \mat{C}^{-} \vec{z} = \mat{C} \mat{C}^{-} \mat{C} \vec{v} = \mat{C} \vec{v} = \vec{z},
\end{align*}
    where the second equality follows by definition of generalized inverse. The proof of the second equality in the statement is analogous.
\end{proof}

\begin{lemma}\label{lem:z Cinv ybar equals 1}
  Suppose that a distribution $\pi$ over $\pure$ is known, such that the support of $\pi$ is full-rank (that is, $\Span\supp\pi = \Span \pure$), and let $\vec{y} \sim \pi$ and $\bar{\vec{y}} = \mathbb{E}[\vec{y}]$. Furthermore, let $\mat{C}^{-}$ be any generalized inverse of the autocorrelation matrix $\mat{C}$.  Then, for all $\vec{z} \in \Span \pure$,
  \[
    \vec{z}\trans \mat{C}^{-} \bar{\vec{y}} = \bar{\vec{y}}\trans \mat{C}^{-} \vec{z} = 1.
  \]
\end{lemma}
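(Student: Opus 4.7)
The plan is to reduce both equalities to exhibiting a single explicit preimage of $\bar{\vec{y}}$ under $\mat{C}$, and then to construct that preimage from the normalization baked into the sequence-form representation.

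First I would invoke \cref{lem:image equals span} together with the full-rank hypothesis to obtain $\img\mat{C}=\Span\supp\pi=\Span\pure$. Since $\bar{\vec{y}}\in\co\pure\subseteq\Span\pure$, some $\vec{v}\in\bbR^{|\Sigma|}$ satisfies $\mat{C}\vec{v}=\bar{\vec{y}}$, and I may pick $\vec{v}$ inside $\Span\pure$ without loss of generality. Applying \cref{lem:C Cinv y equal y} on each side then gives, for every $\vec{z}\in\Span\pure$,
\[
\vec{z}\trans\mat{C}^{-}\bar{\vec{y}} \;=\; \vec{z}\trans\mat{C}^{-}\mat{C}\vec{v} \;=\; \vec{z}\trans\vec{v}, \qquad \bar{\vec{y}}\trans\mat{C}^{-}\vec{z} \;=\; \vec{v}\trans\mat{C}\,\mat{C}^{-}\vec{z} \;=\; \vec{v}\trans\vec{z},
\]
so the lemma reduces to producing a single $\vec{v}$ that simultaneously satisfies $\mat{C}\vec{v}=\bar{\vec{y}}$ and $\vec{z}\trans\vec{v}=1$.

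Next I would construct $\vec{v}$ as a scaled root-sequence indicator. Let $r$ denote the number of decision points $j\in\cJ$ with $p_j=\emptyseq$, and set $v_{ja}=1/r$ for every $(j,a)\in\Sigma$ with $p_j=\emptyseq$, and $v_{ja}=0$ otherwise. The sequence-form constraints \eqref{eq:sf} at each root decision point give $\sum_{a\in A_j}y_{ja}=1$ for every $\vec{y}\in\seqf$, and summing over the $r$ roots yields $\vec{y}\trans\vec{v}=1$ for every $\vec{y}\in\pure$. Two consequences follow immediately: on the one hand, $\mat{C}\vec{v}=\bbE[\vec{y}(\vec{y}\trans\vec{v})]=\bbE[\vec{y}\cdot 1]=\bar{\vec{y}}$, so $\vec{v}$ is a valid preimage; on the other hand, $\vec{z}\trans\vec{v}=1$ for every $\vec{z}\in\pure$, a property that extends by linearity to any convex or affine combination of pure strategies. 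Plugging this $\vec{v}$ into the identities from the first step yields $\vec{z}\trans\mat{C}^{-}\bar{\vec{y}}=1$ and $\bar{\vec{y}}\trans\mat{C}^{-}\vec{z}=1$ as claimed.

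The only non-routine ingredient is spotting the explicit preimage $\vec{v}$: a generic preimage would only give $\vec{z}\trans\mat{C}^{-}\bar{\vec{y}}=\vec{z}\trans\vec{v}$ with an unknown right-hand side. What rescues the argument is that the sequence-form normalization at the roots supplies a single canonical vector whose inner product with every pure strategy equals $1$, and this same normalization automatically forces that vector to be a preimage of $\bar{\vec{y}}$ under $\mat{C}$. Once this alignment is identified, the remaining details are pure bookkeeping.
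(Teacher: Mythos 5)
Your proof is correct and follows essentially the same route as the paper's: both arguments hinge on finding a vector $\vec{\tau}$ with $\vec{z}\trans\vec{\tau}=1$ for all $\vec{z}\in\seqf$, observing that $\mat{C}\vec{\tau}=\bbE[\vec{y}(\vec{y}\trans\vec{\tau})]=\bar{\vec{y}}$, and then invoking the generalized-inverse identities of \cref{lem:C Cinv y equal y}. The only difference is cosmetic: the paper asserts the existence of $\vec{\tau}$ abstractly (from $\vec{0}$ not lying in the affine hull of $\pure$), whereas you construct it explicitly from the root normalization constraints in \eqref{eq:sf}, which is arguably a cleaner justification.
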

\begin{proof}
Since $\img \mat{C} = \Span\supp\pi$ (see Lemma \ref{lem:image equals span}) and $\Span\supp\pi = \Span\pure$ by hypothesis, there exists $\vec{v} \in \bbR^{|\Sigma|}$ such that $\vec{z} = \mat{C}\vec{v}$. Furthermore, $\bar{\vec{y}} = \mat{C}\,\vec{\tau}$ where $\vec{\tau}$ is any vector such that $\vec{z}\trans \vec{\tau} = 1$ for all $\vec{z} \in \seqf$ (such vector must exist because $\vec{0}$ is not in the affine hull of $\pure$). Hence,
  \begin{align*}
    \vec{z}\trans \mat{C}^{-} \bar{\vec{y}} &= \vec{v}\trans \mat{C} \mat{C}^{-} \mat{C}\vec{\tau} = \vec{v}\trans \mat{C}\vec{\tau} = \vec{z}\trans\vec{\tau} = 1.
  \end{align*}
  The proof that $\bar{\vec{y}}\trans \mat{C}^{-}\vec{z} = 1$ is analogous.
\end{proof}

Since $\mathbb{E}_t[\vec{y}^t] = \bar{\vec{x}}^t$, Lemma \ref{lem:C Cinv y equal y} and \ref{lem:z Cinv ybar equals 1} imply that $\mat{C}^t \mat{C}^{t-} \vec{z} = \vec{z}$ for all $\vec{z} \in \pure$, and that $\vec{z}\trans \mat{C}^{t-} \bar{\vec{x}}^t = 1$ for all $\vec{z} \in \pure$.

\subsection{Unbiasedness of the Sampling Scheme}
\label{app:unbiased}

In \cref{algo:sampling scheme} we give pseudocode for the sampling scheme described in \cref{sec:sampling scheme}.

    \begin{algorithm}[H]\small
        \caption{\textsc{Sample}($\xbar^t$)}
        \label{algo:sampling scheme}\DontPrintSemicolon
        \KwIn{$\xbar^t \in\seqf$ sequence-form strategy}
        \KwOut{$\vec{y}^t \in \pure$ such that $\bbE[\vec{y}^t] = \xbar^t$}
        \BlankLine
        $\vec{y}^t \gets \vec{0}$\;
        \Subr{\normalfont\textsc{RecursiveSample}($v$)}{
            \uIf{$v \in \cJ$}{
                Sample an action $a \sim ({x}_{va}^t/{x}_{p_v})_{a \in A_v}$\;
                $y_{va}^t \gets 1$\;
                $\textsc{RecursiveSample}(\rho(v,a))$\;
            }
            \ElseIf{$v \in \cK$}{
                \textbf{for} $s \in S_k$ \textbf{do}
                    $\textsc{RecursiveSample}(\rho(v,s))$\hspace*{-1cm}\;
            }
        }
        \Hline{}
        $\textsc{RecursiveSample}(r)$\Comment*{\color{commentcolor}$r$: root of the decision process]}
        \KwRet{$\vec{y}^t$}\;
    \end{algorithm}

\begin{restatable}{lemma}{lemunbiasedsample}\label{lem:sample unbiased}
    The sampling scheme given by \cref{algo:sampling scheme} is unbiased, that is, $\bbE_t[\vec{y}^t] = \xbar^t$.
\end{restatable}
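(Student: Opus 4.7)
The plan is to show, entrywise, that $\mathbb{E}_t[y^t_{ja}] = x^t_{ja}$ for every sequence $ja \in \Sigma$. Since the output of \textsc{Sample} is a $0/1$ vector, this reduces to computing the probability that the recursion sets $y^t_{ja} = 1$, i.e., that the recursion eventually visits decision point $j$ and samples action $a$ there.

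First, I would observe that by inspection of \textsc{RecursiveSample}, the recursion deterministically descends into every child of an observation point, while at a decision point $j'$ it picks a single action $a'$ with conditional probability $x^t_{j'a'}/x^t_{p_{j'}}$ (well-defined by positivity of $x^t$, \cref{obs:xbar positive}, and the sequence-form constraints~\eqref{eq:sf}) and recurses only into $\rho(j',a')$. Consequently, whether or not $y^t_{ja}$ is set to $1$ depends only on the actions sampled at the decision points encountered on the unique path from the root to action $a$ at decision point $j$. Denote those sequences by $\sigma_0 \prec \sigma_1 \prec \dots \prec \sigma_m = ja$, where $\sigma_i = j_i a_i$ and $p_{j_{i+1}} = \sigma_i$ for all $i$; also $p_{j_0} = \emptyseq$.

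Next, I would prove by induction on $m$ (the depth of $ja$ in the ``decision-point only'' subsequence) that the probability the recursion picks action $a_i$ at every $j_i$ equals $x^t_{ja}$. The base case $m = 0$ is immediate: $j_0$ is a root decision point, so it is visited with probability $1$, and action $a_0$ is sampled with probability $x^t_{j_0 a_0}/x^t_{\emptyseq} = x^t_{ja}$, using the convention $x^t_{\emptyseq} = 1$. For the inductive step, by the induction hypothesis the event that all of $a_0, \dots, a_{m-1}$ are sampled has probability $x^t_{\sigma_{m-1}}$; conditional on that event, the recursion has reached $j_m$ (possibly after deterministic expansion through observation points) and samples $a_m$ with conditional probability $x^t_{ja}/x^t_{p_{j_m}} = x^t_{ja}/x^t_{\sigma_{m-1}}$. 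Multiplying yields probability $x^t_{ja}$, which is exactly the telescoping collapse enabled by the sequence-form consistency constraints~\eqref{eq:sf}.

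Finally, since $y^t_{ja}$ is the indicator of the event just analyzed, $\mathbb{E}_t[y^t_{ja}] = x^t_{ja}$, and aggregating over all $ja \in \Sigma$ gives $\mathbb{E}_t[\vec{y}^t] = \xbar^t$. There is no real technical obstacle here; the only subtlety is bookkeeping through observation points (which contribute deterministic factors of $1$ and must be tracked so that the telescoping product~$\prod_i x^t_{\sigma_i}/x^t_{p_{j_i}}$ lines up with the sequence-form constraints). I would present the induction directly on $\Sigma$ ordered by $\preceq$ to keep the bookkeeping transparent.
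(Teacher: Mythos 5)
Your proposal is correct, but it takes a genuinely different route from the paper. You argue entrywise: for each sequence $ja$, the event $\{y^t_{ja}=1\}$ is the intersection of the independent sampling events at the decision points on the root-to-$ja$ path, and the product of the conditional probabilities $x^t_{j_i a_i}/x^t_{p_{j_i}}$ telescopes to $x^t_{ja}$ by the sequence-form constraints~\eqref{eq:sf}. The paper instead performs a structural (bottom-up) induction over the tree, proving the vector identity $\bbE_t[\vec{y}^t_v]=\xbar^t_v$ for every subtree rooted at $v\in\cJ\cup\cK$, with separate cases for terminal decision points, generic decision points, and observation points, using the recursive product/convex-hull decompositions of $\pure$ and $\seqf$ from the appendix. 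Your argument is more elementary and arguably more transparent, since it leans directly on the interpretation of $x^t_{ja}$ as a product of conditional action probabilities; its only bookkeeping burden is tracking the deterministic pass-through at observation points, which you handle correctly. The paper's subtree induction is heavier for this particular lemma but pays off as a template: the same decomposition is reused immediately afterward for the autocorrelation matrices $\mat{C}^t_j$ and $\mat{C}^t_k$ (\cref{lem:aut convex hull,lem:aut cartesian}), where an entrywise path argument would not suffice because the off-diagonal blocks genuinely require the product structure across sibling subtrees. Either proof is acceptable for the lemma as stated.
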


\begin{proof}
We prove by induction over the structure of the sequential decision process that for all $v \in \cJ \cup \cK$,
\begin{align*}
 \bbE_t[\vec{y}_v^t] = \xbar_v^t
\end{align*}

\begin{itemize}
\item \textbf{First case:} $v\in\cJ$ is a terminal decision point. This serves as the base case for the induction.

    Let $A_v = \{a_1, \dots, a_n\}$. Then, $\substrategy{v} = ({x}^{t}_{va_1}, \dots, {x}^{t}_{va_n}) \in \Delta^n$ and
    \begin{align*}
      \bbE_t[\vec{y}_v^t] = \sum_{i = 1}^{n} {x}^{t}_{va_i} \vec{e_i} = \substrategy{v}.
    \end{align*}
\item\textbf{Third case:} $v \in \cJ$ is a (generic) decision point. Let $\{a_1, \dots, a_m\}$ the terminal actions at $v$, and $\{a_{m+1}, \dots, a_n\}$ the remaining, non-terminal actions. Furthermore, let $\next{j} = \{k_{m+1}, \dots, k_n\}$ be the set of observation points that are immediately reachable after $v$. From~\cref{eq:structure decision point}, $\substrategy{j}$ must be in the form $\substrategy{j} = (\lambda^{t}_1, \dots, \lambda^{t}_n, \lambda^{t}_{m+1} \substrategy{k_{m+1}}, \dots, \lambda^{t}_n \substrategy{k_n})$, where $\vec{\lambda}^t = (\lambda^{t}_1, \dots, \lambda^{t}_n) \in \Delta^n$. It follows that
    \begin{align*}
      \bbE_t[\vec{y}_v^t] =
      \sum_{i=1}^m \lambda_i^t \begin{pmatrix}\vec{e}_i\\\vec{0}\\\vdots\\\vec{0}\end{pmatrix} + \sum_{i=m+1}^n \lambda_i^t \begin{pmatrix}\vec{e}_i\\\vdots\\\bbE_t[\vec{y}_{k_i}^t]\\\vdots\end{pmatrix}
      =
      \xbar_v^t.
    \end{align*}
\item \textbf{Third case:} $v \in \cK$ is an observation point. Let $\next{v} = \{j_1, \dots, j_n\}$ be the set of decision points that are immediately reachable after $v$. From~\eqref{eq:structure observation point}, $\substrategy{v}$ is in the form $\substrategy{v} = (\substrategy{j_1}, \dots, \substrategy{j_n}) \in \prod_{i=1}^n\seqf_{j_i}$. It follows that
    \begin{align*}
      \bbE_t[\vec{y}_v^t] =
      \bbE_t\left[\begin{pmatrix}
        \vec{y}_{j_1}^t \\
        \vdots \\
        \vec{y}_{j_n}^t
      \end{pmatrix}\right]
      =
      \begin{pmatrix}
          \bbE_t[\vec{y}_{j_1}^t] \\
          \vdots \\
          \bbE_t[\vec{y}_{j_n}^t]
      \end{pmatrix} =
      \begin{pmatrix}
          \substrategy{j_1} \\
          \vdots \\
          \substrategy{j_n}
      \end{pmatrix}
      = \substrategy{v},
    \end{align*}
    where the second equality follows from the independence of the sampling scheme and the third equality follows from the inductive hypothesis.\qedhere
\end{itemize}
\end{proof}

\subsection{Autocorrelation Matrix of the Sampling Scheme}
\label{app:autocorrelation}

%
%

\subsubsection{Decision Points}

Let $v \in \cJ$ be a decision point, where actions $a_1, \dots, a_m$ are terminal, while $a_{m+1}, \dots, a_n$ are not. In order to sample a pure sequence-form strategy we first sample $a \in \{a_1, \dots, a_n\}$ according to the distribution specified by $\{{x}_{va}\}_{a \in A_v} \in \Delta^{|A_v|}$. Then, we set $y_{va}^t = 1$. Finally, if the sampled action is \emph{not} terminal, we recursively sample $\vec{y}_{\rho(v,a)}^t$ by calling into the sampling scheme for $\rho(v,a)$.

\begin{restatable}{lemma}{lemautconvexhull}\label{lem:aut convex hull}
    Let $\pure$ be a decision process rooted in decision point $j$. Let $\{a_1,\dots,a_m\}$ be the terminal actions at $j$, and let $\{a_{m+1}, \dots, a_n\}$ be the remaining, non-terminal actions. Furthermore, let
\[
    \substrategy{j} = (\lambda^{t}_1, \dots, \lambda^t_m, \lambda^t_{m+1}, \dots, \lambda^{t}_n, \lambda^{t}_{m+1} \substrategy{k_{m+1}}, \dots, \lambda^{t}_n \substrategy{k_n}),
\]
where $(\lambda^{t}_1, \dots, \lambda^{t}_n) \in \Delta^n$, in accordance with~\eqref{eq:structure decision point}. Let $\mat{C}_{k_i}^t$ for $i \in 1,\dots, n$ be the autocorrelation matrix of the unbiased sampling scheme picking $\vec{y}_{k_i}^t \in \pure_{k_i}$ using $\xbar_{k_i} / \lambda_i$ for all $i\in\{m+1, \dots, n\}$.  The autocorrelation matrix of the sampling scheme picking $\vec{y}_{j}^t$ is
    \begin{align*}
      \mat{C}^{t}_{j} =
      \mleft(\!\begin{array}{c@{\hspace{1mm}}c@{\hspace{1mm}}c@{\hspace{.5mm}}|c@{\hspace{1mm}}c@{\hspace{1mm}}c@{\hspace{.5mm}}|@{\hspace{1mm}}c@{\hspace{1mm}}c@{\hspace{1mm}}c}
        \lambda^t_1&& &&& &&\\
        &\ddots& &&& &&\\
        &&\lambda^t_m &&& &&\\[1mm]
    \hline&&&&&\\[-3mm]
        &&&\lambda^{t}_{m+1} &&& \lambda^{t}_{m+1} (\xbar^{t}_{k_{m+1}})\trans && \\[-1mm]
        &&&&\ddots    && &\ddots& \\[-1mm]
        &&&&& \lambda^{t}_n& && \lambda^{t}_n (\xbar^{t}_{k_n})\trans\\[1mm]
    \hline&&&&&\\[-3mm]
        &&&\lambda^{t}_{m+1}\xbar^{t}_{k_{m+1}}\!\! &&& \lambda^{t}_{m+1} \mat{C}^{t}_{k_{m+1}}\!\! &&\\[-1mm]
        &&&&\ddots&&&\ddots&\\
        &&&&&\lambda^{t}_n\xbar^{t}_{k_n}&&&\!\!\!\lambda^{t}_n \mat{C}^{t}_{k_n}
      \end{array}\!\mright)\!.
    \end{align*}
\end{restatable}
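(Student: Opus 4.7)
The plan is to compute the autocorrelation matrix $\mat{C}_j^t = \bbE_t[\vec{y}_j^t(\vec{y}_j^t)\trans]$ directly by conditioning on the first sampling choice (namely, which action $a_I \in A_j$ gets selected) and working out the $3\times 3$ block structure indicated in the statement. We decompose the vector $\vec{y}_j^t$ into three groups of coordinates following the decomposition of $\xbar_j^t$: the $n$-dimensional one-hot indicator $\vec{e}_I$ over $A_j = \{a_1,\dots,a_n\}$, and, for each non-terminal action $a_i$ ($i>m$), the sub-vector $\vec{y}^t_{k_i}\in\Rp^{|\Sigma_{k_i}|}$ drawn from the recursive sampling scheme at $k_i$. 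By construction of \cref{algo:sampling scheme}, the index $I$ equals $i$ with probability $\lambda_i^t$, and conditional on $I=i$ with $i>m$ the recursive call is invoked only for the observation point $k_i$; hence for any $i'\neq i$ with $i'>m$ we have $\vec{y}_{k_{i'}}^t=\vec{0}$. This key observation drives the entire computation.

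The next step is to evaluate each of the nine blocks by the law of total expectation, conditioning on $I$. The top-left $(n\times n)$ block is $\bbE_t[\vec{e}_I\vec{e}_I\trans] = \sum_{i=1}^n \lambda_i^t\, \vec{e}_i\vec{e}_i\trans = \diag(\lambda_1^t,\dots,\lambda_n^t)$, which produces the upper-left two diagonal blocks of the claimed matrix. For the cross block between the action indicator $\vec{e}_I$ and the subtree vector $\vec{y}^t_{k_i}$ (with $i>m$), conditioning on $I=i'$: if $i'\neq i$ then $\vec{y}^t_{k_i}=\vec{0}$, and if $i'=i$ then $y^t_{ja_i}=1$ and $\bbE_t[\vec{y}^t_{k_i}\mid I=i] = \xbar_{k_i}^t$ by \cref{lem:sample unbiased} applied to the recursive call. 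Averaging over $I$ gives $\bbE_t[y^t_{ja_{i'}}\,(\vec{y}^t_{k_i})\trans] = \lambda_i^t\,(\xbar_{k_i}^t)\trans$ when $i'=i$ and $\vec{0}$ otherwise, producing the block-diagonal off-diagonal structure shown in the top-right (and, by symmetry, bottom-left) of the matrix.

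Finally, for the subtree-subtree blocks $\bbE_t[\vec{y}^t_{k_i}(\vec{y}^t_{k_{i'}})\trans]$ with $i,i'>m$: if $i\neq i'$ at most one of $\vec{y}^t_{k_i},\vec{y}^t_{k_{i'}}$ is non-zero on any sample, so the expectation vanishes; if $i=i'$, conditioning on $I=i$ (probability $\lambda_i^t$) and using the definition of $\mat{C}^t_{k_i}$ for the recursive sampling scheme gives $\bbE_t[\vec{y}^t_{k_i}(\vec{y}^t_{k_i})\trans\mid I=i] = \mat{C}^t_{k_i}$, hence the unconditional block equals $\lambda_i^t\,\mat{C}^t_{k_i}$. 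Assembling all nine blocks in the indicated layout yields the formula in the statement.

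The proof is essentially a bookkeeping exercise; the only subtle point is the independence structure that makes all cross blocks between distinct subtrees vanish, which is a direct consequence of the ``single recursive call'' design of the sampling scheme. I would lay out the block layout explicitly at the beginning so that the reader can track which block each of the three case analyses contributes to, and to ensure that the symmetric off-diagonal blocks are correctly recovered by transposition.
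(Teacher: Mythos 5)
Your proof is correct and follows essentially the same route as the paper's: condition on the sampled action $a_I$ (selected with probability $\lambda_i^t$), note that only the chosen subtree's recursive call is invoked so the conditional iterate has the form $(\vec{e}_i\trans,\vec{0},\dots,(\vec{y}^t_{k_i})\trans,\dots,\vec{0})\trans$, and take the expectation of the outer product block by block. The paper compresses the block bookkeeping into ``expanding the outer products and summing,'' whereas you spell out each of the nine blocks explicitly, but the argument is the same.
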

\begin{proof}
It follows from the definition of the sampling scheme that
\begin{align*}
  \mat{C}^t_{j} &= \bbE_{t}[\vec{y}_j^t (\vec{y}_j^t)\trans]\\
   &= \sum_{i=1}^m \lambda^t_i
  \bbE_{t}\left[(\vec{e}_i\trans, \vec{0}, \dots, \vec{0})\trans (\vec{e}_i\trans, \vec{0}, \dots, \vec{0}) \right] \\[-2mm]
    &\hspace{3.5cm} + \sum_{i = m+1}^{n} \lambda^t_i
  \bbE_{t}\left[(\vec{e}_i\trans, \vec{0}, \dots, \vec{0}, (\vec{y}_{k_i}^t)\trans,  \vec{0}, \dots, \vec{0})\trans (\vec{e}_i\trans, \vec{0}, \dots, \vec{0}, (\vec{y}_{k_i}^t)\trans,  \vec{0}, \dots, \vec{0}) \right].
\end{align*}
    Expanding the outer products and summing yields the statement.
\end{proof}

\subsubsection{Observation Points}

Let $v \in \cK$ be an observation point. In order to sample $\vec{y}^t_v$ given a $\xbar^t_v = (\xbar^t_{j_1}, \dots, \xbar^t_{j_n})$, we call into the sampling schemes for nodes $j_1, \dots, j_n$ by making $n$ independent calls to $\textsc{Sample}(j_i, \xbar^t_{j_i})$ for $i = 1,\dots, n$.

\begin{restatable}{lemma}{lemautcartesian}\label{lem:aut cartesian}
    Let $\pure$ be a decision process rooted in observation point $k$. Let $\next{k} = \{j_1, \dots, j_n\}$ be the set of decision points that are immediately reachable after $k$. In accordance with~\eqref{eq:structure observation point}, $\substrategy{k}$ is in the form $\substrategy{k} = (\substrategy{j_1}, \dots, \substrategy{j_n})$. Let $\mat{C}_{j_i}^t$ for $i \in 1,\dots, n$ be the autocorrelation matrix of the unbiased sampling scheme picking $\vec{y}_{j_i}^t \in \pure_{j_i}$ using $\xbar_{j_i}$.  The autocorrelation matrix of the sampling scheme picking $\vec{y}_{k}^t \in \pure$ is
    \begin{align*}
        \mat{C}^{t}_{k} &=
        \!\begin{pmatrix}
           \mat{C}^{t}_{j_1} & \xbar^{t}_{j_1} (\xbar^{t}_{j_2})\trans & \cdots & \xbar^{t}_{j_1}(\xbar^{t}_{j_n})\trans \\[1mm]
            \xbar^{t}_{j_2} (\xbar^{t}_{j_1})\trans & \mat{C}^{t}_{j_2} & \cdots & \xbar^{t}_{j_2}(\xbar^{t}_{j_n})\trans\\
            \vdots & \vdots & \ddots & \vdots \\[1mm]
            \xbar^{t}_{j_n} (\xbar^{t}_{j_1})\trans & \xbar^{t}_{j_n}(\xbar^{t}_{j_2})\trans & \cdots & \mat{C}^{t}_{j_n}
           \end{pmatrix}.
    \end{align*}
\end{restatable}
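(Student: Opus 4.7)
The plan is to compute $\mat{C}^t_k = \mathbb{E}_t[\vec{y}^t_k (\vec{y}^t_k)^\top]$ by treating the output vector $\vec{y}^t_k$ as the concatenation $(\vec{y}^t_{j_1}, \dots, \vec{y}^t_{j_n})^\top$ dictated by \eqref{eq:structure observation point}, so that the outer product $\vec{y}^t_k (\vec{y}^t_k)^\top$ decomposes naturally into an $n \times n$ grid of blocks, the $(i, i')$-block being $\vec{y}^t_{j_i} (\vec{y}^t_{j_{i'}})^\top$. Taking block-wise expectations then reduces the statement to two separate observations about the diagonal blocks and the off-diagonal blocks.

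For the diagonal blocks ($i = i'$), the conditional expectation $\mathbb{E}_t[\vec{y}^t_{j_i} (\vec{y}^t_{j_i})^\top]$ is by definition the autocorrelation matrix $\mat{C}^t_{j_i}$ of the sampling scheme at $j_i$, so no further work is required there. For the off-diagonal blocks ($i \neq i'$), I will invoke the fact---immediate from the pseudocode of \cref{algo:sampling scheme}---that at an observation point the sampler fires $n$ \emph{independent} recursive calls on the children $j_1, \dots, j_n$. Conditional on the past decisions, the random vectors $\vec{y}^t_{j_i}$ and $\vec{y}^t_{j_{i'}}$ are therefore independent, so $\mathbb{E}_t[\vec{y}^t_{j_i} (\vec{y}^t_{j_{i'}})^\top] = \mathbb{E}_t[\vec{y}^t_{j_i}]\,\mathbb{E}_t[\vec{y}^t_{j_{i'}}]^\top$. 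Applying \cref{lem:sample unbiased} (the unbiasedness of the sampling scheme) to each factor yields $\xbar^t_{j_i} (\xbar^t_{j_{i'}})^\top$, which matches the claimed off-diagonal block.

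Assembling the $n^2$ blocks into a single matrix then gives the block structure displayed in the statement. There is no real obstacle here: the only ingredients are the block-decomposition of $\vec{y}^t_k$ induced by \eqref{eq:structure observation point}, the conditional independence of the recursive samples at the children of an observation point, and \cref{lem:sample unbiased}. The result is a mild structural companion to \cref{lem:aut convex hull}, and the two lemmas together provide the inductive building blocks needed to reason about $\mat{C}^t$ and its generalized inverse in the subsequent analysis of $\tilde{\vec{\ell}}^t$.
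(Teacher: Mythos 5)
Your proposal is correct and follows essentially the same route as the paper's proof: block-decompose $\vec{y}^t_k(\vec{y}^t_k)^\top$ according to \eqref{eq:structure observation point}, identify the diagonal blocks as the $\mat{C}^t_{j_i}$ by definition, and factor the off-diagonal blocks via the conditional independence of the recursive calls at an observation point together with the unbiasedness $\bbE_t[\vec{y}^t_{j_i}] = \xbar^t_{j_i}$. No gaps.
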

\begin{proof}
It follows from the definition of the sampling scheme that
\begin{align*}
   \mat{C}_{t,k} &= \bbE_{t}[\vec{y}_k^t (\vec{y}_k^t)\trans]\\
   &= \bbE_{t}\left[
   \begin{pmatrix}
      \vec{y}_{j_1} \\
      \vec{y}_{j_2} \\
      \vdots \\
      \vec{y}_{j_n}
    \end{pmatrix}
    \begin{pmatrix}
      \vec{y}_{j_1} \\
      \vec{y}_{j_2} \\
      \vdots \\
      \vec{y}_{j_n}
    \end{pmatrix}\trans
    \right]
    = \!\begin{pmatrix}
   \bbE_{t}[\vec{y}_{j_1}^t(\vec{y}_{j_1}^t)\trans] & \bbE_{t}[\vec{y}_{j_1}^t] \bbE_{t}[\vec{y}_{j_2}^t]\trans & \cdots & \bbE_{t}[\vec{y}_{j_1}^t] \bbE_{t}[\vec{y}_{j_n}^t]\trans \\[1mm]
    \bbE_{t}[\vec{y}_{j_2}^t] \bbE_{t}[\vec{y}_{j_1}^t]\trans & \bbE_{t}[\vec{y}_{j_2}^t(\vec{y}_{j_2}^t)\trans] & \cdots & \bbE_{t}[\vec{y}_{j_2}^t]\bbE_{t}[\vec{y}_{j_n}^t]\trans\\
    \vdots & \vdots & \ddots & \vdots \\[1mm]
    \bbE_{t}[\vec{y}_{j_n}^t]\bbE_{t}[\vec{y}_{j_1}^t]\trans & \bbE_{t}[\vec{y}_{j_n}^t]\bbE_{t}[\vec{y}_{j_2}^t]\trans & \cdots &  \bbE_{t}[\vec{y}_{j_n}^t(\vec{y}_{j_n}^t)\trans]
   \end{pmatrix}\\
   &= \!\begin{pmatrix}
   \mat{C}^{t}_{j_1} & \xbar^{t}_{j_1} (\xbar^{t}_{j_2})\trans & \cdots & \xbar^{t}_{j_1}(\xbar^{t}_{j_n})\trans \\[1mm]
    \xbar^{t}_{j_2} (\xbar^{t}_{j_1})\trans & \mat{C}^{t}_{j_2} & \cdots & \xbar^{t}_{j_2}(\xbar^{t}_{j_n})\trans\\
    \vdots & \vdots & \ddots & \vdots \\[1mm]
    \xbar^{t}_{j_n} (\xbar^{t}_{j_1})\trans & \xbar^{t}_{j_n}(\xbar^{t}_{j_2})\trans & \cdots & \mat{C}^{t}_{j_n}
   \end{pmatrix}.
\end{align*}
\end{proof}

\subsection{Generalized Inverse of the Autocorrelation Matrix}
\label{app:f tilde}


\begin{restatable}{proposition}{propinverseconvexhull}\label{prop:inverse convex hull}
    Let $j \in \cJ$ be a decision point. Let $\{a_1,\dots,a_m\}$ be the terminal actions at $j$, and let $\{a_{m+1}, \dots, a_n\}$ be the remaining, non-terminal actions. Then,
    \begin{align*}
    \mat{C}^{t-}_{j,\ast} \defeq \mleft(\!\begin{array}{c@{\hspace{2.5mm}}c@{\hspace{2.5mm}}c|c@{\hspace{2.5mm}}c@{\hspace{2.5mm}}c|c@{\hspace{1mm}}c@{\hspace{1mm}}c}
    \frac{1}{\lambda_1^t} &&&&&\\
    &\ddots&&&&\\
    &&\frac{1}{\lambda_m^t} &&&\\[2mm]
    \hline
    &&& 0&      & & &       & \\[-1mm]
    &&&  &\ddots& & &       & \\[1mm]
    &&&  &      &0 &&       & \\[1mm]
    \hline&&&&&\\[-3mm]
    &&&  &&& \frac{1}{\lambda^{t}_{m+1}} \mat{C}^{t-}_{k_{m+1}}&&\\[-1mm]
    &&&  &&&&\ddots&\\
    &&&  &&& &&\frac{1}{\lambda^{t}_n} \mat{C}^{t-}_{k_n}\\
    \end{array}\!\mright)
    \end{align*}
    is a generalized inverse for the autocorrelation matrix $\mat{C}^{t}_{j}$ defined in \cref{lem:aut convex hull}. The matrix is well defined in virtue of \cref{obs:xbar positive}.
\end{restatable}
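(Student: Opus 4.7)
The plan is a direct block-wise verification of the generalized-inverse identity $\mat{C}^{t}_{j} \mat{C}^{t-}_{j,\ast} \mat{C}^{t}_{j} = \mat{C}^{t}_{j}$, leveraging the three-block partition of $\mat{C}^{t}_{j}$ exhibited in \cref{lem:aut convex hull}: a top block of $m$ rows and columns indexed by the terminal actions $a_1, \dots, a_m$, a middle block of $n-m$ rows and columns indexed by the non-terminal actions $a_{m+1}, \dots, a_n$, and a bottom block indexed by the sequences in the child subtrees $k_{m+1}, \dots, k_n$. The terminal-action block is diagonal, so it inverts trivially, and the child-subtree block is block-diagonal across the children, so its generalized inverse can be assembled from generalized inverses of the sub-autocorrelation matrices $\mat{C}^{t}_{k_i}$. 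The point of setting the middle block of $\mat{C}^{t-}_{j,\ast}$ to zero is that, in the sequence form, the marginal $\lambda_i^t$ of playing non-terminal action $a_i$ is already recoverable from the aggregate mass of the subtree rooted at $k_i$; this redundancy is precisely what makes $\mat{C}^{t}_{j}$ rank deficient and permits a sparse choice of generalized inverse.

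First I would compute the product $\mat{C}^{t}_{j} \mat{C}^{t-}_{j,\ast}$ block by block. Only three blocks survive: the $(1,1)$ block reduces to the identity; the $(2,3)$ block becomes block-diagonal across children, with $i$-th block $(\xbar^{t}_{k_i})\trans \mat{C}^{t-}_{k_i}$; and the $(3,3)$ block becomes block-diagonal with $i$-th block $\mat{C}^{t}_{k_i}\mat{C}^{t-}_{k_i}$. All other blocks vanish because they contain a zero factor from $\mat{C}^{t-}_{j,\ast}$.

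Next I would multiply the result by $\mat{C}^{t}_{j}$ on the right. After collecting the $\lambda_i^t$ scalars from $\mat{C}^{t}_{j}$, each nontrivial block identity reduces to one of the following, holding per child $k_i$: (a) $\mat{C}^{t}_{k_i}\mat{C}^{t-}_{k_i}\xbar^{t}_{k_i} = \xbar^{t}_{k_i}$ for the $(3,2)$ block; (b) $(\xbar^{t}_{k_i})\trans \mat{C}^{t-}_{k_i} \mat{C}^{t}_{k_i} = (\xbar^{t}_{k_i})\trans$ for the $(2,3)$ block; (c) $(\xbar^{t}_{k_i})\trans\mat{C}^{t-}_{k_i}\xbar^{t}_{k_i} = 1$ for the $(2,2)$ block (which cleanly recovers $\lambda_i^t$ on the diagonal); and (d) $\mat{C}^{t}_{k_i}\mat{C}^{t-}_{k_i}\mat{C}^{t}_{k_i} = \mat{C}^{t}_{k_i}$ for the $(3,3)$ block. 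Identities (a) and (b) are \cref{lem:C Cinv y equal y} applied to the sub-sampling scheme at $k_i$; (c) is \cref{lem:z Cinv ybar equals 1}; and (d) is just the inductive assumption that $\mat{C}^{t-}_{k_i}$ is a generalized inverse of $\mat{C}^{t}_{k_i}$. The hypotheses of those lemmas are satisfied because the natural sampling scheme at $k_i$ has full support on $\pure_{k_i}$ and $\xbar^{t}_{k_i} \in \co \pure_{k_i} \subseteq \Span \pure_{k_i}$.

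The main obstacle is purely bookkeeping: tracking which blocks of the product pick up a $\lambda_i^t$ factor from $\mat{C}^{t}_{j}$ versus a $1/\lambda_i^t$ factor from $\mat{C}^{t-}_{j,\ast}$, and verifying that the zero middle block of $\mat{C}^{t-}_{j,\ast}$ does not break any of the block identities involving the non-terminal actions. No subtle convex-analytic or probabilistic argument is needed beyond invocations of the two lemmas above, both of which have already been reduced to elementary image/kernel considerations of $\mat{C}^{t}_{k_i}$ earlier in the appendix.
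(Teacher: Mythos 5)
Your proposal is correct and follows essentially the same route as the paper's proof: a direct block-wise verification of $\mat{C}^{t}_{j}\,\mat{C}^{t-}_{j,\ast}\,\mat{C}^{t}_{j} = \mat{C}^{t}_{j}$ in which each surviving block reduces to \cref{lem:C Cinv y equal y}, \cref{lem:z Cinv ybar equals 1}, or the assumed generalized-inverse property of the child matrices $\mat{C}^{t-}_{k_i}$. The only cosmetic difference is the order in which you associate the triple product, which is immaterial.
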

\begin{proof}\allowdisplaybreaks
Let $\mat{C}^{t-}_{j}$ be the matrix proposed by the statement. Using \cref{lem:aut convex hull},

{\scriptsize\begin{align*}
&\mat{C}^{t}_{j}\, \mat{C}^{t-}_{j}\, \mat{C}^{t}_{j}\\  &= \mat{C}^{t}_{j}
\mleft(\!\begin{array}{c@{\hspace{2.5mm}}c@{\hspace{2.5mm}}c|c@{\hspace{2.5mm}}c@{\hspace{2.5mm}}c|c@{\hspace{1mm}}c@{\hspace{1mm}}c}
    \frac{1}{\lambda_1^t} &&&&&\\
    &\ddots&&&&\\
    &&\frac{1}{\lambda_m^t} &&&\\[2mm]
    \hline
    &&& 0&      & & &       & \\[-1mm]
    &&&  &\ddots& & &       & \\[1mm]
    &&&  &      &0 &&       & \\[1mm]
    \hline&&&&&\\[-3mm]
    &&&  &&& \frac{1}{\lambda^{t}_1} \mat{C}^{t-}_{k_{m+1}}&&\\[-1mm]
    &&&  &&&&\ddots&\\
    &&&  &&& &&\frac{1}{\lambda^{t}_n} \mat{C}^{t-}_{k_n}\\
    \end{array}\!\mright)
\mleft(\!\begin{array}{c@{\hspace{1mm}}c@{\hspace{1mm}}c@{\hspace{.5mm}}|c@{\hspace{1mm}}c@{\hspace{1mm}}c@{\hspace{.5mm}}|@{\hspace{1mm}}c@{\hspace{1mm}}c@{\hspace{1mm}}c}
        \lambda^t_1&& &&& &&\\
        &\ddots& &&& &&\\
        &&\lambda^t_m &&& &&\\[1mm]
    \hline&&&&&\\[-3mm]
        &&&\lambda^{t}_{m+1} &&& \lambda^{t}_{m+1} (\xbar^{t}_{k_{m+1}})\trans && \\[-1mm]
        &&&&\ddots    && &\ddots& \\[-1mm]
        &&&&& \lambda^{t}_n& && \lambda^{t}_n (\xbar^{t}_{k_n})\trans\\[1mm]
    \hline&&&&&\\[-3mm]
        &&&\lambda^{t}_{m+1}\xbar^{t}_{k_{m+1}}\!\! &&& \lambda^{t}_{m+1} \mat{C}^{t}_{k_{m+1}}\!\! &&\\[-1mm]
        &&&&\ddots&&&\ddots&\\
        &&&&&\lambda^{t}_n\xbar^{t}_{k_n}&&&\!\!\!\lambda^{t}_n \mat{C}^{t}_{k_n}
      \end{array}\!\mright)\\[3mm]
&= \mleft(\!\begin{array}{c@{\hspace{1mm}}c@{\hspace{1mm}}c@{\hspace{.5mm}}|c@{\hspace{1mm}}c@{\hspace{1mm}}c@{\hspace{.5mm}}|@{\hspace{1mm}}c@{\hspace{1mm}}c@{\hspace{1mm}}c}
        \lambda^t_1&& &&& &&\\
        &\ddots& &&& &&\\
        &&\lambda^t_m &&& &&\\[1mm]
    \hline&&&&&\\[-3mm]
        &&&\lambda^{t}_{m+1} &&& \lambda^{t}_{m+1} (\xbar^{t}_{k_{m+1}})\trans && \\[-1mm]
        &&&&\ddots    && &\ddots& \\[-1mm]
        &&&&& \lambda^{t}_n& && \lambda^{t}_n (\xbar^{t}_{k_n})\trans\\[1mm]
    \hline&&&&&\\[-3mm]
        &&&\lambda^{t}_{m+1}\xbar^{t}_{k_{m+1}}\!\! &&& \lambda^{t}_{m+1} \mat{C}^{t}_{k_{m+1}}\!\! &&\\[-1mm]
        &&&&\ddots&&&\ddots&\\
        &&&&&\lambda^{t}_n\xbar^{t}_{k_n}&&&\!\!\!\lambda^{t}_n \mat{C}^{t}_{k_n}
      \end{array}\!\mright)
\mleft(\!\begin{array}{c@{\hspace{1mm}}c@{\hspace{1mm}}c@{\hspace{.5mm}}|c@{\hspace{1mm}}c@{\hspace{1mm}}c@{\hspace{.5mm}}|@{\hspace{1mm}}c@{\hspace{1mm}}c@{\hspace{1mm}}c}
        1&&&&&&\\
        &\ddots&&&&&\\
        &&1&&&&\\
        \hline
        &&& 0&&& &&\\[1mm]
        &&& &\ddots&& && \\
        &&& &&0& &&\\[1mm]
    \hline&&&&&\\[-3mm]
        &&& \mat{C}^{t-}_{k_{m+1}} \xbar^{t}_{k_{m+1}} &&& \mat{C}^{t-}_{k_{m+1}} \mat{C}^{t}_{k_{m+1}} &&\\[-1mm]
        &&& &\ddots&&&\ddots&\\
        &&& &&\mat{C}^{t-}_{k_n} \xbar^{t}_{k_n}&&& \mat{C}^{t-}_{k_n} \mat{C}^{t}_{k_n}
      \end{array}\!\mright)
\\[3mm]
&= \mleft(\!\begin{array}{c@{\hspace{1mm}}c@{\hspace{1mm}}c@{\hspace{.5mm}}|c@{\hspace{1mm}}c@{\hspace{1mm}}c@{\hspace{.5mm}}|@{\hspace{1mm}}c@{\hspace{1mm}}c@{\hspace{1mm}}c}
        \lambda^t_1&& &&& &&\\
        &\ddots& &&& &&\\
        &&\lambda^t_m &&& &&\\[1mm]
    \hline&&&&&\\[-3mm]
        &&&\lambda^{t}_{m+1} &&& \lambda^{t}_{m+1} (\xbar^{t}_{k_{m+1}})\trans && \\[-1mm]
        &&&&\ddots    && &\ddots& \\[-1mm]
        &&&&& \lambda^{t}_n& && \lambda^{t}_{n} (\xbar^{t}_{k_n})\trans\\[1mm]
    \hline&&&&&\\[-3mm]
        &&&\lambda^{t}_{m+1}\xbar^{t}_{k_{m+1}}\!\! &&& \lambda^{t}_{m+1} \mat{C}^{t}_{k_{m+1}}\!\! &&\\[-1mm]
        &&&&\ddots&&&\ddots&\\
        &&&&&\lambda^{t}_n\xbar^{t}_{k_n}&&&\!\!\!\lambda^{t}_n \mat{C}^{t}_{k_n}
      \end{array}\!\mright) = \mat{C}^{t}_{j},
\end{align*}}

where we used \cref{lem:C Cinv y equal y,lem:z Cinv ybar equals 1} in the third equality. This concludes the proof.
\end{proof}

\begin{restatable}{proposition}{propinversecartesian}\label{prop:inverse cartesian}
    Let $k \in \cK$ be an observation point, and let $\next{k} = \{j_1, \dots, j_n\}$ be the decision points immediately reachable after $k$. Finally, for all $i = 1, \dots, n$, let $\mat{C}^{t-}_{j_i}$ be any generalized invers for $\mat{C}^{t}_{j_i}$, and let
    \begin{align*}
      &\vec{\mu}^{t}_{k} \defeq \begin{pmatrix}
            \mat{C}^{t-}_{j_1}\,\xbar^{t}_{j_1}\\
            \vdots\\
            \mat{C}^{t-}_{j_n}\,\xbar^{t}_{j_n}
      \end{pmatrix}.
    \end{align*}
    The matrix
\begin{align*}
   \mat{C}_{k,\ast}^{t-} \defeq \begin{pmatrix}
   \mat{C}^{t-}_{j_1}\!\!\! & &\\
   & \ddots &\\
   && \!\!\! \mat{C}^{t-}_{j_n}\
   \end{pmatrix} - \frac{n-1}{n^2}\cdot\vec{\mu}^{t}_{k}(\vec{\mu}^{t}_{k})\trans.
   \end{align*}
    is a generalized inverse for the autocorrelation matrix $ \mat{C}^{t}_{k}$ defined in \cref{lem:aut cartesian}.
\end{restatable}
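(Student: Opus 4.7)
The goal is to verify that $\mat{C}^t_k\mat{C}_{k,\ast}^{t-}\mat{C}^t_k = \mat{C}^t_k$. Since this is equivalent to saying $\mat{C}^t_k\mat{C}_{k,\ast}^{t-}\vec{y} = \vec{y}$ for every $\vec{y} \in \img \mat{C}^t_k$, and by \cref{lem:image equals span} we have $\img \mat{C}^t_k = \Span\pure_k = \Span(\pure_{j_1}\times\cdots\times\pure_{j_n})$, it suffices by linearity to prove
\[
    \mat{C}^t_k\,\mat{C}_{k,\ast}^{t-}\,\vec{y} = \vec{y}\qquad\text{for every } \vec{y}=(\vec{y}_1,\dots,\vec{y}_n)\in\pure_k,
\]
where each $\vec{y}_i\in\pure_{j_i}$. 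The main computational ingredients will be the three child-level identities: (a) $\mat{C}^t_{j_i}\mat{C}^{t-}_{j_i}\vec{y}_i=\vec{y}_i$ and $\mat{C}^t_{j_i}\mat{C}^{t-}_{j_i}\xbar^t_{j_i}=\xbar^t_{j_i}$, which follow from \cref{lem:C Cinv y equal y} since $\vec{y}_i,\xbar^t_{j_i}\in\Span\pure_{j_i}$; and (b) $(\xbar^t_{j_i})\trans\mat{C}^{t-}_{j_i}\vec{y}_i = (\xbar^t_{j_i})\trans\mat{C}^{t-}_{j_i}\xbar^t_{j_i} = 1$, from \cref{lem:z Cinv ybar equals 1}.

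The plan is a two-step block calculation. \emph{First}, I compute $\vec{u}\defeq \mat{C}_{k,\ast}^{t-}\vec{y}$. Writing $\mat{C}_{k,\ast}^{t-}$ as the block-diagonal matrix $\mat{B}\defeq\diag(\mat{C}^{t-}_{j_1},\dots,\mat{C}^{t-}_{j_n})$ minus the rank-one correction $\frac{n-1}{n^2}\vec{\mu}^t_k(\vec{\mu}^t_k)\trans$, the scalar inner product evaluates using identity (b) as
\[
    (\vec{\mu}^t_k)\trans\vec{y} \;=\; \sum_{i=1}^n (\xbar^t_{j_i})\trans\mat{C}^{t-}_{j_i}\vec{y}_i \;=\; n.
\]
Hence the $i$-th block of $\vec{u}$ is $\vec{u}_i = \mat{C}^{t-}_{j_i}\vec{y}_i - \tfrac{n-1}{n}\,\mat{C}^{t-}_{j_i}\xbar^t_{j_i}$.

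\emph{Second}, I apply $\mat{C}^t_k$ to $\vec{u}$ using the block form from \cref{lem:aut cartesian}. The $i$-th block is
\[
    (\mat{C}^t_k\vec{u})_i \;=\; \mat{C}^t_{j_i}\vec{u}_i + \sum_{l\neq i}\xbar^t_{j_i}(\xbar^t_{j_l})\trans\vec{u}_l.
\]
Applying identity (a) to the diagonal term yields $\mat{C}^t_{j_i}\vec{u}_i = \vec{y}_i - \tfrac{n-1}{n}\,\xbar^t_{j_i}$. For each off-diagonal summand, identity (b) evaluates the scalar $(\xbar^t_{j_l})\trans\vec{u}_l = 1 - \tfrac{n-1}{n} = \tfrac{1}{n}$, producing a contribution of $\tfrac{1}{n}\xbar^t_{j_i}$ from each of the $n-1$ terms. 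Collecting everything:
\[
    (\mat{C}^t_k\vec{u})_i \;=\; \vec{y}_i - \frac{n-1}{n}\xbar^t_{j_i} + (n-1)\cdot\frac{1}{n}\xbar^t_{j_i} \;=\; \vec{y}_i,
\]
which is exactly what we needed.

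The only subtle point is the choice of constant $\tfrac{n-1}{n^2}$ in the rank-one correction: it is calibrated precisely so that the $n-1$ off-diagonal contributions exactly cancel the $\tfrac{n-1}{n}$ overshoot from the diagonal term. Aside from this bookkeeping, the argument is purely a direct verification that leans on \cref{lem:aut cartesian} (block structure of $\mat{C}^t_k$), \cref{lem:image equals span} (reduction to $\Span\pure_k$), and \cref{lem:C Cinv y equal y,lem:z Cinv ybar equals 1} (the inductive hypothesis on each child generalized inverse $\mat{C}^{t-}_{j_i}$). No obstacle beyond careful block algebra is anticipated.
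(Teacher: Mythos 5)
Your proposal is correct and rests on exactly the same ingredients as the paper's proof---the block structure from \cref{lem:aut cartesian} together with \cref{lem:C Cinv y equal y,lem:z Cinv ybar equals 1} applied to the child inverses, and the same cancellation of the $\tfrac{n-1}{n}$ overshoot by the $n-1$ off-diagonal contributions. The only cosmetic difference is that you verify $\mat{C}^{t}_{k}\mat{C}^{t-}_{k,\ast}\vec{y}=\vec{y}$ on the spanning set $\pure_k$ of $\img \mat{C}^t_k$ (via \cref{lem:image equals span}), whereas the paper multiplies out the triple product $\mat{C}^{t}_{k}\,\mat{C}^{t-}_{k,\ast}\,\mat{C}^{t}_{k}$ blockwise, first establishing $\mat{C}^{t}_{k}\mat{C}^{t\,\sim}_{k}\mat{C}^{t}_{k}=\mat{C}^{t}_{k}+(n-1)\xbar^{t}_{k}(\xbar^{t}_{k})\trans$ and $\mat{C}^{t}_{k}\vec{\mu}^{t}_{k}=n\xbar^{t}_{k}$.
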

\begin{proof}
In order to reduce the notational burden, let
\[
    \mat{C}^{t\,\sim}_{k} \defeq \begin{pmatrix}
   \mat{C}^{t-}_{j_1} & &\\
   & \ddots &\\
   && \mat{C}^{t-}_{j_n}\
   \end{pmatrix}.
\]
With that, we have
\allowdisplaybreaks
\begin{align}
\mat{C}^{t}_{k}\,\mat{C}^{t\,\sim}_{k}\, \mat{C}^{t}_{k} &=
\begin{pmatrix}
           \mat{C}^{t}_{j_1} & \cdots & \xbar^{t}_{j_1}(\xbar^{t}_{j_n})\trans \\
            \vdots & \ddots & \vdots \\[1mm]
            \xbar^{t}_{j_n} (\xbar^{t}_{j_1})\trans & \cdots & \mat{C}^{t}_{j_n}
           \end{pmatrix}
\begin{pmatrix}
   \mat{C}^{t-}_{j_1} & &\\
   & \ddots &\\
   && \mat{C}^{t-}_{j_n}\
   \end{pmatrix}
\begin{pmatrix}
           \mat{C}^{t}_{j_1} & \cdots & \xbar^{t}_{j_1}(\xbar^{t}_{j_n})\trans \\
            \vdots & \ddots & \vdots \\[1mm]
            \xbar^{t}_{j_n} (\xbar^{t}_{j_1})\trans & \cdots & \mat{C}^{t}_{j_n}
           \end{pmatrix} \nonumber\\[2mm]
&= \begin{pmatrix}
           \mat{C}^{t}_{j_1} & \cdots & \xbar^{t}_{j_1}(\xbar^{t}_{j_n})\trans \\
            \vdots & \ddots & \vdots \\[1mm]
            \xbar^{t}_{j_n} (\xbar^{t}_{j_1})\trans & \cdots & \mat{C}^{t}_{j_n}
           \end{pmatrix}
\begin{pmatrix}
           \mat{C}^{t-}_{j_1}\,\mat{C}^{t}_{j_1} & \cdots & \mat{C}^{t-}_{j_1}\,\xbar^{t}_{j_1}(\xbar^{t}_{j_n})\trans \\
            \vdots & \ddots & \vdots \\[1mm]
            \mat{C}^{t-}_{j_n}\,\xbar^{t}_{j_n} (\xbar^{t}_{j_1})\trans & \cdots & \mat{C}^{t-}_{j_n}\,\mat{C}^{t}_{j_n}
           \end{pmatrix} \nonumber\\[2mm]
&= \begin{pmatrix}
  \mat{C}^{t}_{j_1} + (n-1)\,\xbar^{t}_{j_1}(\xbar^{t}_{j_1})\trans &
  n\,\xbar^{t}_{j_1}(\xbar^{t}_{j_2})\trans & \cdots &
  n\, \xbar^{t}_{j_1}(\xbar^{t}_{j_n})\trans\\[1mm]
  n\,\xbar^{t}_{j_2}(\xbar^{t}_{j_1})\trans & \mat{C}^{t}_{j_2} + (n-1)\,\xbar^{t}_{j_2}(\xbar^{t}_{j_2})\trans &
  \cdots &
  n\, \xbar^{t}_{j_2}(\xbar^{t}_{j_n})\trans\\
  \vdots & \vdots & \ddots & \vdots \\[1mm]
  n\,\xbar^{t}_{j_n}(\xbar^{t}_{j_1})\trans & n\, \xbar^{t}_{j_n}(\xbar^{t}_{j_2})\trans & \cdots
   &\mat{C}^{t}_{j_n} + (n-1)\,\xbar^{t}_{j_n}(\xbar^{t}_{j_n})\trans &
\end{pmatrix}\nonumber\\[2mm]
  &= \mat{C}^{t}_{k} + (n-1)\,\xbar^{t}_{k}(\xbar^{t}_{k})\trans,\label{eq:C Ctilde C}
\end{align}
where we repeatedly used \cref{lem:C Cinv y equal y,lem:z Cinv ybar equals 1}. At the same time, we have
\begin{align}
  \mat{C}^{t}_k\, \vec{\mu}^{t}_{k} &=
    \begin{pmatrix}
           \mat{C}^{t}_{j_1} & \cdots & \xbar^{t}_{j_1}(\xbar^{t}_{j_n})\trans \\
            \vdots & \ddots & \vdots \\[1mm]
            \xbar^{t}_{j_n} (\xbar^{t}_{j_1})\trans & \cdots & \mat{C}^{t}_{j_n}
    \end{pmatrix}
    \begin{pmatrix}
      \mat{C}^{t-}_{j_1} \xbar^{t}_{j_1} \\
      \vdots \\
      \mat{C}^{t-}_{j_n} \xbar^{t}_{j_n}
    \end{pmatrix}
    = \begin{pmatrix}
         n \,\xbar^{t}_{j_1} \\
         \vdots \\
         n \,\xbar^{t}_{j_n}
       \end{pmatrix} = n \,\xbar^{t}_{k},\label{eq:C mu}
\end{align}
where again we used \cref{lem:C Cinv y equal y,lem:z Cinv ybar equals 1}.
Putting~\eqref{eq:C Ctilde C} and~\eqref{eq:C mu} together, we obtain
\begin{align*}
\mat{C}^{t}_{k}\mleft(\mat{C}^{t\,\sim}_{k} - \frac{n-1}{n^2}\cdot \vec{\mu}^{t}_{k}(\vec{\mu}^{t}_{k})\trans\mright)\mat{C}^{t}_{k} =
  \mat{C}^{t}_{k} + (n-1)\,\xbar^{t}_{k}(\xbar^{t}_{k})\trans - (n-1)\,\xbar^{t}_{k}(\xbar^{t}_{k})\trans
  = \mat{C}^{t}_{k},
\end{align*}
as we wanted to show.
\end{proof}

    \newpage
    \section{Loss Estimate}

\propftilde*
\begin{proof}
    For all $\vec{z}\in\lin\pure$,
    \begin{align*}
        \vec{z}\trans \bbE_t[\tilde{\vec{\ell}}^t] &= \vec{z}\trans\bbE_t\mleft[((\vec{y}^t)\trans\!\vec{\ell}^t)\, \mat{C}^{t-}\, \vec{y}^t\mright] + \vec{z}^\top \bbE_t[\vec{b}^t]\\
        &= \vec{z}\trans \bbE_t\mleft[ \mat{C}^{t-}\, \vec{y}^t (\vec{y}^t)\trans\, \vec{\ell}^t \mright]\\
         &= \vec{z}\trans \mat{C}^{t-}\, \mat{C}^t \vec{\ell}^t.
    \end{align*}
    Using the inclusion $\lin \pure \subseteq \Span \pure$ together with \cref{lem:C Cinv y equal y} gives the statement.
\end{proof}

\cref{prop:inverse convex hull,prop:inverse cartesian} give explicit formulas for inductively constructing a generalized inverse $\mat{C}_\ast^{t-}$ of the autocorrelation matrix $\mat{C}^t$ of the natural sampling scheme for sequential decision processes needed in \cref{prop:f tilde}.

In the next subsection we focus on studying the particular orthogonal vector $\vec{b}^t_*$ that we use in our algorithm.

\subsection{Orthogonal Vector}

The primary role of the orthogonal vector $\vec{b}^t$ is to prevent the subtraction $-(n-1)/n\vec{\mu}_k^t$ in \cref{prop:inverse cartesian} from creating negative entries in the loss estimate. It is important to prevent negative entries in the loss estimate because our refined local-norm-based bound for the online mirror descent algorithm paired with the dilated entropy DGF (\cref{thm:omd regret bound}) is contingent on it.

The insight that we use in the construction of $\vec{b}_*^t$ is that the vector $\vec{\mu}_k^t$ in \cref{prop:inverse cartesian} is orthogonal to $\lin\pure_k$ for all $k \in \cK$ because of \cref{lem:z Cinv ybar equals 1}. So, we construct the vector $\vec{b}^t \perp \lin\pure$ inductively to cancel the effect of all $\vec{\mu}_k^t$, as follows:
\begin{itemize}[nolistsep,itemsep=1mm,leftmargin=8mm]
    \item Consider a decision point $j \in \cJ$. Let $\{a_1, \dots, a_m\}$ at $j$ be the terminal actions, and let the remaining actions $\{a_{m+1},\dots, a_n\}$ be non-terminal, leading to observation nodes $k_{m+1},\dots, k_n$ respectively. Furthermore, let the $a$ be the action that was selected at $j$ by the pure sequence-form strategy $\vec{y}^t$. If $a \le m$, we let
        \begin{equation}\label{eq:bj terminal}
          \vec{b}_{j,*}^t = \mleft(\frac{1}{\lambda^t_a}(N_j - 1) \vec{e}_a, \vec{0}, \dots, \vec{0}\mright).
        \end{equation}
        Otherwise, $a \in \{m+1,\dots,n\}$ is a non-terminal action, and we recursively let
        \begin{equation}\label{eq:bj nonterminal}
          \vec{b}_{j,*}^t = \mleft(\frac{1}{\lambda^t_a}(N_j - N_{k_a}) \vec{e}_a, \vec{0}, \dots, \frac{1}{\lambda^t_a}\vec{b}^t_{k_a,*} , \vec{0}\mright).
        \end{equation}

        So, in particular, at all terminal decision points $j\in\cJ$ we have $\vec{b}^t_j = \vec{0}$ since $N_j = 1$.
    \item At all observation points $k\in \cK$, we let
    \begin{equation}\label{eq:bk}
        \vec{b}^{t}_{k,*} = (\vec{b}^t_{j_1,*}, \dots, \vec{b}^t_{j_n,*}) + \frac{n-1}{n} \vec{\mu}^t_k,
    \end{equation}
  where $\{j_1, \dots, j_n\} = S_k$.
\end{itemize}

The vector $\vec{b}_*^t$ we just defined satisfies the following property.

\begin{lemma}\label{lem:bt}
    At all times $t$ and for all nodes $v \in \cJ\cup\cK$,
    \[
        \bbE_t[\vec{b}^t_{v,*}]^\top \vec{z}_v = N_v - 1\qquad \forall\, \vec{z}_v \in \seqf_v.
    \]
\end{lemma}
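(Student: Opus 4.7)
The plan is to prove the identity by structural induction on the node $v$, going bottom-up along the TFSDM tree. The three cases (terminal decision point, general decision point, observation point) mirror the recursive definition of $\vec{b}^t_{v,*}$ in \eqref{eq:bj terminal}--\eqref{eq:bk}, and I will pair each case with the structural decomposition of $\seqf_v$ from \eqref{eq:structure simplex}--\eqref{eq:structure observation point}.

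The base case is immediate: if all actions at $j$ are terminal, then $N_j=1$ and $\vec{b}^t_{j,*}=\tfrac{1}{\lambda^t_a}(N_j-1)\vec{e}_a = \vec{0}$ regardless of the sampled action. For an observation point $k$ with $\next{k}=\{j_1,\ldots,j_n\}$, I decompose $\vec{z}_k = (\vec{z}_{j_1},\dots,\vec{z}_{j_n})$ per \eqref{eq:structure observation point}; the inductive hypothesis supplies $\bbE_t[\vec{b}^t_{j_i,*}]^\top \vec{z}_{j_i} = N_{j_i}-1$, while \cref{lem:z Cinv ybar equals 1} applied blockwise yields $\vec{z}_{j_i}^\top \mat{C}^{t-}_{j_i}\xbar^t_{j_i} = 1$, whence $\vec{z}_k^\top \vec{\mu}^t_k = n$. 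Summing the two contributions according to \eqref{eq:bk} gives $\sum_i (N_{j_i}-1) + \tfrac{n-1}{n}\cdot n = N_k-1$, using that $N_k = \sum_i N_{j_i}$ at an observation point.

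For a decision point $j$ with terminal actions $a_1,\dots,a_m$ and non-terminal actions $a_{m+1},\dots,a_n$ leading to observation points $k_{m+1},\dots,k_n$, I decompose $\vec{z}_j$ as in \eqref{eq:structure decision point} with coefficients $(\alpha_1,\dots,\alpha_n)\in\Delta^n$ and tails $\vec{z}_{k_i}\in\seqf_{k_i}$. Action $a_i$ is sampled with probability $\lambda^t_i$, so the $1/\lambda^t_i$ factors in \eqref{eq:bj terminal}--\eqref{eq:bj nonterminal} cancel upon taking expectations. The key observation is that the $k_i$-subtree contribution is nonzero only when $a_i$ is chosen; but since \cref{algo:sampling scheme} samples in each child subtree independently of everything else, conditional on $a_i$ being picked the inductive hypothesis applies verbatim to the $k_i$-block and yields $\bbE_t[\vec{b}^t_{k_i,*}\mid a_i]^\top \vec{z}_{k_i} = N_{k_i}-1$. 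Collecting terms,
\[
  \bbE_t[\vec{b}^t_{j,*}]^\top \vec{z}_j = \sum_{i=1}^m (N_j-1)\alpha_i + \sum_{i=m+1}^n \bigl[(N_j - N_{k_i}) + (N_{k_i}-1)\bigr]\alpha_i = (N_j-1)\sum_{i=1}^n \alpha_i = N_j - 1,
\]
which closes the induction. The only subtle step is the conditional-expectation bookkeeping at decision points; once that is pinned down, the rest is a routine algebraic combination of the inductive hypothesis with \cref{lem:z Cinv ybar equals 1}, cleanly mirroring the layered construction of $\mat{C}^{t-}_*$ and $\vec{b}^t_*$.
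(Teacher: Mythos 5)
Your proof is correct and follows essentially the same route as the paper's: a bottom-up structural induction with the same three cases, the same use of \cref{lem:z Cinv ybar equals 1} to evaluate $\vec{z}_k^\top\vec{\mu}_k^t = n$ at observation points, and the same cancellation of the $1/\lambda^t_a$ factors at decision points. Your explicit remark about the conditional-independence bookkeeping at decision points is a point the paper passes over silently, but it does not change the argument.
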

\begin{proof}
    We prove the lemma by induction on the structure of the decision process:
    \begin{itemize}
      \item \textbf{First case:} $v \in \cJ$ is a terminal decision point. In this case $N_v = 1$. Furthermore, by construction $\vec{b}^t_{v,*} = \vec{0}$, so $\bbE_t[\vec{b}_{v,*}^t]^\top \vec{z}_v = 0 = N_v - 1$.
      \item \textbf{Second case:} $v \in \cJ$ is a (generic) decision point. In this case, taking the expectation of $\vec{b}_v^t$ (\cref{eq:bj terminal,eq:bj nonterminal}) over all possible choices of action $a$ at $v$ (action $a$ is chosen with probability $\lambda^t_a$, yields
          \[
            \bbE[\vec{b}_{v,*}^t] = \mleft(N_v - 1, \dots, N_v - 1, N_v - N_{k_{m+1}}, \dots, N_v - N_{k_{n}}, \bbE_t[\vec{b}_{k_{m+1},*}^t], \dots, \bbE_t[\vec{b}_{k_{n},*}^t]\mright).
          \]
          From \eqref{eq:structure decision point}, any $\vec{z}_v \in \seqf_v$ can be written in the from $(\lambda'_1, \dots, \lambda'_n, \lambda'_{m+1} \vec{z}_{k_{m+1}}, \dots, \lambda'_n \vec{z}_{k_n})$ for appropriate $(\lambda'_1, \dots, \lambda'_n) \in \Delta^n$ and $\vec{z}_{k_i} \in \seqf_{k_i}$ for $i \in \{m+1, \dots, n\}$. So,
         \[
         \bbE[\vec{b}^t_{v,*}]^\top \vec{z}_v = \sum_{i=1}^m (N_v - 1) \lambda'_i + \sum_{i=m+1}^n \mleft( \lambda'_i (N_v - N_{k_i}) + \lambda'_i \bbE[\vec{b}^t_{k_i,*}]^\top \vec{z}_{k_i}\mright).
         \]
         Finally, using the inductive hypothesis $\bbE[\vec{b}^t_{k_i,*}]^\top \vec{z}_{k_i} = N_{k_i} - 1$ for all $i \in \{m+1, \dots, n\}$,
         \begin{align*}
            \bbE[\vec{b}^t_{v,*}]^\top \vec{z}_v &= \sum_{i=1}^m (N_v - 1) \lambda'_i + \sum_{i=m+1}^n (N_v - N_{k_i}) + (N_{k_i} - 1)\\
                &= \sum_{i=1}^m (N_v - 1) \lambda'_i  + \sum_{i=m+1}^n \lambda'_i (N_v - 1) = N_v - 1,
         \end{align*}
         where we used the fact that $(\lambda'_1, \dots, \lambda'_n) \in \Delta^n$ in the last equality.
    \item \textbf{Third case:} $v \in \cK$ is an observation point. In this case, the expectation of $\vec{b}_v^t$ (\cref{eq:bk}) is
        \[
            \bbE[\vec{b}^t_{v,*}] = (\bbE[\vec{b}^t_{j_1,*}], \dots \bbE[\vec{b}^t_{j_n,*}]) + \frac{n-1}{n} \vec{\mu}_k^t.
        \]
        So, using the inductive hypothesis and \cref{lem:z Cinv ybar equals 1}, we find that
        \[
            \bbE[\vec{b}^t_{v,*}]^\top \vec{z}_v = \sum_{i=1}^n (N_{j_i} - 1) + \frac{n-1}{n} \cdot n = -1 + \sum_{i=1}^n N_{j_i}.
        \]
        Using the fact that $N_v = \sum_{i=1}^n N_{j_i}$ yields the statement.\qedhere
    \end{itemize}
\end{proof}

Since any point in $\lin\pure$ is the difference of two points in $\seqf$, \cref{lem:bt} immediately implies the following.

\begin{corollary}
  At all times $t$,
  $
    \bbE_t[\vec{b}^t_{v,*}] \perp \vec{z}_v \quad \forall \vec{z}_v \in \lin \pure.
  $
\end{corollary}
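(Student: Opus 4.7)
The plan is a direct one-line reduction to \cref{lem:bt}. First I would observe that $\pure_v \subseteq \seqf_v$, since every pure sequence-form strategy is in particular a sequence-form strategy. Thus \cref{lem:bt} asserts that the linear functional $\vec{z} \mapsto \bbE_t[\vec{b}^t_{v,*}]^\top \vec{z}$ takes the \emph{same} constant value $N_v - 1$ at every pure strategy in $\pure_v$.

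Next, I would unfold the definition $\lin \cX \defeq \Span\{\vec{u}-\vec{w} : \vec{u},\vec{w}\in\cX\}$ applied to $\cX = \pure_v$: any $\vec{z}_v \in \lin\pure_v$ can be written as a finite linear combination $\vec{z}_v = \sum_i \alpha_i (\vec{u}_i - \vec{w}_i)$ with each $\vec{u}_i, \vec{w}_i \in \pure_v \subseteq \seqf_v$. Applying the functional term-by-term, \cref{lem:bt} gives $\bbE_t[\vec{b}^t_{v,*}]^\top \vec{u}_i = N_v - 1 = \bbE_t[\vec{b}^t_{v,*}]^\top \vec{w}_i$, so the constant cancels inside each difference and the whole sum collapses to zero. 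This is exactly the orthogonality claimed in the corollary.

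There is no real obstacle: the corollary is just a geometric repackaging of \cref{lem:bt}. The only substantive point worth verifying when writing this up is that the constant $N_v - 1$ appearing in \cref{lem:bt} genuinely depends only on $v$ and not on the particular sequence-form strategy in $\seqf_v$, which is already explicit in the statement of that lemma. The corollary will then be useful in \cref{prop:f tilde} via the hypothesis $\bbE_t[\vec{b}^t] \perp \lin\pure$, ensuring that the loss estimate $\tilde{\vec{\ell}}^t$ constructed from $\mat{C}_*^{t-}$ and $\vec{b}_*^t$ satisfies the relaxed unbiasedness condition \eqref{eq:relax unbias}.
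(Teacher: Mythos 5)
Your proof is correct and matches the paper's own argument, which likewise derives the corollary in one line from \cref{lem:bt} by noting that every element of $\lin\pure$ is built from differences of points of $\seqf$, on which the functional $\vec{z}\mapsto\bbE_t[\vec{b}^t_{v,*}]^\top\vec{z}$ is the constant $N_v-1$. Your unfolding of the span as a finite linear combination of differences is, if anything, slightly more careful than the paper's phrasing, but it is the same reduction.
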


So, our (inductive) construction of $\vec{b}^t$ satisfies the requirements of \cref{prop:f tilde}. Below we will show that it always leads to non-negative loss estimates.

\subsection{Algorithm for Constructing the Loss Estimate in Linear Time}

Let $l^t \defeq (\vec{\ell}^t)^\top \vec{y}^t$ denote the bandit feedback (loss evaluation) received at time $t$ by the decision maker. In this section, we will show that the unbiased loss estimate $\tilde{\vec{\ell}}^t = l^t\cdot \mat{C}^{t-} \vec{y}^t + \vec{b}^t$, as defined in \cref{prop:f tilde}, can be inductively computed in linear time in the size of the decision process as in \cref{algo:ell tilde} (which is a copy of the algorithm in the body of the paper).

\begin{minipage}[ht]{\linewidth}\centering
        \SetInd{0.2em}{0.4em}%
\scalebox{1}{\begin{algorithm}[H]\small
          \caption{\normalfont$\textsc{LossEstimate}(l, \xbar^t, \vec{y}^t)$\hspace*{-.8cm}}\label{algo:ell tilde2}
            \DontPrintSemicolon
            $\tilde{\vec{\ell}}^t \gets \vec{0}\in\Rpp^{|\Sigma|}$\;
            \Subr{\normalfont$\textsc{Traverse}(v, \alpha_v)$}{
                \uIf{$v \in \cK$} {
                    \For{$s\in S_v$}{
                        $\displaystyle\textsc{Traverse}\mleft(\rho(v,s), \frac{\alpha_v}{|S_v|} + \frac{|S_v|-1}{|S_v|}(1 - l)y^t_{p_v}\mright)$\hspace*{-1cm}\;
                    }
                }
                \Else(\Comment*[f]{\color{commentcolor}that is, $v\in\cJ$]\hspace*{-4mm}}){
                    \For{$a \in A_v$}{
                        \uIf{$\rho(v,a) \neq \terminalnode$}{
                            $\displaystyle\vec{\ell}^t_{ja} \gets \frac{y^t_{va}}{{x}^t_{va}} (N_{v} - N_{\rho(v,a)})$\;
                            $\displaystyle\textsc{Traverse}\mleft(\rho(v,a), \frac{{x}_{p_v}}{{x}_{va}}\alpha_v \mright)$\;
                        }\ElseIf{$\rho(v,a) = \terminalnode$}{
                            $\displaystyle\vec{\ell}^t_{ja} \gets \frac{\alpha_v}{{x}^t_{p_v}}+ \frac{y^t_{va}}{{x}^t_{va}}(l + N_v - 1)$\;
                        }
                    }
                }
            }
            \Hline{}
            $\textsc{Traverse}(r, 0)$\Comment*{\color{commentcolor}$r$: root of the decision process]\hspace*{-4mm}}
            \textbf{return} $\tilde{\vec{\ell}}^t$\;
        \end{algorithm}}
    \end{minipage}

\proplossestimate*
\begin{proof}\allowdisplaybreaks
Our construction revolves around the following quantity, indexed over nodes $v \in \cJ \cup \cK$:
\[
    \vec{h}^t_{v}(\alpha) \defeq \frac{y^t_{p_v}}{{x}^t_{p_v}} \, \tilde{\vec{\ell}}^t_v + \frac{\alpha}{{x}^t_{p_v}} \mat{C}_{v,*}^{t-} \xbar_v^t = l^t\cdot \frac{y^t_{p_v}}{{x}^t_{p_v}}\mat{C}_{v,*}^{t-}\, \vec{y}_{v}^t + \beta\,\vec{b}_{v,*}^t + \frac{\alpha}{{x}^t_{p_v}} \mat{C}_{v,*}^{t-} \xbar_v^t.
\]

The loss estimate $\tilde{\vec{\ell}}^t$ coincides with $\vec{h}_r^t(0)$ where $r$ is the root of the sequential decision problem. We now show that $\vec{h}^t_v(\alpha,\beta)$ can be constructed inductively over the structure of the sequential decision process:
\begin{itemize}[nolistsep,itemsep=1mm,leftmargin=5mm]
  \item If $v\in\cJ$ is a decision node, let $\{a_1, \dots, a_m\}$ be the terminal actions at $v$, and let $\{a_{m+1},\dots, a_n\}$ be the non-terminal actions at $v$ (if any). Finally, let $k_i = \rho(v, a_i)$ for $i=m+1,\dots, n$. In accordance with \cref{eq:structure decision point}, $\xbar^t_v = (\lambda_1^t, \dots, \lambda_n^t, \lambda_{k_{m+1}}^t \xbar_{k_i}^t ,\dots, \lambda_{k_n}^t\xbar_{k_n}^t)$. Similarly, we let $y^t_v = (\nu_1^t, \dots, \nu_n^t, \nu_{k_{m+1}}^t \vec{y}_{k_i}^t ,\dots, \nu_{k_n}^t\vec{y}_{k_n}^t)$.

      Using \cref{prop:inverse convex hull} and \cref{eq:bj terminal,eq:bj nonterminal}, we can write
      \begin{align}\allowdisplaybreaks
          \vec{h}^t_v(\alpha) &=  l^t\frac{y^t_{p_v}}{{x}^t_{p_v}} \begin{pmatrix}
            \frac{\nu^t_1}{\lambda^t_1}\\
            \vdots\\
            \frac{\nu^t_{m}}{\lambda^t_{m}}\\
            0\\
            \vdots\\
            0\\
            \frac{\nu^t_{m+1}}{\lambda^t_{m+1}}\mat{C}^{t-}_{k_{m+1},*} \vec{y}^t_{k_{m+1}}\\
            \vdots\\
            \frac{\nu^t_{n}}{\lambda^t_{n}}\mat{C}^{t-}_{k_{n},*} \vec{y}^t_{k_{n}}\\
          \end{pmatrix} + \frac{y^t_{p_v}}{{x}^t_{p_v}}\begin{pmatrix}
            \frac{\nu^t_1}{\lambda_1^t}(N_v - 1)\\
            \vdots\\
            \frac{\nu^t_m}{\lambda_m^t}(N_v - 1)\\
            \frac{\nu^t_{m+1}}{\lambda_{m}^t}(N_v - N_{k_{m}})\\
            \vdots\\
            \frac{\nu^t_{n}}{\lambda_n^t}(N_v - N_{k_{n}})\\
            \frac{\nu^t_{m+1}}{\lambda^t_{m+1}}\vec{b}^t_{k_{m+1},*}\\
            \vdots\\
            \frac{\nu^t_{n}}{\lambda^t_{n}}\vec{b}^t_{k_{n},*}\\
          \end{pmatrix} + \frac{\alpha}{{x}^t_{p_v}} \begin{pmatrix}
            1\\
            \vdots\\
            1\\
            0\\
            \vdots\\
            0\\
            \mat{C}^{t-}_{k_{m+1},*} \xbar^t_{k_{m+1}}\\
            \vdots\\
            \mat{C}^{t-}_{k_{n},*} \xbar^t_{k_{n}}\\
          \end{pmatrix} \nonumber\\
          &= \begin{pmatrix}
            \frac{\alpha}{{x}^t_{p_v}}+ \frac{y^t_{p_v}\nu_1^t}{{x}^t_{p_v}\lambda^t_1}(l^t  + N_v - 1) \\
            \vdots\\
            \frac{\alpha}{{x}^t_{p_v}} + \frac{y^t_{p_v}\nu_{m}^t}{{x}^t_{p_v}\lambda^t_{m}}(l^t  + N_v - 1)\\
            \frac{y^t_{p_v}\nu_{m+1}^t}{{x}^t_{p_v}\lambda^t_{m+1}}(N_v - N_{k_{m+1}})\\
            \vdots\\
            \frac{y^t_{p_v}\nu_{n}^t}{{x}^t_{p_v}\lambda^t_{n}}(N_v - N_{k_{n}})\\
            \frac{y^t_{p_v}\nu_{m+1}^t}{{x}^t_{p_v}\lambda^t_{m+1}}(l^t \cdot \mat{C}^{t-}_{k_{m+1},*} \vec{y}^t_{k_{m+1}} + \vec{b}^t_{k_{m+1},*}) + \frac{\alpha\lambda^t_{m+1}}{{x}^t_{p_v}\lambda^t_{m+1}} \mat{C}^{t-}_{k_{m+1},*} \vec{y}^t_{k_{m+1}}\\
            \vdots\\
            \frac{y^t_{p_v}\nu_{n}^t}{{x}^t_{p_v}\lambda^t_{n}}(l^t \cdot \mat{C}^{t-}_{k_{n},*} \vec{y}^t_{k_{n}} + \vec{b}^t_{k_{n},*}) + \frac{\alpha\lambda^t_{n}}{{x}^t_{p_v}\lambda^t_{n}} \mat{C}^{t-}_{k_{n},*} \vec{y}^t_{k_{n}}\\
          \end{pmatrix} \nonumber\\
          &= \begin{pmatrix}
            \frac{\alpha}{{x}^t_{p_v}} + \frac{y^t_{va_1}}{{x}^t_{va_1}}(l^t  + N_v - 1) \\
            \vdots\\
            \frac{\alpha}{{x}^t_{p_v}} + \frac{y^t_{va_{m}}}{{x}^t_{va_{m}}}(l^t  + N_v - 1)\\
            \frac{y^t_{va_{m+1}}}{{x}^t_{va_{m+1}}}(N_v - N_{k_{m+1}})\\
            \vdots\\
            \frac{y^t_{va_{n}}}{{x}^t_{va_{n}}}(N_v - N_{k_{n}})\\
            \vec{h}_{k_{m+1}}^t\mleft(\alpha\lambda^t_{m+1}\mright)\\
            \vdots\\
            \vec{h}_{k_{n}}^t\mleft(\alpha\lambda^t_{n}\mright)\\
          \end{pmatrix}\label{eq:explore j}
      \end{align}
  \item If $v\in \cK$ is an observation point, let $\{j_1,\dots,j_n\} = \next{v}$ be the decision points that are immediately reachable from $v$. From \cref{eq:structure observation point} and \cref{eq:pure k} we can write $\xbar^t_v = (\xbar^t_{j_1}, \dots, \xbar^t_{j_n})$ and $\vec{y}^t_v = (\vec{y}^t_{j_1}, \dots, \vec{y}^t_{j_n})$. Using \cref{prop:inverse cartesian} and \cref{eq:bk}, we can recursively expand $\mat{C}^{t-}_{v,*}$ and $\vec{b}^t_{v,*}$ and write
      \begin{align}
            \vec{h}_v^t(\alpha) &= l^t \frac{y^t_{p_v}}{{x}^t_{p_v}}\begin{pmatrix}
                \mat{C}^{t-}_{j_1,*}&&\\
                &\ddots&\\
                &&\mat{C}^{t-}_{j_n,*}
            \end{pmatrix} \begin{pmatrix}
                \vec{y}^t_{j_1}\\
                \vdots\\
                \vec{y}^t_{j_n}
            \end{pmatrix} + \frac{y^t_{p_v}}{{x}^t_{p_v}}(1 - l^t) \frac{n-1}{n}\begin{pmatrix}
                \mat{C}^{t-}_{j_1,*} \xbar^t_{j_1}\\
                \vdots\\
                \mat{C}^{t-}_{j_n,*} \xbar^t_{j_n}\\
            \end{pmatrix}
            + \frac{y^t_{p_v}}{{x}^t_{p_v}} \begin{pmatrix}
                \vec{b}^{t}_{j_1,*}\\
                \vdots\\
                \vec{b}^{t}_{j_n,*}\\
            \end{pmatrix} \nonumber\\
            &\hspace{10cm}
            + \frac{\alpha}{n{x}^t_{p_v}} \begin{pmatrix}
                \mat{C}^{t-}_{j_1,*} \xbar^t_{j_1}\\
                \vdots\\
                \mat{C}^{t-}_{j_n,*} \xbar^t_{j_n}\\
            \end{pmatrix} \nonumber\\
        &= \begin{pmatrix}
                \vec{h}_{j_1}^{t}\mleft(\frac{\alpha}{n} + (1 - l^t)\frac{n-1}{n}y^t_{p_v}\mright)\\
                \vdots\\
                \vec{h}_{j_n}^{t}\mleft(\frac{\alpha}{n} + (1 - l^t)\frac{n-1}{n}y^t_{p_v}\mright)\\
            \end{pmatrix}.\label{eq:explore k}
      \end{align}
\end{itemize}

\cref{eq:explore j,eq:explore k} together show that $\vec{h}^t_v(\alpha)$ can be computed inductively along the structure of the decision problem.

Routine $\textsc{Traverse}(v,\alpha_v)$ in \cref{algo:ell tilde} implementes the recursive construction $\vec{h}^t(\alpha)$ by operationalizing \cref{eq:explore j,eq:explore k} into code. Specifically, for all $v \in \cJ\cup\cK$ and $\alpha_v$, $\textsc{Traverse}(v,\alpha_v)$ computes $\vec{h}^t_v(\alpha_v)$. Lines~7-12 of \cref{algo:ell tilde2} correspond to \cref{eq:explore j}, while Line~5 corresponds to \cref{eq:explore k}.

Since the algorithm performs constant work per each sequence, the algorithm runs in linear time in the number of sequences $|\Sigma|$ of the sequential decision process.
\end{proof}

\subsection{Expected Local Dual Norm of Loss Estimate}
\label{app:f tilde norm}

We start from a technical lemma.

\begin{lemma}\label{lem:Cinv xbar le 2}
    For all nodes $v \in \cJ \cup \cK$,
    \[
        \dn{\mat{C}^{t-}_{v,\ast} \,\xbar_v}{\xbar_v}^2 \le 2.
    \]
\end{lemma}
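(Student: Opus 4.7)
The plan is to prove by bottom-up induction on the tree structure the strengthened bound
\[ \dn{\mat{C}^{t-}_{v,\ast}\xbar_v}{\xbar_v}^2 \le 2 - \frac{2}{w_v}, \]
which immediately implies the statement since $w_v \ge 2$. Each subtree rooted at $v$ is treated as a standalone decision process with its own sequence-form polytope, dilated entropy DGF, and local norms; when $v$ is an observation point, the restricted Hessian is block-diagonal across the subspaces $\seqf_{j_i}$ of the children, because the boundary variable $x_{p_v}$ becomes a constant of the restriction and decouples the per-child contributions of $\regu_v$. The base case of a terminal decision point $j$ is a direct computation: $\mat{C}^{t}_{j} = \diag(\xbar_j)$ forces $\mat{C}^{t-}_{j,\ast}\xbar_j = \vec{1}$, and \cref{cor:dual norm} gives $\sum_a x_{ja}/w_j = 1/w_j = 1/2 \le 2 - 2/w_j$.

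For an observation point $k$ with children $j_1,\dots,j_n$, I would first combine \cref{prop:inverse cartesian} with the identity $\xbar_{j_i}^{\!\top}\mat{C}^{t-}_{j_i,\ast}\xbar_{j_i} = 1$ from \cref{lem:z Cinv ybar equals 1} to telescope the rank-one correction and obtain $\mat{C}^{t-}_{k,\ast}\xbar_k = \frac{1}{n}\vec{\mu}^t_k$. The block-diagonal restricted Hessian then yields $\dn{\mat{C}^{t-}_{k,\ast}\xbar_k}{\xbar_k}^2 = \frac{1}{n^2}\sum_i \dn{\mat{C}^{t-}_{j_i,\ast}\xbar_{j_i}}{\xbar_{j_i}}^2$, and the inductive step reduces to checking $\frac{1}{n^2}\sum_i(2 - 2/w_{j_i}) \le 2 - 2/\sum_i w_{j_i}$: this is an equality when $n = 1$, and for $n \ge 2$ it follows by AM--HM and $w_{j_i} \ge 2$ (the left-hand side is nonpositive while the right-hand side is nonnegative).

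For a decision point $j$ with $m$ terminal actions and non-terminal actions $a_{m+1},\dots,a_n$ leading to observation points $k_{m+1},\dots,k_n$, \cref{prop:inverse convex hull} shows that $\mat{C}^{t-}_{j,\ast}\xbar_j$ equals $1$ on the terminal-action entries, $0$ on the non-terminal-action entries, and $\mat{C}^{t-}_{k_i,\ast}\xbar_{k_i}$ on each child subtree. Applying \cref{cor:dual norm} and using \cref{lem:z Cinv ybar equals 1} to evaluate $\vec{u}_{j a_i}^{\!\top}((\mat{C}^{t-}_{j,\ast}\xbar_j)\circ\xbar_j) = \lambda_i$ for every $i$ (non-terminal or not) yields the recursion
\[ \dn{\mat{C}^{t-}_{j,\ast}\xbar_j}{\xbar_j}^2 = \frac{1}{w_j} + \sum_{i=m+1}^n \lambda_i\, \dn{\mat{C}^{t-}_{k_i,\ast}\xbar_{k_i}}{\xbar_{k_i}}^2. \]
Substituting the inductive hypothesis and using $\sum_i \lambda_i/w_{k_i} \ge \Lambda/\max_i w_{k_i}$ (with $\Lambda := \sum_{i=m+1}^n\lambda_i$), together with $\max_i w_{k_i} \le (w_j - 2)/2$ from \cref{def:dilated entropy}, reduces the step to the elementary inequality $3/w_j \le 4/(w_j - 2)$, which is trivially true.

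The main obstacle is identifying this strengthened hypothesis. The naive bound $\dn{\mat{C}^{t-}_{v,\ast}\xbar_v}{\xbar_v}^2 \le 2$ alone does not propagate through the decision-point recursion, because the $1/w_j$ root-level term plus a convex combination of quantities approaching $2$ can exceed $2$. The extra slack $-2/w_v$ is calibrated precisely so that the weight recursion $w_j = 2 + 2\max_i w_{k_i}$ provides enough budget to absorb the $1/w_j$ root-level contribution at each decision node.
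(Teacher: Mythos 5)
Your proposal is essentially the paper's own proof: the paper proves the same statement by the same bottom-up induction with a strengthened hypothesis $\dn{\mat{C}^{t-}_{v,\ast}\xbar_v}{\xbar_v}^2 \le 2 - 1/w_v$ (you use slack $2/w_v$ instead of $1/w_v$; both propagate), the same base case, the same decision-point recursion $\frac{1}{w_j} + \sum_i \lambda_i \dn{\mat{C}^{t-}_{k_i,\ast}\xbar_{k_i}}{\xbar_{k_i}}^2$, and the same observation-point identity $\mat{C}^{t-}_{k,\ast}\xbar_k = \frac{1}{n}\vec{\mu}^t_k$ followed by the $\frac{1}{n^2}$ decomposition. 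One small slip: in your observation-point step the parenthetical ``the left-hand side is nonpositive'' is false --- each summand $2 - 2/w_{j_i}$ lies in $[1,2)$, so the left-hand side is positive; the inequality nevertheless holds for $n\ge 2$ because the left-hand side is at most $2/n \le 1$ while the right-hand side is at least $2 - 1/n \ge 3/2$.
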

\begin{proof}
    We prove the slightly stronger bound
    \[
        \dn{\mat{C}^{t-}_{v,\ast} \,\xbar_v}{\xbar_v}^2 \le 2 - \frac{1}{w_v}.
    \]
     by induction on the structure of the decision process:
    \begin{itemize}
      \item \textbf{First case:} $v \in \cJ$ is a terminal decision point. This is the base case.
      In this case, $\xbar^t = (\frac{1}{\lambda^t_1}, \dots, \frac{1}{\lambda^t_n}) \in \Delta^n$, and from \cref{prop:inverse convex hull} we have
          \[
            \mat{C}_{v,*}^{t-} = \begin{pmatrix}\frac{1}{\lambda_1^t}&&\\&\ddots&\\&&\lambda_n^t\end{pmatrix}.
          \]
          So, using \cref{prop:inverse convex hull} and \cref{cor:dual norm},
          \[
            \mat{C}^{t-}_{v,\ast} \,\xbar_v = (1, \dots, 1) \implies \dn{\mat{C}^{t-}_{v,\ast} \,\xbar_v}{\xbar_v}^2 = \sum_{i=1}^n \frac{\lambda_i^t}{w_v} = \frac{1}{w_v} = \frac{1}{2} \le 2 - \frac{1}{2} = 2 - \frac{1}{w_v}.
          \]
      \item \textbf{Second case:} $v \in \cJ$ is a (generic) decision point. In this case,
          \[
            \xbar^t = (\lambda^t_1, \dots, \lambda^t_n, \lambda^t_{m+1} \xbar^t_{k_{m+1}}, \dots, \lambda^t_n \xbar^t_{k_n})
          \]
          for appropriate $(\lambda^t_1, \dots, \lambda^t_n) \in \Delta^n $ (\cref{eq:structure decision point}). So, using \cref{prop:inverse convex hull},
         \[
            \mat{C}^{t-}_{v,\ast} \,\xbar_v = (1, \dots, 1, 0, \dots, 0, \mat{C}^{t-}_{m+1} \xbar^t_{k_{m+1}}, \dots, \mat{C}^{t-}_n \xbar^t_{k_n}).
         \]
         Using \cref{cor:dual norm}, we have
         \begin{align*}
            \dn{\mat{C}^{t-}_{v,\ast}}{\xbar_v}^2 &= \sum_{i=1}^n \frac{\lambda_i^t}{w_v} + \sum_{i=m+1}^n \frac{\lambda_i^t}{w_v} \mleft((\xbar_{k_i}^t)^\top \mat{C}^{t-}_{k_i,*} \xbar^t_{k_i}\mright)^2 + \sum_{i=m+1}^n \lambda_i^t \dn{\mat{C}^{t-}_{k_i}\xbar^t_{k_i}}{\xbar^t_{k_i}}^2 \\
            &= \sum_{i=1}^n \frac{\lambda_i^t}{w_v} + \sum_{i=m+1}^n \lambda_i^t \dn{\mat{C}^{t-}_{k_i}\xbar^t_{k_i}}{\xbar^t_{k_i}}^2\\
            &= \frac{1}{w_v} + \sum_{i=m+1}^n \lambda_i^t \dn{\mat{C}^{t-}_{k_i}\xbar^t_{k_i}}{\xbar^t_{k_i}}^2\\
            &\le \frac{1}{w_v} + \max_{i=m+1}^n \dn{\mat{C}^{t-}_{k_i}\xbar^t_{k_i}}{\xbar^t_{k_i}}^2\\
            &\le \frac{1}{w_v} + \max_{i=m+1}^n \mleft\{2 - \frac{1}{w_{k_i}}\mright\}
         \end{align*}
         where the second equality holds from \cref{lem:z Cinv ybar equals 1}, the first inequality follows from using the fact that $(\lambda^t_1, \dots, \lambda^t_n) \in \Delta^n$, and the second inequality follows from the inductive hypothesis. Using the fact that $w_{v} \ge 2 w_{k_i}$ for all $i=m+1, \dots, n$ (\cref{def:dilated entropy}), we obtain
        \[
            \dn{\mat{C}^{t-}_{v,\ast}}{\xbar_v}^2 \le \frac{1}{w_v} + 2 - \frac{2}{w_v} = 2 - \frac{1}{w_v},
        \]
        completing the inductive step.
    \item \textbf{Third case:} $v \in \cK$ is an observation point. Let $\{j_1, \dots, j_n\} = \next{v}$. \cref{prop:inverse cartesian} yields
        \[
            \mat{C}^{t-}_{v,*} \xbar^t_{v} = \frac{1}{n}\mleft(\mat{C}^{t-}_{j_1,*} \xbar^t_{j_i}, \dots, \mat{C}^{t-}_{j_n,*} \xbar^t_{j_n}\mright).
        \]
        So,
        \begin{align*}
            \mat{C}^{t-}_{v,*} \xbar^t_{v} &= \frac{1}{n^2}\sum_{i=1}^n \dn{\mat{C}^{t-}_{j_i,*}}{\xbar^t_{j_i}}^2\\
                &\le \frac{1}{n^2} \sum_{i=1}^n \mleft(2 - \frac{1}{w_{j_i}}\mright)\\
                &\le \frac{1}{n} \sum_{i=1}^n \mleft(2 - \frac{1}{w_{j_i}}\mright)\\
                &\le \max_{i=1}^n \mleft\{2 - \frac{1}{w_{j_i}}\mright\}\le 2 - \frac{1}{w_v},
        \end{align*}
        where the last inequality follows from the observation that $w_{j_i} \le w_v$ for all $i = 1, \dots, n$.
        \qedhere
    \end{itemize}
\end{proof}

\thmexpecteddualnorm*
\begin{proof}
    In order to prove the statement, we will prove the slightly stronger statement that for all $v \in \cJ \cup \cK$ and $\alpha \in[0,1]$,
    \[
        \bbE_t\mleft[\dn{\tilde{\vec{\ell}} + \alpha\mat{C}^{t-}_{v,\ast} \xbar^t_v}{\xbar_v^t}^2\mright] \le  2 N_v^2 + |\Sigma|^2\mleft(2 - \frac{1}{w_{v}}\mright) N_v.
    \]
    Our proof is by induction over the sequential decision process structure:
    \begin{itemize}
      \item \textbf{First case:} $v \in \cJ$ is a terminal decision node. Let $\xbar^t_v = (\lambda^t_1, \dots, \lambda^t_n) \in \Delta^n$, in accordance with \cref{eq:structure simplex}. By the sampling scheme, $\vec{y}^t_v = \vec{e}_a$ with probability $\lambda^t_a$. Furthermore,
          \[
            \tilde{\vec{\ell}}^t_v = l^t \mat{C}^{t-}_{v,\ast} \vec{y}^t_{v} + \vec{b}^t_{v,*} + \alpha \mat{C}^{t-}_{v, \ast} \xbar_v^t = \begin{pmatrix}\alpha \\ \vdots \\ \alpha\end{pmatrix} + \frac{l^t}{\lambda^t_a} \vec{e}_a.
          \]
          So,
          \begin{align*}
            \bbE_t\mleft[\dn{\tilde{\vec{\ell}} + \alpha\mat{C}^{t-}_{v,\ast} \xbar^t_v}{\xbar_v^t}^2\mright] &= \sum_{i=1}^n \lambda^t_i \dnlarge{\begin{pmatrix}\alpha \\ \vdots \\ \alpha\end{pmatrix} + \frac{l^t}{\lambda^t_i} \vec{e}_i}{\xbar^t_v}^2\\
            &= \frac{1}{w_v}\sum_{i=1}^n \lambda_i^t \mleft(\lambda_i \mleft(\alpha + \frac{l^t}{\lambda^t_i}\mright)^2 + \sum_{j\neq i} \lambda^t_j \alpha^2 \mright)\\
            &= \frac{1}{2} \sum_{i=1}^n \lambda_i^t \mleft(\alpha^2 + 2\alpha\, l^t + \frac{(l^t)^2}{\lambda^t_i}\mright)\\
            &= \frac{1}{2} \mleft(\alpha^2 + 2\alpha l^t + n (l^t)^2\mright),
          \end{align*}
          where the second equality follows from expanding the definition of local dual norm. Using the hypothesis that $l^t \in [0,1]$, for all $\alpha \in [0,1]$ we find
            \begin{align*}
                \bbE_t\mleft[\dn{\tilde{\vec{\ell}} + \alpha\mat{C}^{t-}_{v,\ast} \xbar^t_v}{\xbar_v^t}^2\mright] \le \frac{3 + n}{2} \le 2 + n^2 \le 2 N_v^2 + |\Sigma|^2\mleft(2 - \frac{1}{w_{v}}\mright) N_v.
            \end{align*}
      \item \textbf{Second case:} $v \in \cK$. Let $\{j_1, \dots, j_n\} = \next{v}$. Furthermore, let $\xbar_v^t = (\xbar_{j_1}^t, \dots, \xbar_{j_n}^t)$ and $\vec{y}_v^t = (\vec{y}_{j_1}^t, \dots, \vec{y}_{j_n}^t)$ in accordance with \eqref{eq:structure observation point} and \eqref{eq:pure k}, respectively. Using \cref{prop:inverse cartesian}, we have
          \begin{align*}
            \tilde{\vec{\ell}} + \alpha\mat{C}^{t-}_{v,\ast} \xbar^t_v &=
                l^t \mat{C}^{t-}_{v,\ast} \vec{y}_v^t + \vec{b}^t_v + \alpha\mat{C}^{t-}_{v,\ast} \xbar^t_v\\
                &= l^t \begin{pmatrix}
                     \mat{C}^{t-}_{j_1,\ast} & & \\
                     &\ddots & \\
                     && \mat{C}^{t-}_{j_n,\ast}
                   \end{pmatrix}\begin{pmatrix}
                        \vec{y}^t_{j_1}\\
                        \vdots\\
                        \vec{y}^t_{j_n}
                   \end{pmatrix}
                   + \begin{pmatrix}
                        \vec{b}^t_{j_1,\ast}\\
                        \vdots\\
                        \vec{b}^t_{j_n,\ast}
                   \end{pmatrix} + \frac{\alpha + (n-1)(1-l^t)}{n}\begin{pmatrix}
                        \mat{C}^{t-}_{j_1,\ast} \xbar^t_{j_1}\\
                        \vdots\\
                        \mat{C}^{t-}_{j_n,\ast} \xbar^t_{j_n}
                   \end{pmatrix}\\
                &= \begin{pmatrix}
                        \tilde{\vec{\ell}}^t_{j_1}\\
                        \vdots\\
                        \tilde{\vec{\ell}}^t_{j_n}
                    \end{pmatrix} + \frac{\alpha + (n-1)(1-l^t)}{n}\begin{pmatrix}
                        \mat{C}^{t-}_{j_1,\ast} \xbar^t_{j_1}\\
                        \vdots\\
                        \mat{C}^{t-}_{j_n,\ast} \xbar^t_{j_n}
                   \end{pmatrix}\\
                &= \begin{pmatrix}
                        \tilde{\vec{\ell}}^t_{j_1} + \alpha'\cdot\mat{C}^{t-}_{j_1,\ast} \xbar^t_{j_1} \\
                        \vdots\\
                        \tilde{\vec{\ell}}^t_{j_n} + \alpha'\cdot\mat{C}^{t-}_{j_n,\ast} \xbar^t_{j_n}
                    \end{pmatrix}, \quad\text{where } \alpha'\defeq \frac{\alpha + (n-1)(1-l^t)}{n}.
          \end{align*}
          Using the assumption that $l^t \in [0,1]$, for all $\alpha \in [0,1]$ we have $\alpha' \in [0,1]$. Hence, using \cref{cor:dual norm} we obtain that for all $\alpha \in [0,1]$,
          \begin{align*}
            \dn{\tilde{\vec{\ell}} + \alpha\mat{C}^{t-}_{v,\ast} \xbar^t_v}{\xbar^t_v}^2 &= \sum_{i=1}^n \dn{\tilde{\vec{\ell}}^t_{j_i} + \alpha'\cdot\mat{C}^{t-}_{j_i,\ast} \xbar^t_{j_i}}{\xbar^t_{j_i}}^2
          \end{align*}
          Taking expectations and using the inductive hypothesis, we have
          \begin{align*}
            \bbE_t\mleft[\dn{\tilde{\vec{\ell}} + \alpha\mat{C}^{t-}_{v,\ast} \xbar^t_v}{\xbar^t_v}^2\mright] &= \sum_{i=1}^n \bbE_t\mleft[\dn{\tilde{\vec{\ell}}^t_{j_i} + \alpha'\cdot\mat{C}^{t-}_{j_i,\ast} \xbar^t_{j_i}}{\xbar^t_{j_i}}^2\mright]\\
            &\le \sum_{i=1}^n \mleft(2 N_{j_i}^2 + |\Sigma|^2\mleft(2 - \frac{1}{w_{v}}\mright) N_{j_i}\mright) \\
            &\le  2 \mleft(\sum_{i=1}^n N_{j_i}\mright)^2 + |\Sigma|^2\mleft(2 - \frac{1}{w_{v}}\mright) \sum_{i=1}^n N_{j_i}\\
            &= 2 N_v^2 + |\Sigma|^2\mleft(2 - \frac{1}{w_{v}}\mright) N_v.
          \end{align*}
        \item \textbf{Third case:} $v\in \cJ$ is a generic decision node. Let $\{1, \dots, m\}$ be the terminal decision actions at $v$, and $\{m+1, \dots, n\}$ be the remaining, non-terminal decision actions, leading to observation points $k_{m+1}, \dots, k_n$, respectively. Furthermore, let $\xbar^t_v = (\lambda_1, \dots, \lambda_n, \lambda_{m+1}\xbar^t_{k_{m+1}}, \dots, \lambda_n\xbar^t_{k_n})$ in accordance with \cref{eq:structure decision point}. Action $a \in \{1, \dots, n\}$ is selected with probability $\lambda_i$.

            We break the analysis according to whether the sampled action $a$ is terminal or not.
            \begin{itemize}
                \item If $a$ is terminal, from the definition of $\mat{C}^{t-}_{v,*}$ (\cref{prop:inverse convex hull}), $\vec{b}^t_{v,*}$ (\cref{eq:bj terminal}) and \cref{cor:dual norm} we have
                    \begin{align*}
                        \dn{\tilde{\vec{\ell}} + \alpha\mat{C}^{t-}_{v,\ast} \xbar^t_v}{\xbar^t_{v}}^2 &= \frac{1}{w_v} \sum_{i=1}^m \lambda_i^t \mleft(\alpha + \frac{\mathds{1}[i=a]}{\lambda_a^t}N_v \mright)^2 \\
                            &\hspace{3cm}+ \sum_{i=m+1}^n \mleft( \frac{\lambda^t_i}{w_v}(\alpha (\xbar_{k_i}^t)^\top \mat{C}^{t-}_{k_i} \xbar_{k_i}^t)^2 + \lambda^t_i \dn{\alpha\mat{C}^{t-}_{k_i}\xbar^t_{k_i}}{\xbar^t_{k_i}}^2\mright)\\
                        &= \frac{\alpha^2}{w_v} + \frac{2\alpha}{w_v}N_v + \frac{N_v^2}{w_v \lambda_a^t}+ \alpha^2\sum_{i=m+1}^n \lambda_i^t \dn{\mat{C}^{t-}_{k_i} \xbar^t_{k_i}}{\xbar_{k_i}}^2,
                    \end{align*}
                    where we used \cref{lem:z Cinv ybar equals 1} and the fact that $(\lambda_1^t, \dots, \lambda_n^t) \in \Delta^n$ in the second equality. Using \cref{lem:Cinv xbar le 2}, the fact that $w_v \ge 2$ for all $v$, and the hypothesis $\alpha \in [0,1]$, we can bound the squared dual norm as
                    \[
                        \dn{\tilde{\vec{\ell}} + \alpha\mat{C}^{t-}_{v,\ast} \xbar^t_v}{\xbar^t_{v}}^2 \le 4N_v + \frac{N_v^2}{w_v\lambda_a^t}.
                    \]
                \item If $a$ is terminal, from the definition of $\mat{C}^{t-}_{v,*}$ (\cref{prop:inverse convex hull}), $\vec{b}^t_{v,*}$ (\cref{eq:bj nonterminal}) and \cref{cor:dual norm} we have
                    \begin{align}
                        \dn{\tilde{\vec{\ell}} + \alpha\mat{C}^{t-}_{v,\ast} \xbar^t_v}{\xbar^t_{v}}^2 &= \frac{1}{w_v} \mleft(\sum_{i=1}^m \lambda_i^t \alpha^2\mright) + \frac{\lambda^t_a}{w_v}\mleft(\frac{1}{\lambda_a^t}(N_v - N_{\rho(v,a)}) +  \frac{1}{\lambda_a^t}(\xbar_{k_i}^t)^\top \mleft(\tilde{\vec{\ell}}^t_{k_a} + (\alpha\lambda_a^t)\mat{C}^{t-}_{k_a} \xbar_{k_i}^t\mright)\mright)^2 \nonumber\\
                            &\hspace{1cm}+ \frac{1}{\lambda^t_a } \dnlarge{\tilde{\vec{\ell}}^t_{k_a} + (\alpha\lambda_a^t)\mat{C}^{t-}_{k_a} \xbar_{k_a}^t}{\xbar^t_{k_a}}^2 \nonumber\\
                            &\hspace{1cm}+ \sum_{\substack{i \in \{m+1,\dots, n\}\\i\neq a}} \mleft( \frac{\lambda^t_i}{w_v}(\alpha (\xbar_{k_i}^t)^\top \mat{C}^{t-}_{k_i} \xbar_{k_i}^t)^2 + \lambda^t_i \dn{\mat{C}^{t-}_{k_i}\xbar^t_{k_i}}{\xbar^t_{k_i}}^2\mright)\label{eq:ugly}
                    \end{align}
                    Now,
                    \[
                        (\tilde{\vec{\ell}}^t_{k_a})^\top \xbar^t_{k_a} = l^t\cdot \vec{y}_{k_a}^t \mat{C}^{t-}_{k_a} \xbar^t_{k_a} + (\vec{b}^t_{k_a})^\top \xbar^t_{k_a} = l^t + N_{k_a} - 1 \le N_{k_a},
                    \]
                    where we used \cref{lem:bt} together with the hypothesis that $l^t \in [0,1]$. So, using the hypothesis $\alpha \in [0,1]$,
                    \begin{align*}
                        \frac{\lambda^t_a}{w_v}\mleft(\frac{1}{\lambda_a^t}(N_v - N_{\rho(v,a)}) +  \frac{1}{\lambda_a^t}(\xbar_{k_i}^t)^\top \mleft(\tilde{\vec{\ell}}^t_{k_a} + (\alpha\lambda_a^t)\mat{C}^{t-}_{k_a} \xbar_{k_i}^t\mright)\mright)^2 &\le \frac{1}{\lambda_a^t w_v}(N_v + \alpha\lambda_a^t)^2\\
                        &\le \frac{N_v^2}{w_v \lambda_a^t} + 2N_v.
                    \end{align*}
                    Substituting that bound into~\eqref{eq:ugly}, and bounding the remaining terms as in the previous case (that is, terminal sampled action $a$), we have
                    \begin{align*}
                        \dn{\tilde{\vec{\ell}} + \alpha\mat{C}^{t-}_{v,\ast} \xbar^t_v}{\xbar^t_{v}}^2 & \le 4N_v + \frac{N_v^2}{w_v \lambda_a^t} + \frac{1}{\lambda_a^t}\dnlarge{\tilde{\vec{\ell}}^t_{k_a} + (\alpha\lambda_a^t)\mat{C}^{t-}_{k_a} \xbar_{k_a}^t}{\xbar^t_{k_a}}^2
                    \end{align*}
            \end{itemize}
            We now take expectations:
            \begin{align*}
                \bbE_t\mleft[\dn{\tilde{\vec{\ell}} + \alpha\mat{C}^{t-}_{v,\ast} \xbar^t_v}{\xbar^t_{v}}^2\mright] &\le \sum_{a=1}^m \lambda_a^t \mleft(4N_v + \frac{N_v^2}{w_v\lambda_a^t}\mright) \\
&\hspace{1cm}+ \sum_{a=m+1}^n \lambda_a^t \mleft(4N_v + \frac{N_v^2}{w_v\lambda_a^t} + \frac{1}{\lambda_a^t}\bbE_t\mleft[\dnlarge{\tilde{\vec{\ell}}^t_{k_a} + (\alpha\lambda_a^t)\mat{C}^{t-}_{k_a} \xbar_{k_a}^t}{\xbar^t_{k_a}}^2\mright]\mright)\\
                &\le 4N_v + \frac{N_v^3}{w_v} + \sum_{i=m+1}^n \bbE_t\mleft[\dnlarge{\tilde{\vec{\ell}}^t_{k_a} + (\alpha\lambda_a^t)\mat{C}^{t-}_{k_a} \xbar_{k_a}^t}{\xbar^t_{k_a}}^2\mright]\\
                &\le 4N_v + |\Sigma|^2\frac{N_v}{w_v} + \sum_{i=m+1}^n \bbE_t\mleft[\dnlarge{\tilde{\vec{\ell}}^t_{k_a} + (\alpha\lambda_a^t)\mat{C}^{t-}_{k_a} \xbar_{k_a}^t}{\xbar^t_{k_a}}^2\mright].
            \end{align*}
            Since $\alpha \lambda_a^t \in [0,1]$, we can use the inductive hypothesis and write
            \begin{align*}
                \bbE_t\mleft[\dn{\tilde{\vec{\ell}} + \alpha\mat{C}^{t-}_{v,\ast} \xbar^t_v}{\xbar^t_{v}}^2\mright] &\le 4N_v + |\Sigma|^2 \frac{N_v}{w_v} + \sum_{i=m+1}^n \mleft(2-\frac{1}{w_{k_i}}\mright)|\Sigma|^2 \frac{N_{k_i}}{w_{k_i}} + 2 N_{k_i}^2\\
                &\le 2\mleft(N_{k_i}^2 + 2\sum_{i=m+1}^n N_{k_i}\mright) + \frac{|\Sigma|^2}{w_v}+ |\Sigma|^2\mleft(\sum_{i=m+1}^n  \mleft(2 - \frac{1}{w_{k_i}} + \frac{1}{w_v}\mright) N_{k_i}\mright),
            \end{align*}
            where the second inequality follows from using $N_{v} = 1+ \sum_{i=m+1}^n N_{k_i}$. Using the fact that $1/w_i \ge 2/w_v$, we further obtain
            \begin{align*}
                \bbE_t\mleft[\dn{\tilde{\vec{\ell}} + \alpha\mat{C}^{t-}_{v,\ast} \xbar^t_v}{\xbar^t_{v}}^2\mright] &\le 2\mleft(N_{k_i}^2 + 2\sum_{i=m+1}^n N_{k_i}\mright) + |\Sigma|^2\mleft(\sum_{i=m+1}^n  \mleft(2 - \frac{1}{w_{v}}\mright) N_{k_i}\mright)\\
                &= 2\mleft(N_{k_i}^2 + 2\sum_{i=1}^n N_{k_i}\mright) + |\Sigma|^2\mleft(2 - \frac{1}{w_{v}}\mright) (N_v - 1) + \frac{|\Sigma|^2}{w_v}.
            \end{align*}
            Since $\frac{1}{w_v} \le 1 \le 2 - \frac{1}{w_v}$, we can further bound the right hand side as
\begin{align*}
                \bbE_t\mleft[\dn{\tilde{\vec{\ell}} + \alpha\mat{C}^{t-}_{v,\ast} \xbar^t_v}{\xbar^t_{v}}^2\mright] &\le 2\mleft(N_{k_i}^2 + 2\sum_{i=m+1}^n N_{k_i}\mright) + |\Sigma|^2\mleft(2 - \frac{1}{w_{v}}\mright) N_v\\
                &\le 2\mleft(\sum_{i=m+1} N_{k_i}\mright)^2 + |\Sigma|^2\mleft(2 - \frac{1}{w_{v}}\mright) N_v\\
                &\le 2 N_v^2 + |\Sigma|^2\mleft(2 - \frac{1}{w_{v}}\mright) N_v
            \end{align*}
            thus completing the inductive step.
    \end{itemize}
    Finally, noting that $N_v \le |\Sigma|$ at all $v$ yields the statement.
\end{proof}

    \section{Game Instances Used in our Experimental Evaluation}\label{app:games}

\emph{Matrix game} is a small matrix game, where the payoff matrix for Player 1 is
\[
    \begin{pmatrix}
        -1   &    1\\
         1   & -0.5\\
       0.9   &   -1
    \end{pmatrix}.
\]
The first and third action of Player~1 have almost opposite payoffs, which empirically complicates the loss estimate and convergence to the best response with bandit feedback.

\emph{Kuhn poker} is a standard benchmark game in the equilibrium-solving community \citep{Kuhn50:Simplified}. In Kuhn poker, the two players put an ante worth $1$ into the pot at the beginning of the game. Then, each player is privately dealt one card from a deck that contains only three cards---specifically, jack, queen, and king. A single round of betting then occurs, with the following rule: first, Player $1$ decides to either check or bet $1$; then,
  \begin{itemize}[nolistsep,nolistsep,itemsep=1mm,leftmargin=5mm]
  \item If Player 1 checks, Player 2 may check or raise $1$.
    \begin{itemize}[nolistsep]
      \item If Player 2 checks, a showdown occurs; otherwise (that is, Player 2 raises), Player 1 can fold or call.
        \begin{itemize}
          \item If Player 1 folds the game ends and Player 2 takes the pot; if Player 1 calls a showdown occurs.
          \end{itemize}
        \end{itemize}
      \item If Player 1 raises Player 2 may fold or call.
        \begin{itemize}[nolistsep]
        \item If Player 2 folds the game ends and Player 1 takes the pot; if Player 2 calls, a showdown occurs.
        \end{itemize}
      \end{itemize}
      If a showdown occurs, the player with the higher card wins the pot and the game ends.

\emph{Leduc poker} is another common benchmark game in the equilibrium-finding
community~\cite{Southey05:Bayes}. It is played with a deck of $3$ unique
ranks, each of which appears exactly twice in the deck. There are two rounds in the game. In the
first round, all players put an ante of $1$ in the pot and are privately dealt a single card. A round of betting then starts. Player 1 acts first, and at most two bets are allowed per player. Then, a card is publicly revealed, and another
round of betting takes place, with the same dynamics described above. After the two betting round, if one of the players has a pair with the public card, that player
wins the pot. Otherwise, the player with the higher card wins the pot. All bets in the first
round are worth $2$, while all bets in the second round are $4$.

\emph{Goofspiel} is another popular parametric benchmark game, originally proposed by \citet{Ross71:Goofspiel}. It is a two-player card game, played with three identical decks
of $k$ cards each, whose values range from $1$ to $k$. In our experiment, we used $k=4$. At the beginning of the game, each player gets dealt a full deck as their hand, and the third deck (the ``prize'' deck) is shuffled and put face down on the board. In each turn, the topmost card from the prize deck is revealed. Then, each player privately picks a card from their hand. This card acts as a bid to win the card that was just revealed from the prize deck. The selected cards are simultaneously revealed, and the highest one
wins the prize card. If the players' played cards are equal, the prize card is split.  The players’ score are computed as the sum of the values of the prize cards they have won.  
    \section{Hyperparameter Selection for Experimental Evaluation}

The AHR algorithm as well as our proposed bandit regret minimizer require us to choose one step-size parameter. While we could have simply used the theoretically correct step-size which ensures $\Tilde{O}(\sqrt{T})$ expected regret we experimented by multiplying the step size with a constant $\alpha$. Multiplying the theoretically correct step-size with a constant does not effect the asymptotic regret bound. We considered $\alpha \in \{0.5, 1, 2, 5, 10\}$. For each choice of $\alpha$ we ran both algorithm 10 times on each of the two larger games (Goofspiel and Leduc poker). \cref{fig:hyperparameters_ahr,fig:hyperparameters_bco} show the performance of the algorithm by \citet{Abernethy08:Competing} (AHR) and our algorithm for every choice of $\alpha$.

\newcommand\scaletwopics{0.49}
\begin{figure}[H]
    \centering
        \includegraphics[width=\scaletwopics\textwidth]{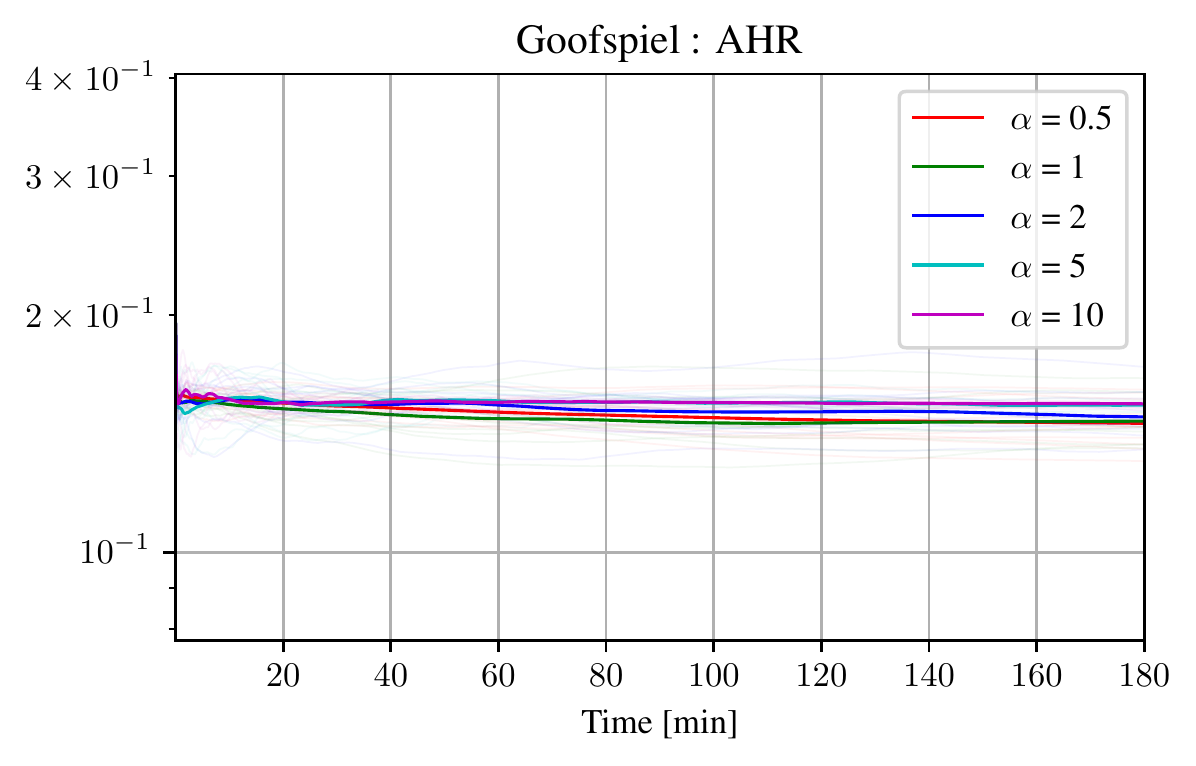}
        \includegraphics[width=\scaletwopics\textwidth]{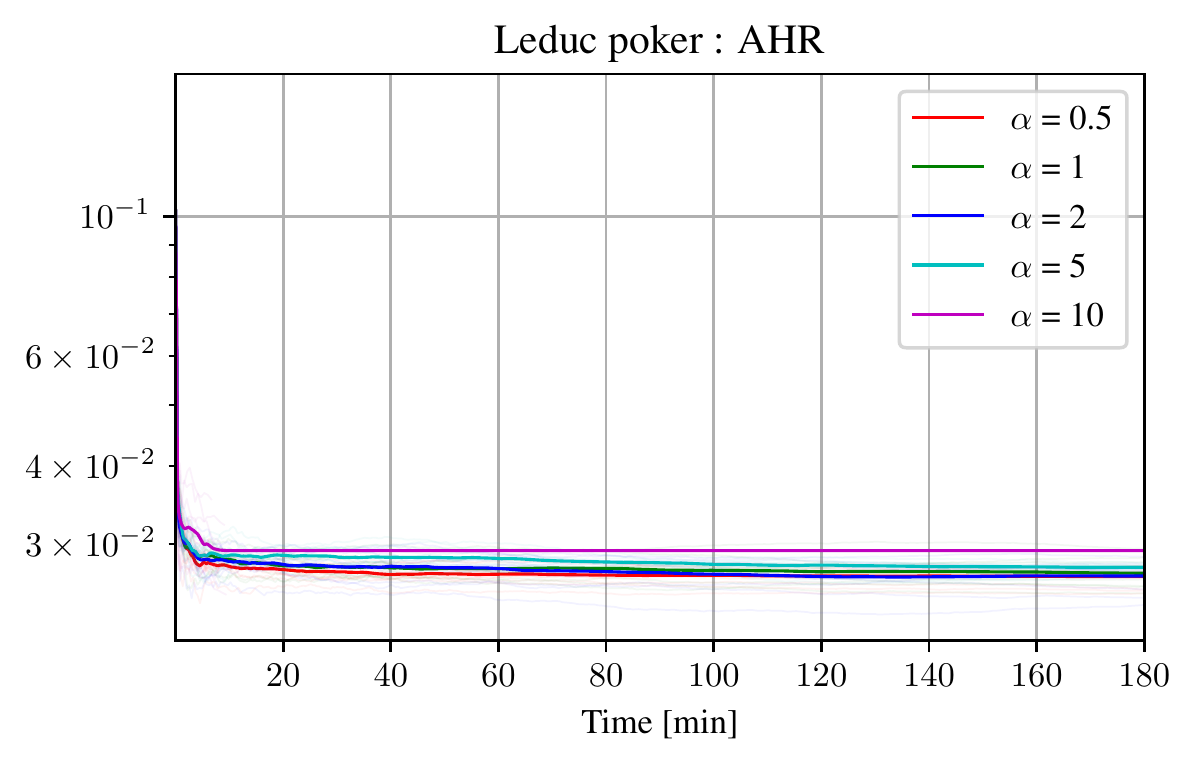}
    \caption{Average regret over 10 runs of the AHR algorithm using different step-size multipliers $\alpha$.}
    \label{fig:hyperparameters_ahr}
\end{figure}

\begin{figure}[H]
    \centering
        \includegraphics[width=\scaletwopics\textwidth]{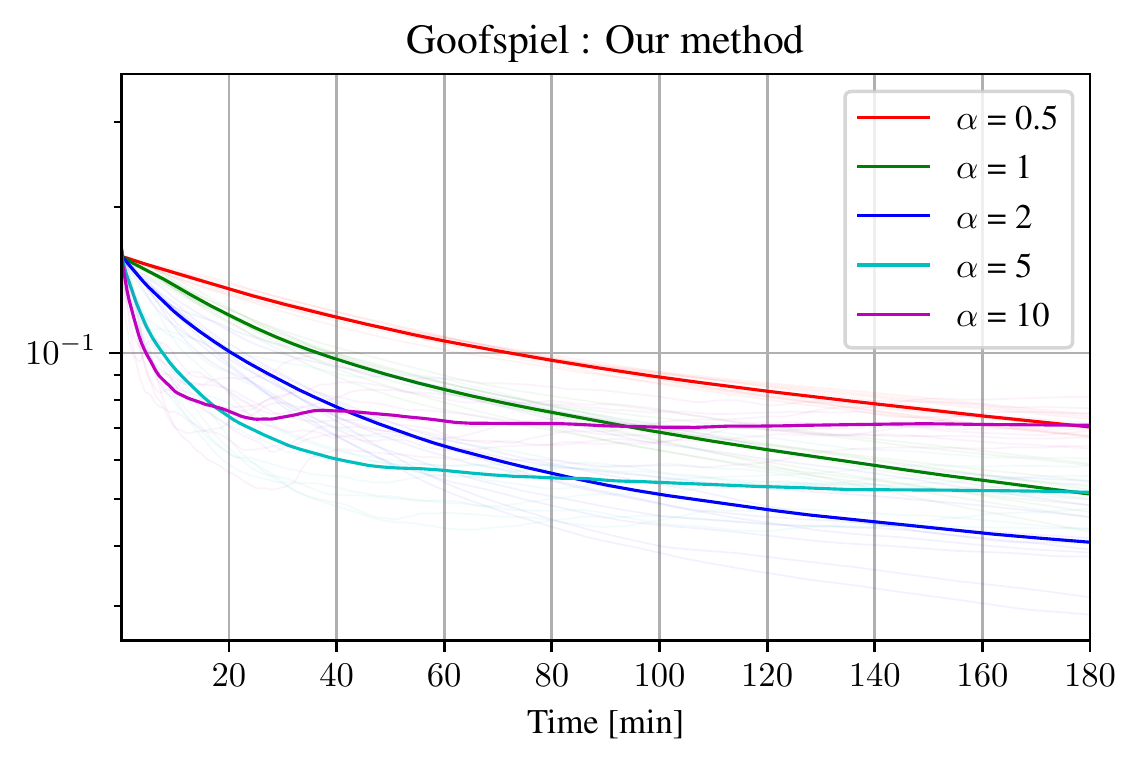}
        \includegraphics[width=\scaletwopics\textwidth]{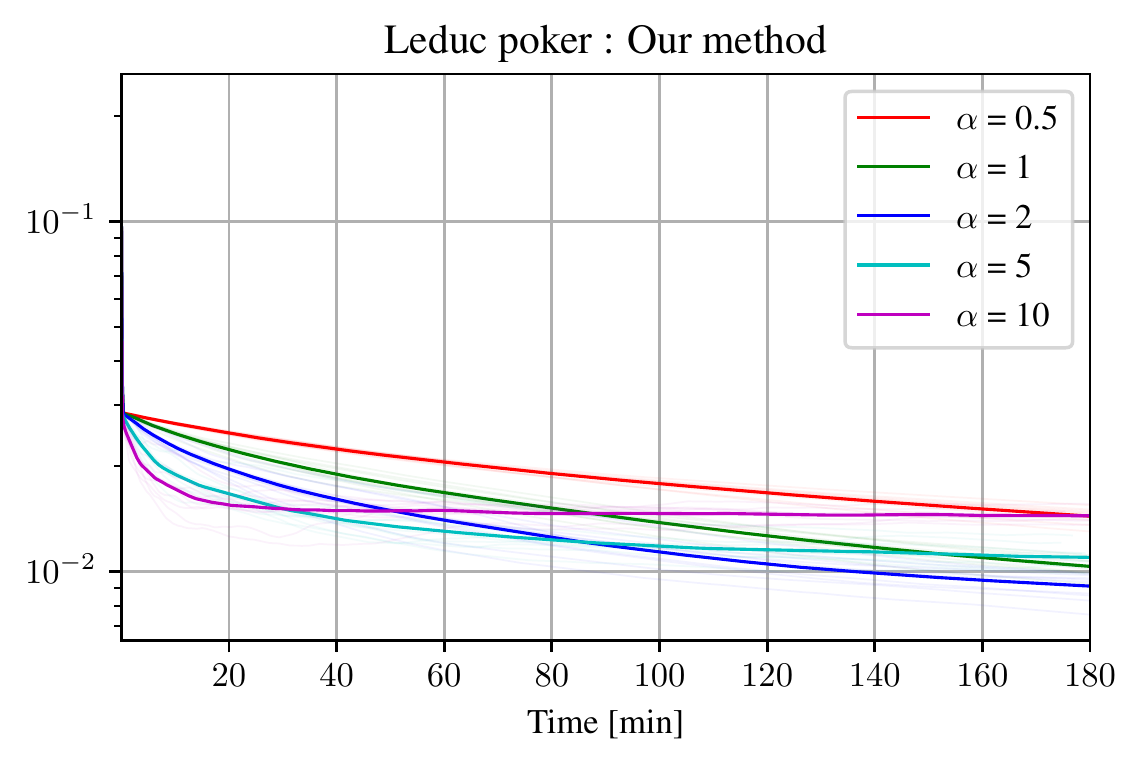}
        \vspace{0.5cm}
    \caption{Average regret over 10 runs of our algorithm using different step-size multipliers $\alpha$.}
    \label{fig:hyperparameters_bco}
\end{figure}

In AHR, for all choices of $\alpha$ we achieve very similar performances. For our method, the best choices of $\alpha$ seem to be $2$ and $5$.

\fi


\end{document}